%% file: main.tex
\documentclass{article}

\usepackage{microtype}
\usepackage{graphicx}
\usepackage{subcaption}
\usepackage{stmaryrd}
\usepackage{xurl}

\usepackage{dutchcal}

\usepackage{tikz}
\usetikzlibrary{positioning, arrows.meta, shadows, shapes.geometric, decorations.pathmorphing}
\usepackage[hidelinks,colorlinks=true,linkcolor=teal,citecolor=teal, hypertexnames=false]{hyperref}

\let\realaddcontentsline\addcontentsline 
\usepackage[accepted]{icml2026}

\usepackage{amsmath}
\usepackage{amssymb}
\usepackage{mathtools}
\usepackage{amsthm}
\usepackage{thmtools}
\usepackage{thm-restate}
\usepackage[inline]{enumitem}
\usepackage[capitalize,noabbrev]{cleveref}

\usepackage{etoc}
\usepackage[textsize=tiny]{todonotes}

\begin{document}

\input{macros}

\etocdepthtag.toc{mainmatter} 
\let\addcontentsline\realaddcontentsline

\twocolumn[
   \icmltitle{An Algebraic View of the Expressivity of Recurrent Language Models}
  \icmlsetsymbol{equal}{*}

  \begin{icmlauthorlist}
    \icmlauthor{Franz Nowak}{eth}
    \icmlauthor{Ryan Cotterell}{eth}
    \icmlauthor{Reda Boumasmoud}{eth}
  \end{icmlauthorlist}

  \icmlaffiliation{eth}{ETH Zürich}

  \icmlcorrespondingauthor{Franz Nowak}{franz.nowak@inf.ethz.ch}
  \icmlcorrespondingauthor{Ryan Cotterell}{ryan.cotterell@inf.ethz.ch}
  \icmlcorrespondingauthor{Reda Boumasmoud}{reda.boumasmoud@math.ethz.ch}

  \icmlkeywords{Machine Learning, ICML, RNN, SSM, Expressivity, Formal Languages, Algebra, Automata, Theory}

  \vskip 0.3in
]
\printAffiliationsAndNotice{}  

\begin{abstract}
What formal languages can a recurrent neural language model recognize? Formal results in the literature conflict: some authors report Turing-completeness, while others show equivalence to regular languages. 
The reason for this discrepancy is that the underlying arithmetic model differs. 
The paper develops a unified algebraic account of the expressivity of recurrent neural networks, starting with a formal account of various arithmetic models. 
This account reduces expressivity to an algebraic question, e.g., whether a network's syntactic monoid divides a certain wreath product. 
As a case study, the paper revisits diagonal state-space models: the same architecture cannot implement an even-modulus counter once floating-point recurrences are enforced, yet realizes every even-modulus counter under unsigned-integer quantization. 
\end{abstract}
\begin{figure*}[t]
    \centering
    \includegraphics[width=0.6\linewidth, trim = 0cm 1.4cm 0cm 0.8cm]{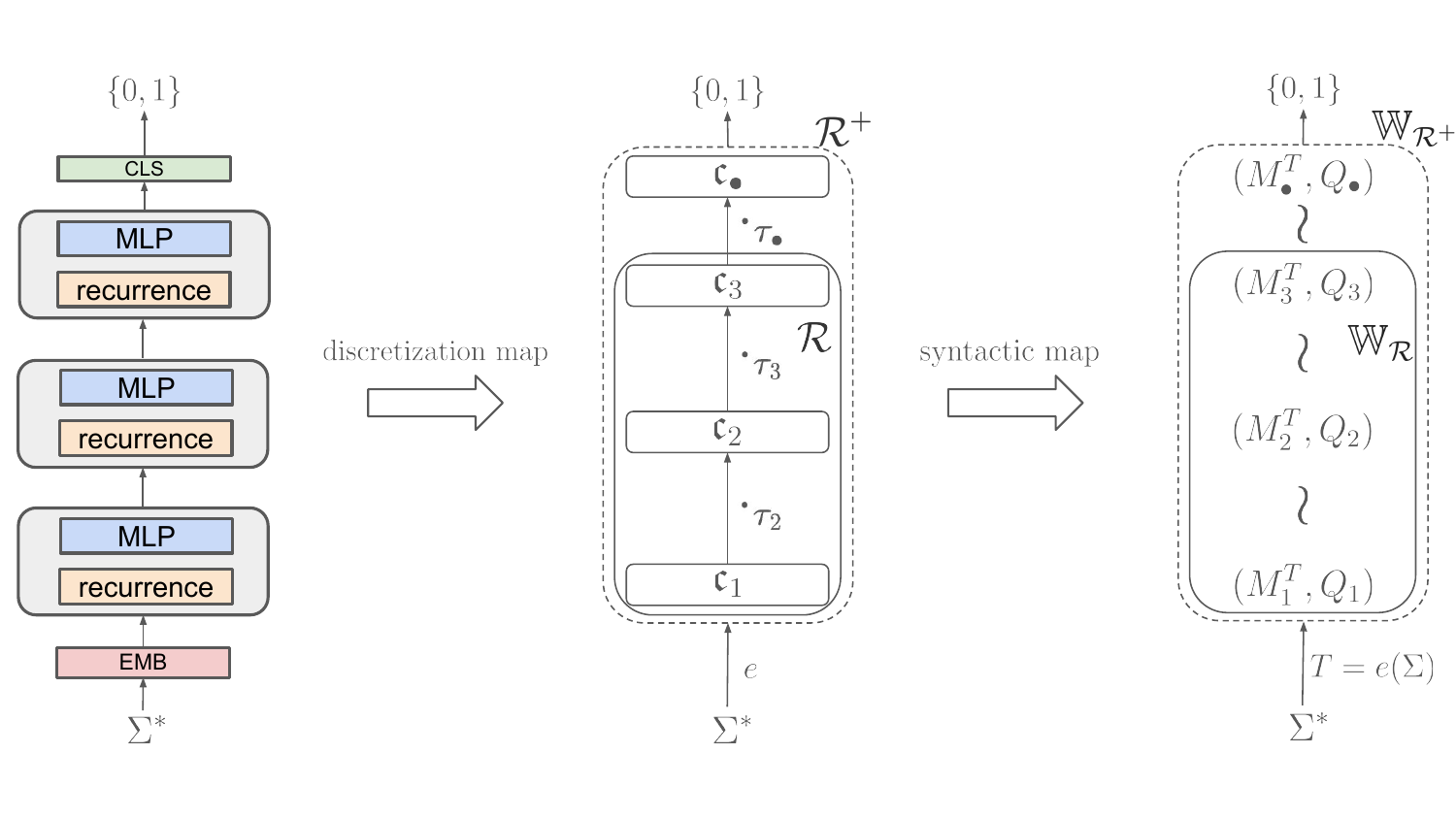}
    \caption{Abstraction of a deep RNN (left) to an algebraic RNN (center) to a wreath product of transformation monoids.}
    \label{fig:rnn_abstraction}
\end{figure*}

\section{Introduction}
The recent proliferation and widespread adoption of neural language models for a wide range of text-based tasks, e.g., question answering \citep{chen2021evaluatinglargelanguagemodels}, proving mathematical statements \citep{lewkowycz2022solving}, and general-purpose reasoning \citep{wei2023chain, openai2024gpt4technicalreport}, have motivated a formal study of their capabilities. 
A growing research program treats architectures such as recurrent neural networks (RNNs) and transformer language models as objects of classical computation theory \citep[\textit{inter alia}]{Siegelmann1992OnTC, weiss-etal-2018-practical, merrill-2019-sequential, hahn-2020-theoretical, bhattamishra-etal-2020-ability, perez-etal-2021-attention, chiang-cholak-2022-overcoming, strobl-etal-2024-formal}. They are studied as recognizers and generators of formal languages and classified by the computational resources they require. 
This line of work on RNN expressivity has led to tangible practical insights: For instance, \citet{sarrof2024expressive} showed that linear RNNs with nonnegative diagonal recurrences, such as Mamba \citep{gu2024mamba}, cannot recognize languages that require maintaining a modulo counter over arbitrary lengths. This finding motivated the invention of more expressive linear RNNs \citep{grazzi2025unlocking, karuvally2025bridging}.\looseness=-1

Formal inquiry into the expressive power of neural language models, however, has produced seemingly contradictory conclusions.
For example, it has been claimed that RNNs are Turing-complete \citep{Siegelmann1992OnTC}.
On the other hand, it has also been claimed that they can only simulate finite automata \citep{merrill-2019-sequential}. 
Both claims hold true, but, crucially, rely on different assumptions.
For instance, some theoretical work assumes exact arithmetic over $\R$ or $\Q$ \citep{Siegelmann1992OnTC, chen-etal-2018-recurrent, hahn-2020-theoretical, bhattamishra-etal-2020-ability, svete-etal-2024-lower}, some treat precision as an explicit resource (e.g., logarithmic growth in sequence length) \citep{merrill-sabharwal-2023-parallelism, karuvally2025bridging}, while others appeal to finite precision while leaving operational details of the arithmetic underspecified \citep{omlin1996constructing, weiss-etal-2018-practical, sarrof2024expressive, stogin2024provably}. 
To characterize an architecture’s expressivity rigorously, these details must be fixed: the algebraic structure in which computation occurs, the representation of values, and the semantics of rounding, overflow, and evaluation order. 
Many proofs of expressivity implicitly rely on these identities and therefore do not automatically transfer to floating-point implementations. 
A central claim of this paper is that the underlying arithmetic semantics is part of the object being analyzed. 
It is an open question which assumptions are most appropriate for neural language models actually deployed in practice.\looseness=-1

This paper brings together a variety of formal treatments of RNNs.
Using concrete specifications of the underlying arithmetic model, we establish a purely algebraic characterization of recurrent architectures, unifying expressivity questions within an abstract framework. 
We show that arithmetic models with a finite set of possible values induce a finite, architecture-dependent transition monoid. 
In deep recurrent models, layers compose hierarchically, so multiple layers implement a cascade of algebraic cores. Algebraically, this cascade is a submonoid of the wreath product of the layer monoids. 
Expressivity questions then reduce to rigorous statements about the resulting monoid and its structure, recovering or refining many results in the literature as straightforward consequences. 
If any of the underlying assumptions change, such as the numerical semantics or the possible recurrence parametrization of a core, only this part needs to be adjusted while the remaining abstractions stay the same, greatly simplifying analysis.

We demonstrate the utility of our approach through a case study of diagonal SSMs, showing that different arithmetic semantics (numerical domain, rounding, evaluation order, etc.) give different expressivity results for the \emph{same} idealized architecture. While nonnegative recurrence multipliers prevent diagonal SSMs from implementing counting in floating-point arithmetic, unsigned-integer quantization allows arbitrary even counters.

\section{Algebraic Preliminaries}\label{sec:algebraic_prelim}
We collect the algebraic notions on which the rest of the paper rests. These notions are closely related to automata theory; the interested reader is referred to \citet{pin2010mathematical}.

\subsection{Alphabets, Words, and Languages}

\begin{definition}
An \defn{alphabet} $\alphabet$ is a finite, non-empty set of \defn{symbols}. 
A finite sequence of symbols from $\alphabet$ forms a \defn{word}. 
We write $\alphabet^*$ for the set of all words over $\alphabet$, including the empty word $\emptyword$.
\end{definition}

\begin{definition}
A \defn{language} over $\alphabet$ is a subset $\lang \subseteq \alphabet^*$. 
Its \defn{characteristic function} is
$$
\charfun{\lang} \colon \alphabet^* \longrightarrow \{0,1\},
\qquad
\charfun{\lang}(\str) = \begin{cases} 1, & \str \in \lang, \\ 0, & \str \notin \lang. \end{cases}
$$
\end{definition}

We treat recurrent language models as \emph{recognizers}: their role is to compute $\charfun{\lang}$, not to define a distribution over $\alphabet^*$.

\subsection{Monoids}

\begin{definition}
A \defn{monoid} is a triple $(\monoid, \cdot, 1_\monoid)$ where $\monoid$ is a set, $\cdot$ is an associative binary operation on $\monoid$, and $1_\monoid \in \monoid$ satisfies $1_\monoid \cdot m = m \cdot 1_\monoid = m$ for all $m \in \monoid$. We usually omit the operation and the identity from the notation and write $\monoid$ for the monoid.
\end{definition}
The set $\alphabet^*$ is a monoid under concatenation with identity $\emptyword$. It is the free monoid on $\alphabet$. 

\begin{definition}
A \defn{monoid morphism} between monoids $\monoid$ and $N$ is a map $\phi \colon \monoid \to N$ satisfying $\phi(1_\monoid) = 1_N$ and $\phi(m_1 \cdot m_2) = \phi(m_1) \cdot \phi(m_2)$ for all $m_1, m_2 \in \monoid$.
\end{definition}

A monoid morphism with domain $\alphabet^*$ is uniquely determined by its values on letters: any set map $\phi \colon \alphabet \to N$ extends uniquely to a monoid morphism $\widehat{\phi}\colon\alphabet^* \to N$.

\subsection{Transformation Monoids}

\begin{definition}
For a set $\states$, the set $\states^\states$ of all functions $\states \to \states$ is a monoid under composition, with identity the identity function $\identity{\states}$. 
We call $\states^\states$ the \defn{full transformation monoid} of $\states$.
\end{definition}
\begin{remark}
    We generally use $\states$ to denote sets of states (from the German word for state, \emph{Zustand}).\looseness=-1
\end{remark}

\begin{definition}
A \defn{transformation monoid} is a pair $(\monoid, \states)$ consisting of a set $\states$ and a submonoid $\monoid \submon \states^\states$. Equivalently, $\monoid$ is a monoid equipped with a monoid morphism $\rho \colon \monoid \to \states^\states$, called the  right \defn{action}, with $\rho(1_\monoid) = \identity{\states}$. 
We write $\state \cdot m \defeq \rho(m)(\state)$ so that $\state \cdot (mn)=(\state\cdot m)\cdot n$ and identify $\monoid$ with its image $\rho(\monoid)$ when $\rho$ is injective.
\end{definition}

\subsection{Composition of Transformation Monoids}

We will repeatedly compose transformation monoids in two distinct ways: in \emph{parallel}, modeling components that operate independently, and in \emph{cascade}, modeling components arranged in a hierarchy where lower layers control the dynamics of upper ones.

\begin{definition}\label{def:parallel-comp}
Let $(\monoid_1, \states_1), \ldots, (\monoid_n, \states_n)$ be transformation monoids.
Their \defn{parallel composition} (or direct product) is the transformation monoid
$(\monoid_1 \times \cdots \times \monoid_n, \states_1 \times \cdots \times \states_n)$
with componentwise action
\begin{equation*}
    (\state_1, \ldots, \state_n) \cdot (m_1, \ldots, m_n) \defeq (\state_1 \cdot m_1, \ldots, \state_n \cdot m_n).
\end{equation*}
The identity is $(\identity{\states_1}, \ldots, \identity{\states_n})$.
\end{definition}

\begin{definition}\label{def:wreath}
Let $(\monoid_1, \states_1)$ and $(\monoid_2, \states_2 )$ be transformation monoids. 
The \defn{(left) wreath product} $(\monoid_1, \states_1) \wr (\monoid_2, \states_2)$ has underlying set $\monoid_1\times \monoid_2^{\states_1}$ and state space $\states_1 \times \states_2$. 
Multiplication and action are
\begin{align*}
    (m_1, \phi_1)  (m_2, \phi_2) & \defeq \bigl(m_1 m_2, \state_1 \mapsto \phi_1(\state_1)  \phi_2(\state_1 \cdot m_1) \bigr), \\
    (\state_1, \state_2) \cdot (m, \phi) & \defeq \bigl(\state_1 \cdot m, \state_2 \cdot \phi(\state_1) \bigr),
\end{align*}
with identity $(1_{\monoid_1}, \state_1 \mapsto 1_{\monoid_2})$. 
The wreath product of monoids is a monoid; of transformation monoids, a transformation monoid.
\end{definition}
The wreath product is the algebraic counterpart of cascade composition in automata theory: it captures the structure obtained when two automata are composed in series such that the first one controls the dynamics of the second.

\begin{remark}[Left convention]
We adopt the \emph{left} wreath product, where the upper-layer transition $\phi(\state)$ depends on the \emph{old} lower-layer state $\state$. 
This matches the feedforward dependency in deep recurrent architectures, where the input delivered to layer $\layer+1$ at time $\tstep$ is computed from the state of layer $\layer$ as it stands at the beginning of step $\tstep$, before layer $\layer$ updates on the symbol at time $\tstep$. 
The right wreath convention (more common in classical algebraic automata theory) is obtained by reversing operand order, but the algebraic content is the same.
\end{remark}

\begin{remark}[Associativity up to canonical isomorphism]\label{rem:wreath-associativity}
The wreath product of transformation monoids is associative up to canonical isomorphism, i.e., the state spaces $(\states_1 \times \states_2) \times \states_3$ and $\states_1 \times (\states_2 \times \states_3)$ of the two bracketings of $(\monoid_1, \states_1) \wr (\monoid_2, \states_2) \wr (\monoid_3, \states_3)$ differ, and the underlying monoids $(\monoid_1 \times \monoid_2^{\states_1}) \times \monoid_3^{\states_1 \times \states_2}$ and $\monoid_1 \times (\monoid_2 \times \monoid_3^{\states_2})^{\states_1}$ are identified by re-association and the Curry isomorphism $\monoid_3^{\states_1 \times \states_2} \cong (\monoid_3^{\states_2})^{\states_1}$. 
We henceforth write iterated wreath products without parentheses, equality between such expressions denoting the canonical isomorphism.
\end{remark}

\subsection{Division of Monoids}

The right notion of ``$\monoid_1$ is structurally simpler than $\monoid_2$'' is not inclusion but \emph{division}.

\begin{definition}\label{def:division}
A monoid $\monoid_1$ \defn{divides} a monoid $\monoid_2$, written $\monoid_1 \divides \monoid_2$, if $\monoid_1$ is a quotient of a submonoid of $\monoid_2$: there exist a submonoid $N \submon \monoid_2$ and a surjective monoid morphism $\pi \colon N \twoheadrightarrow \monoid_1$. 
Division of transformation monoids is defined analogously, requiring $\pi$ to be compatible with the actions on the respective state spaces.
\end{definition}

Division induces a preorder on monoids. It is the relation that survives the abstraction from concrete dynamics to syntactic monoids: the syntactic monoid of any language recognized by a transducer divides its transition monoid, as recorded in \cref{prop:syntactic-divisibility}.

\subsection{Recognition of Languages by Monoids}

The link between formal languages and monoids is given by \citeposs{schutzenberger1955theorie} recognition theorem.
Our presentation is based on \citet{eilenberg1976}.

\begin{definition}
A monoid $\monoid$ \defn{recognizes} a language $\lang \subseteq \alphabet^*$ if there exist a monoid morphism $\eta \colon \alphabet^* \to \monoid$ \textbf{and} a subset $P \subseteq \monoid$ such that $\lang = \eta^{-1}(P)$.
\end{definition}

\begin{theorem}\label{thm:syntactic-monoid}
For every language $\lang \subseteq \alphabet^*$, there exists a unique-up-to-isomorphism monoid $\synmon{\lang}$, the \defn{syntactic monoid} of $\lang$, characterized by the universal property that for any monoid $\monoid$ and morphism $\eta\colon\alphabet^* \to \monoid$ recognizing $\lang$, the morphism $\eta$ factors through a surjective morphism onto $\synmon{\lang}$.
Equivalently, $\synmon{\lang}$ is the quotient of $\alphabet^*$ by the \defn{syntactic congruence}
$$
u \syncong{\lang} v \iff \forall x, y \in \alphabet^*\colon  xuy \in \lang \Leftrightarrow xvy \in \lang.
$$
A language $\lang$ is \defn{regular} if and only if $\synmon{\lang}$ is finite.
\end{theorem}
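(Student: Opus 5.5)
We prove the three parts of \cref{thm:syntactic-monoid} in order: first that the syntactic congruence is a congruence that recognizes $\lang$, then the universal property, then the finiteness characterization of regularity.

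\textbf{Step 1: the quotient recognizes $\lang$.} Show that $\syncong{\lang}$ is an equivalence relation compatible with concatenation. Suppose $u \syncong{\lang} v$ and $u' \syncong{\lang} v'$. For any contexts $x, y$, apply the definition twice, changing one factor at a time:
\[
xuu'y \in \lang \iff xvu'y \in \lang \iff xvv'y \in \lang .
\]
The first equivalence uses the context $(x, u'y)$ and the second uses $(xv, y)$. Hence $\alphabet^*/\!\syncong{\lang}$ is a monoid, and the projection $\pi_\lang$ is a surjective monoid morphism. Taking $x = y = \emptyword$ shows that $\lang$ is a union of classes, so $\lang = \pi_\lang^{-1}(\pi_\lang(\lang))$. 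We define $\synmon{\lang}$ to be this quotient.

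\textbf{Step 2: the universal property.} Let $\eta \colon \alphabet^* \to \monoid$ recognize $\lang$ with $\lang = \eta^{-1}(P)$. This is the step needing the most care. The claim is that $\ker \eta \subseteq \syncong{\lang}$. If $\eta(u) = \eta(v)$, then for all $x, y$ we have $\eta(xuy) = \eta(x)\eta(u)\eta(y) = \eta(xvy)$. So $xuy \in \lang$ iff $\eta(xuy) \in P$ iff $xvy \in \lang$.

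We then factor through the image. Put $N = \eta(\alphabet^*) \submon \monoid$ and define $\psi \colon N \to \synmon{\lang}$ by $\psi(\eta(u)) = \pi_\lang(u)$. This is well defined by the kernel inclusion, and it is a surjective morphism with $\psi \circ \eta = \pi_\lang$. So ``factors through a surjection onto $\synmon{\lang}$'' is read as this division-style statement: $\synmon{\lang} \divides \monoid$ via $N$.

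Uniqueness up to isomorphism is the standard argument: any other $(S, \sigma)$ with the same property is itself recognized by $\pi_\lang$, and the induced surjections $S \to \synmon{\lang}$ and $\synmon{\lang} \to S$ compose to maps commuting with the surjective projections from $\alphabet^*$. They are therefore identities, so both surjections are mutually inverse isomorphisms.

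\textbf{Step 3: regular iff $\synmon{\lang}$ is finite.} We take ``regular'' to mean accepted by a DFA $(\states, \delta, q_0, F)$.

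For the forward direction, the transition monoid of the DFA is a submonoid of the finite monoid $\states^\states$. The map $\eta(w) = \delta(\cdot, w)$ recognizes $\lang$ with $P = \{f : f(q_0) \in F\}$. By Step 2, $\synmon{\lang}$ is a quotient of a submonoid of $\states^\states$ and hence finite.

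For the converse, let $\synmon{\lang}$ be finite. Build the DFA with states $\synmon{\lang}$, initial state $1$, transitions $m \mapsto m\,\pi_\lang(a)$ on letter $a$, and accepting set $\pi_\lang(\lang)$. By Step 1 this automaton accepts exactly $\lang$.

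The whole argument consists of routine verifications. The only delicate point is reading the universal property correctly: the factorization goes through the image $\eta(\alphabet^*)$, not all of $\monoid$, which is exactly what division in \cref{def:division} allows.
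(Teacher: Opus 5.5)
Your argument is correct. The paper gives no proof of this theorem; it quotes it from Sch\"utzenberger and Eilenberg, and your proof is the standard one from those sources. The congruence and saturation check, the kernel inclusion that realizes $\synmon{\lang}$ as a quotient of $\eta(\alphabet^*)$, and the transition-monoid and quotient-automaton arguments for finiteness all go through as written.

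The one point to tighten is uniqueness. The bare factorization property is also satisfied by the trivial monoid, so the competitor must be assumed to be a surjective morphism $\sigma\colon\alphabet^*\to S$ that itself recognizes $\lang$. That is what you actually use when you apply Step~2 to $\sigma$ and cancel the surjections, and your phrase ``is itself recognized by $\pi_\lang$'' should be rephrased to say this.
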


The syntactic monoid is the smallest algebraic object that captures everything a monoid morphism out of $\alphabet^*$ can know about $\lang$. 
Expressivity statements of the form \emph{architecture $\acceptor$ recognizes language $\lang$} will, in this paper, factor through statements of the form \emph{$\synmon{\lang}$ divides the transition monoid of $\acceptor$}.\looseness=-1

\subsection{Transducers and Automata}

A useful intermediate object between a monoid morphism and a finite-state recognizer is the transducer: a state machine that processes a word symbol by symbol, updating an internal state and emitting an output at each step. 
We use it throughout the paper to model individual layers of a recurrent architecture.

\begin{definition}
A \defn{transducer} is a tuple $\transducer = (\states, \initstate, \inset, \outset, \delta, \gamma)$ where $\states$ is a state set with initial state $\initstate \in \states$, $\inset$ is an input set, $\outset$ is an output set, $\delta \colon \states \times \inset \to \states$ is the \defn{transition map}, and $\gamma \colon \states \times \inset \to \outset$ is the \defn{output map}. 
Both maps are total.
\end{definition}

The transition map $\delta$ extends inductively to a map $\extdelta \colon \states \times \inset^* \to \states$ on words:
$$
\extdelta(\state, \emptyword) = \state, \qquad \extdelta(\state, wx) = \delta(\extdelta(\state, w), x).
$$
The transducer above is in the \emph{Mealy} form: the output $\gamma(\state, x)$ depends on both the state and the current letter. When $\outset = \{0, 1\}$, the transducer is an \defn{automaton}; the language it recognizes consists of non-empty words $w = w'x \in \inset^+$, where $w' \in \inset^*$ and $x \in \inset$, satisfying $\gamma(\extdelta(\initstate, w'), x) = 1$. 
Whether the empty word $\emptyword$ is accepted requires an additional convention. 
We resolve this asymmetry in \cref{sec:rnn-acceptors} by adopting a state-only readout for acceptance, where acceptance is a predicate on states, and the empty word's status is determined by the initial state alone.
For the Moore variant (output independent of input, $\gamma(\state, x) = \gamma(\state)$), the more familiar formula
$$\Lang{\transducer} = \{ w \in \inset^* : \extdelta(\initstate, w) \in \accreg\}$$
with $\accreg = \gamma^{-1}(1) \subseteq \states$ applies, and the empty word is accepted iff $\initstate \in \accreg$.

\subsection{Transition Monoid of a Transducer}

\begin{definition}\label{def:induced-trans}
Let $\transducer$ be a transducer. For each $x \in \inset$, the \defn{induced transition} of $\transducer$ is
$
\delta_x \colon \states \to \states, \, \state \mapsto \delta(\state, x).
$
The \defn{transition monoid} of $\transducer$ is
$$
\transmon_\transducer \defeq \genmon{\delta_x \mid x \in \inset}\submon \states^\states,
$$
the submonoid of $\states^\states$ generated by the induced transitions, together with $\identity{\states}$.
\end{definition}

The transition monoid is the algebraic object that records all state transformations obtainable by feeding finite input words to $\transducer$, including the empty word. 

\begin{proposition}\label{prop:syntactic-divisibility}
If $\transducer$ is an automaton recognizing $\lang \subseteq \inset^*$, then the syntactic monoid $\synmon{\lang}$ divides $\transmon_\transducer$:
$$
\synmon{\lang} \divides \transmon_\transducer.
$$
\end{proposition}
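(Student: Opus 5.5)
The plan is to construct an explicit monoid morphism $\eta\colon \inset^* \to \transmon_\transducer$ that recognizes $\lang$, and then invoke the universal property in \cref{thm:syntactic-monoid}. I will work with the state-only (Moore) readout announced for \cref{sec:rnn-acceptors}, so that $\lang = \{w : \extdelta(\initstate, w) \in \accreg\}$ with $\accreg = \gamma^{-1}(1)$.

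\textbf{Step 1 (the morphism).} Send each letter $x \in \inset$ to $\delta_x \in \states^\states$ and let $\eta = \widehat{\phi}$ be the unique extension to $\inset^*$, as in the remark after the definition of monoid morphisms. The multiplication must match the paper's right-action convention $\state \cdot (mn) = (\state \cdot m)\cdot n$, so composition is read left to right. By induction on $|w|$, using $\extdelta(\state, wx) = \delta(\extdelta(\state, w), x)$, this gives $\state \cdot \eta(w) = \extdelta(\state, w)$. Since $\transmon_\transducer$ is generated by the $\delta_x$ together with $\identity{\states}$, the map $\eta$ is surjective onto $\transmon_\transducer$.

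\textbf{Step 2 (recognition).} Put $P = \{ m \in \transmon_\transducer : \initstate \cdot m \in \accreg\}$. Then
$$\eta^{-1}(P) = \{w : \extdelta(\initstate, w) \in \accreg\} = \lang.$$
This includes the empty word, which is handled correctly because $\eta(\emptyword) = \identity{\states}$. So $\transmon_\transducer$ recognizes $\lang$ via $\eta$.

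\textbf{Step 3 (division).} By \cref{thm:syntactic-monoid}, $\eta$ factors through a surjection onto $\synmon{\lang}$. I will make the factoring map concrete. If $\eta(u) = \eta(v)$, then for all $x, y \in \inset^*$ we have $\eta(xuy) = \eta(xvy)$, hence $xuy \in \lang \iff xvy \in \lang$, i.e.\ $u \syncong{\lang} v$. Therefore the assignment $\pi(\eta(w)) \defeq [w]_{\syncong{\lang}}$ is well defined. It is a morphism because both $\eta$ and the quotient map are morphisms, and it is surjective. Taking $N = \transmon_\transducer$ itself in \cref{def:division} gives $\synmon{\lang} \divides \transmon_\transducer$; in fact $\synmon{\lang}$ is a quotient of $\transmon_\transducer$.

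\textbf{Main obstacle.} The only delicate point is the acceptance convention, and it is not cosmetic: with a genuinely Mealy readout the statement can fail. Take a one-state automaton with $\gamma(\state, a) = 1$ and $\gamma(\state, b) = 0$. It recognizes the words ending in $a$, whose syntactic monoid is nontrivial, while its transition monoid is trivial. I would therefore state explicitly that the proposition is proved for the state-based readout fixed in \cref{sec:rnn-acceptors}, which the paper adopts precisely to resolve this asymmetry. If a Mealy version is wanted, I would first convert to an equivalent Moore automaton on $\states \times (\inset \cup \{\bot\})$ that records the last letter, and conclude divisibility by that machine's transition monoid instead.
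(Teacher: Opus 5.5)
Your proof is correct and takes the same route the paper relies on. The paper gives no standalone proof of this proposition, but in the algebraic-content subsection it lifts the encoder to the morphism $\Phimon_\acceptor$, takes $P_\accreg$ to be the elements sending $\initstate$ into the accepting region, writes the language as $\Phimon_\acceptor^{-1}(P_\accreg)$, and concludes division, exactly as in your Steps 1--3.

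Your caveat is also sound. Under the Mealy acceptance convention with which the paper literally defines an automaton, your one-state example (trivial transition monoid, nontrivial syntactic monoid for the words ending in $a$) refutes the statement. So the state-only readout is a genuine hypothesis, which is why the paper later moves acceptance into a Moore-style accepting core.
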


This is the connection between the dynamical view (monoid of transformations of states) and the language view (syntactic monoid) that the rest of the paper exploits.\looseness=-1

\section{Algebraic RNNs}\label{sec:algebraic-rnns}

Recurrent language models, in their analytic form, are typically defined in terms of real-valued tensors, smooth nonlinearities, and parameter matrices. 
The paper strips this down to the structure that \emph{survives} discretization: a typed family of total maps and their iterated composition, specified combinatorially before any choice of arithmetic semantics. 
The arithmetic then enters as a parameter, fixing the concrete sets and transition functions each algebraic core realizes.\looseness=-1

\subsection{Algebraic Cores}\label{sec:algebraic-cores}

\begin{restatable}{definition}{rnncoredef}\label{def:algebraic-core}
An \defn{algebraic core} is a tuple $\core = \coretuple$ where $\states$, $\inset$, and $\outset$ are sets, and
\begin{align*}
\recfn &\colon \states \times \inset \to \states, &
\outfn &\colon \states \times \inset \to \outset
\end{align*}
are total maps, called respectively the \defn{transition} and \defn{readout} functions. 
\end{restatable}
See \cref{app:rnn_types} for the transition and readout functions in common RNN architectures.

\paragraph{Mealy, not Moore.}
When $\states$, $\inset$, $\outset$ are finite, $\core$ is precisely a Mealy machine \citep{mealy1955}, the standard finite-state transducer of classical automata theory \citep{eilenberg1976}. 
The fact that $\outfn$ depends on \emph{both} the state \emph{and} the current letter--not on the state alone--distinguishes Mealy machines from their Moore counterparts \citep{moore1956}. 
We retain this Mealy-style readout (depending on both state and input) because it matches deep RNNs with residual connections.

\begin{definition}\label{def:transition-monoid}
Let $\core$ be an algebraic core. For each $\insym \in \inset$, the \defn{induced transition} is
$
\recfn_\insym \colon \states \to \states, \, \state \mapsto \recfn(\state, \insym).
$
The \defn{transition monoid} of $\core$ is the submonoid of $\states^\states$ generated by these maps,
$$
\monoid_\core \defeq \genmon{\recfn_\insym \mid \insym \in \inset} \submon \states^\states,
$$
and the pair $(\monoid_\core, \states)$ is the associated \defn{transformation monoid}.
\end{definition}

\begin{remark}[Dynamics versus observability]\label{rem:dyn-vs-obs}
The readout $\outfn$ contributes nothing to $\monoid_\core$. 
The monoid encodes the \emph{dynamics} reachable on $\states$; the readout encodes \emph{observability}. 
This separation will matter: the same dynamical core can support distinct readouts, and a given language is recognized by a \emph{quotient} of the dynamics determined by its observable consequences. 
We return to this point in \cref{sec:rnn-acceptors}.
\end{remark}

\subsection{Algebraic Semi-RNNs}\label{sec:algebraic-rnns-def}

\begin{definition}\label{def:algebraic-rnn}
An \defn{algebraic semi-RNN} of depth $\nlayers$ is a tuple
$$\rnn = \bigl(\core_1,  (\core_2, \wiremap_2),  \ldots,  (\core_\nlayers, \wiremap_{\nlayers})  \bigr),$$
where each $\core_\layer = \corelayertuple$ is an algebraic core, and each $\wiremap_{\layer} \colon \inset_{\layer-1} \times \outset_{\layer-1} \to \inset_\layer$ for $2 \leq \layer \leq \nlayers$ is a \defn{wiring map} from $\core_{\layer-1}$ to $\core_\layer$. 
The first core has no incoming wiring.  
The \defn{global state space} of $\rnn$ is
$$ \states_\rnn \defeq \states_1 \times \cdots \times \states_\nlayers. $$
The \defn{input type} of $\rnn$ is the set $\insettype(\rnn) \defeq \inset_1$, and the \defn{output type} of $\rnn$ is the pair $\outsettype(\rnn) \defeq (\inset_\nlayers, \outset_\nlayers)$. 
\end{definition}
The asymmetry between the two reflects the type signature of wiring maps, which consume a pair $\inset \times \outset$ to produce an input. 
The pair $(\insettype(\rnn), \outsettype(\rnn))$ records precisely the external interface of $\rnn$. 

\begin{remark}[Semi-RNN] An algebraic semi-RNN is, on its own, neither an automaton nor a generator: it is best viewed as a transducer without initial state.
We call it a semi-RNN in analogy to semiautomata.
The further pieces of structure needed to turn it into a full automaton (an initial state, an encoder, an accepting core) are deferred to \cref{sec:rnn-acceptors}.
Before adding them, we identify the algebraic object inside which $\rnn$ already lives.
\end{remark}

\subsection{Cascade Juxtaposition}\label{sec:juxtaposition}

The construction of an algebraic semi-RNN by appending a new core to a cascade, such as an accepting core, an initialization core, or a post-processing core, appears repeatedly throughout this paper.
We give the operation a formal name and observe that, with input and output types separated, it becomes the associative composition of a typed category.

\paragraph{Wirings as type morphisms.}
Given an algebraic semi-RNN $\rnn$ with output type $\outsettype(\rnn) = (\inset_\nlayers, \outset_\nlayers)$ and an algebraic semi-RNN $\rnn'$ with input type $\insettype(\rnn') = \inset_1'$, a \defn{wiring map} from $\rnn$ to $\rnn'$ is a function
$$ \wiremap \colon \outsettype(\rnn) \;\longrightarrow\; \insettype(\rnn'), $$
i.e., a map $\inset_\nlayers \times \outset_\nlayers \to \inset_1'$.
The output type of $\rnn$ and the input type of $\rnn'$ thus carry all the information needed to specify $\wiremap$'s domain and codomain.

\begin{definition}\label{def:juxtaposition}
Let $\rnn$ and $\rnn'$ be algebraic semi-RNNs of depths $\nlayers$ and $\nlayers'$, and let $\wiremap \colon \outsettype(\rnn) \to \insettype(\rnn')$ be a wiring map between them.
The \defn{cascade juxtaposition} of $\rnn$ and $\rnn'$ along $\wiremap$ is the algebraic semi-RNN of depth $\nlayers + \nlayers'$ obtained by concatenating their sequences of pairs: 
$$\rnn \cascadewith{\wiremap} \rnn' \defeq (\core_1,  \dots,  (\core_\nlayers, \wiremap_{\nlayers}),  (\core_1', \wiremap),  \dots,  (\core_{\nlayers'}', \wiremap_{\nlayers'}')  ) $$
where the wiring of the first core of $\rnn'$ is taken to be $\wiremap$.
The juxtaposition has input type $\insettype(\rnn)$ and output type $\outsettype(\rnn')$:
$$ \insettype(\rnn \cascadewith{\wiremap} \rnn') = \insettype(\rnn), \qquad \outsettype(\rnn \cascadewith{\wiremap} \rnn') = \outsettype(\rnn'). $$ 
The special case where $\rnn'$ is the depth-one RNN with a single pair $(\core', \wiremap)$ recovers the operation of appending a single core to a cascade.
\end{definition}

\begin{remark}\label{rem:typed-associativity}
Cascade juxtaposition is associative once input and output types are tracked. 
Given three algebraic semi-RNNs $\rnn_1, \rnn_2, \rnn_3$ and wiring maps $\wiremap_{12} \colon \outsettype(\rnn_1) \to \insettype(\rnn_2)$, $\wiremap_{23} \colon \outsettype(\rnn_2) \to \insettype(\rnn_3)$, the type of $\wiremap_{23}$ depends only on $\outsettype(\rnn_2)$ and $\insettype(\rnn_3)$, both unchanged by either parenthesization. 
The two expressions $(\rnn_1 \cascadewith{\wiremap_{12}} \rnn_2) \cascadewith{\wiremap_{23}} \rnn_3$ and $\rnn_1 \cascadewith{\wiremap_{12}} (\rnn_2 \cascadewith{\wiremap_{23}} \rnn_3)$ produce the same sequence of pairs $(\core_\layer, \wiremap_{\layer})$ and hence, by \cref{def:algebraic-rnn}, denote the same algebraic semi-RNN.
Iterated juxtaposition $\core_1 \cascadewith{\wiremap_2} \core_2 \cascadewith{\wiremap_3} \cdots \cascadewith{\wiremap_\nlayers} \core_\nlayers$ is therefore well-defined once the typed sequence of cores and wirings is fixed.
\end{remark}

\begin{remark}[General wirings]\label{rem:general-wirings}
The wiring maps of \cref{def:algebraic-rnn} are \emph{local}: each $\wiremap_\layer$ depends only on the input and output of the immediately preceding core $\core_{\layer-1}$. This convention is sufficient to model the standard recurrent architectures. 

The framework extends, however, to \emph{general wirings}
$$ 
\wiremap_\layer \colon \inset_{\leq \layer-1} \times \outset_{\leq \layer-1} \to \inset_\layer, 
$$
where $\inset_{\leq \layer-1} \defeq \inset_1 \times \cdots \times \inset_{\layer-1}$ and similarly for $\outset_{\leq \layer-1}$. 
Such wirings give layer $\layer$ direct access to the inputs and outputs of \emph{all} preceding layers, modeling arbitrary skip connections, dense connections, and residual shortcuts that bypass intermediate layers.

The transition to general wirings affects only one definition in the body of the paper. 
The layer-input dependency map (\cref{def:dep-maps}) acquires the corresponding extended recursion: writing 
$x_\layer = \varphi_\layer^T(\state_{<\layer}, t)$ and $y_\layer = \outfn_\layer(\recfn_\layer(\state_\layer, x_\layer), x_\layer)$,
$$ \varphi_{\layer+1}^T(\state_{<\layer+1}, t) \defeq \wiremap_{\layer+1}\bigl( (x_1, \ldots, x_\layer), (y_1, \ldots, y_\layer) \bigr).$$
Every other algebraic statement of the paper passes through verbatim. 
The factorization lemma (\cref{lem:factorization}) still holds: the induction on depth uses only that the input to layer $\layer+1$ is determined by $(\state_{<\layer+1}, t)$, which the general wiring respects. The cascade juxtaposition identity (\cref{prop:juxtaposition-wreath}) remains valid: the projection $\projlayer{\leq\nlayers}$ is still surjective onto the dynamics of the lower segment. 
The syntactic divisibility chain leading to $\synmon{\Lang{\acceptor}} \divides \wreathmon_{\augrnn}^{\enc(\alphabet)}$ holds as before.

The price of generality is notational. Under general wirings, the output type of $\rnn$ inflates from the pair $(\inset_\nlayers, \outset_\nlayers)$ to the full history $(\inset_{\leq \nlayers}, \outset_{\leq \nlayers})$. 
Cascade juxtaposition $\rnn \cascadewith{\wiremap} \rnn'$ then takes a wiring $\wiremap \colon \inset_{\leq \nlayers} \times \outset_{\leq \nlayers} \to \inset_1'$, and the typed associativity of \cref{rem:typed-associativity} holds modulo this enriched type accounting.

We adopt the local form throughout for notational simplicity, with the understanding that none of the algebraic results of the paper are specific to it.
\end{remark}

\subsection{The Ambient Wreath Product}\label{sec:ambient}

Write $\monoid_\layer \defeq \monoid_{\core_\layer}$ for the transition monoid of the $\layer^{\text{th}}$ core, and $(\monoid_\layer, \states_\layer)$ for its transformation monoid.

\begin{definition}\label{def:ambient-wreath}
The \defn{ambient wreath product} of $\rnn$ is the iterated transformation monoid
$$
\wreath_1 \defeq (\monoid_1, \states_1), \qquad \wreath_{\layer+1} \defeq \wreath_\layer \wr (\monoid_{\layer+1}, \states_{\layer+1}).
$$
We write $\wreath_\rnn \defeq \wreath_\nlayers = (\wreathmon_\rnn, \states_\rnn)$.
By construction, $(\wreathmon_\rnn, \states_\rnn) \submon (\states_\rnn^{\states_\rnn}, \states_\rnn)$ is a transformation monoid on the global state space.
\end{definition}

The ambient wreath product of $\rnn$ captures all cores and their hierarchical dependency, but is blind to the wiring maps $\wiremap_\layer$. 
Every layer $\layer$ may, in $\wreath_\rnn$, be driven by any input in $\inset_\layer$, regardless of which inputs the wiring actually activates. 
The realized dynamics live in a strictly smaller sub-object.

\subsection{Realized Dynamics: a Parameterized Construction}\label{sec:realized}

We now refine $\wreath_\rnn$ to a sub-transformation-monoid that respects the wiring. 
We parameterize the construction by an arbitrary set $T \subseteq \inset_1$ of admissible inputs to the first core, prove a single factorization lemma at this level of generality, and recover the two cases of interest (the unconstrained semi-RNN and the algebraic RNN as an automaton) as specializations.

\begin{definition}\label{def:dep-maps}
Let $\rnn$ be an algebraic semi-RNN of depth $\nlayers$ and let $T \subseteq \inset_1$. 
The \defn{layer-input dependency maps} of $\rnn$ relative to $T$, denoted 
$\varphi_\layer^T$, are defined as follows
$$
\varphi_\layer^T \colon \states_{<\layer} \times T \to \inset_\layer, \quad \states_{<\layer} \defeq \states_1 \times \cdots \times \states_{\layer-1}
$$
(with $\states_{<1}$ a singleton, so that $\varphi_1^T$ is essentially the map $T \to \inset_1,  t \mapsto t$) are defined inductively:
For $1 < \layer \leq \nlayers$, writing $x_\layer = \varphi_\layer^T(\state_{<\layer}, t)$,
\begin{align*}
\varphi_{\layer+1}^T(\state_{<\layer+1}, t) \defeq \wiremap_{\layer+1}\Bigl( x_\layer, \outfn_\layer\bigl(\recfn_\layer(\state_\layer, x_\layer), x_\layer\bigr) \Bigr).
\end{align*}
The \defn{reachable input set}\footnote{Sometimes called the \emph{effective} or \emph{realized} input set in the literature; we use reachable to emphasize that this set is the image of the layer-input dependency map starting from $T$.} at layer $\layer$ is $\inset_\layer^T \defeq \mathrm{Im}(\varphi_\layer^T) \subseteq \inset_\layer$.\looseness=-1
\end{definition}

The set $\inset_\layer^T$ is the set of inputs that the wiring can actually deliver to layer $\layer$ when the first core is driven by letters from $T$.

\begin{definition}\label{def:realized-layer-monoid}
The \defn{realized transition monoid} of layer $\layer$ relative to $T$ is
$$ 
\monoid_\layer^T \defeq \genmon{(\recfn_\layer)_x \mid x \in \inset_\layer^T} \submon \monoid_\layer.
$$
\end{definition}

\begin{definition}\label{def:realized-wreath}
The \defn{realized wreath product} of $\rnn$ relative to $T$ is
$$
\wreath_\rnn^T \defeq (\monoid_1^T, \states_1) \wr \cdots \wr (\monoid_\nlayers^T, \states_\nlayers) \le  \wreath_\rnn,
$$
with underlying monoid $\wreathmon_\rnn^T$.
\end{definition}

\begin{restatable}[Factorization]{lemma}{factorization}\label{lem:factorization}
Let $T \subseteq \inset_1$. For each $t \in T$, the global state update
$$
(F_\rnn)_t \colon \states_\rnn \to \states_\rnn
$$
induced by feeding $t$ into the first core is an element of $\wreathmon_\rnn^T$.
Consequently, the global transition monoid driven by $T$,
$$
\monoid_\rnn^T \defeq \genmon{(F_\rnn)_t \mid t \in T},
$$
satisfies
$$
\monoid_\rnn^T \submon \wreathmon_\rnn^T \submon \wreathmon_\rnn.
$$
\end{restatable}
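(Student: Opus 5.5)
The proof goes by induction on the depth $\nlayers$, using the recursive definition $\wreath_{\layer+1} = \wreath_\layer \wr (\monoid_{\layer+1}, \states_{\layer+1})$ and the layer-input dependency maps of \cref{def:dep-maps}. First make the global update explicit. For $t \in T$ and $\state = (\state_1, \ldots, \state_\nlayers) \in \states_\rnn$, set $x_\layer \defeq \varphi_\layer^T(\state_{<\layer}, t)$. Then
$$
(F_\rnn)_t(\state) = \bigl(\recfn_1(\state_1, x_1), \ldots, \recfn_\nlayers(\state_\nlayers, x_\nlayers)\bigr).
$$
This formula is the semantics of feeding $t$ into the first core. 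Each layer updates on the input delivered by the wiring, and that input is computed from the \emph{old} states of the lower layers together with their emitted outputs $\outfn_\layer(\recfn_\layer(\state_\layer, x_\layer), x_\layer)$. The key observation is that $x_\layer$ depends only on $(\state_{<\layer}, t)$, never on $\state_\layer$ or higher layers. This is exactly the left-convention dependency built into the wreath product.

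For depth $1$ the update is $(F_\rnn)_t = (\recfn_1)_t$, which lies in $\monoid_1^T$ since $t \in \inset_1^T$. For the inductive step, let $\rnn_{\le \layer}$ be the truncation to the first $\layer$ cores, with the same $T$, the same dependency maps and the same realized monoids. Assume $G \defeq (F_{\rnn_{\le \layer}})_t \in \wreathmon_{\rnn_{\le\layer}}^T$. Define
$$
\phi \colon \states_{<\layer+1} \to \monoid_{\layer+1}^T, \qquad \phi(\state_{<\layer+1}) \defeq (\recfn_{\layer+1})_{\varphi_{\layer+1}^T(\state_{<\layer+1}, t)}.
$$
This is well typed because $\varphi_{\layer+1}^T(\state_{<\layer+1}, t) \in \inset_{\layer+1}^T$ by definition of the reachable input set, so each such generator lies in $\monoid_{\layer+1}^T$. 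By the action formula of \cref{def:wreath},
$$
(\state_{<\layer+1}, \state_{\layer+1}) \cdot (G, \phi) = \bigl(G(\state_{<\layer+1}), \recfn_{\layer+1}(\state_{\layer+1}, x_{\layer+1})\bigr),
$$
which agrees with $(F_{\rnn_{\le\layer+1}})_t$. Hence $(F_{\rnn_{\le\layer+1}})_t \in \wreathmon_{\rnn_{\le\layer}}^T \times (\monoid_{\layer+1}^T)^{\states_{<\layer+1}}$, which is the underlying set of $\wreathmon_{\rnn_{\le\layer+1}}^T$. \Cref{rem:wreath-associativity} justifies treating the iterated wreath product with the left-nested bracketing on state space $\states_{<\layer+1} \times \states_{\layer+1}$.

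The consequences are then formal. $\wreathmon_\rnn^T$ is a monoid containing every generator $(F_\rnn)_t$, so it contains the submonoid $\monoid_\rnn^T$ they generate. Moreover, $\monoid_\layer^T \submon \monoid_\layer$ for every $\layer$, so $(\monoid_\layer^T)^{\states_{<\layer}} \subseteq \monoid_\layer^{\states_{<\layer}}$ and the multiplication restricts. This gives $\wreathmon_\rnn^T \submon \wreathmon_\rnn$.

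The main obstacle is bookkeeping rather than mathematics. One must check that elements of the wreath product are identified with their actions on $\states_\rnn$, as the inclusion in $\states_\rnn^{\states_\rnn}$ of \cref{def:ambient-wreath} asserts. One must also check that the indexing in \cref{def:dep-maps} is consistent at the base case $\layer = 1 \to 2$, where the recursion is stated for $1 < \layer$ but is needed from $\layer = 1$. I would handle the latter by reading $\varphi_1^T(\ast, t) = t$ and applying the recursion uniformly.
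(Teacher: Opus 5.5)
Your proposal is correct and takes essentially the same route as the paper. Both argue by induction on depth through the left-nested realized wreath product, with upper coordinate $\state_{<\layer+1} \mapsto (\recfn_{\layer+1})_{\varphi_{\layer+1}^T(\state_{<\layer+1}, t)}$ evaluated at the old lower state, then close under generation and use $\monoid_\layer^T \submon \monoid_\layer$ to get $\wreathmon_\rnn^T \submon \wreathmon_\rnn$. Your reading of the recursion in \cref{def:dep-maps} as starting from $\layer = 1$, with $\varphi_1^T(\ast, t) = t$, is the right fix for the indexing slip in that definition, which the paper's proof uses implicitly.
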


\begin{proof}[Proof sketch]
Induction on $\nlayers$, as in \cref{app:lemma}. 
The base case is immediate. For the inductive step, the global update splits as 
$(F_\rnn)_t(\state, \state_{\nlayers}) = (\bar{F}(\state), \psi_{\bar{F}, t}(\state)(\state_{\nlayers}))$ for $\state \in \states_{<\nlayers+1}$, 
where $\bar{F} \in \wreathmon_{\rnn|_{<\nlayers}}^T$ by induction and
$\psi_{\bar{F},t}(\state) = (\recfn_\nlayers)_{\varphi_\nlayers^T(\state,t)} \in \monoid_\nlayers^T$ 
by \cref{def:realized-layer-monoid}. This is precisely the action of an element of the wreath product. The action of the empty word in $T^*$ corresponds to $\identity{\states_\rnn} \in \wreathmon_\rnn^T$.
\end{proof}

\begin{remark}[Functoriality in $T$]\label{rem:Tfunctoriality}
The assignment $T \mapsto \wreath_\rnn^T$ is a covariant functor from the poset $(\mathcal{P}(\inset_1), \subseteq)$ to the poset of transformation submonoids of $\wreath_\rnn$, ordered by inclusion (see \cref{rem:functoriality-T}). 
Explicitly: if $T \subseteq T'$, then $\inset_\layer^T \subseteq \inset_\layer^{T'}$ at every layer, hence $\monoid_\layer^T \submon \monoid_\layer^{T'}$ and $\wreath_\rnn^T \submon \wreath_\rnn^{T'}$. The ambient case is the terminal value: $\wreath_\rnn^{\inset_1} = \wreath_\rnn$.
\end{remark}

Two specializations of \cref{lem:factorization} are of primary interest. 
The terminal case $T = \inset_1$ recovers the ambient wreath product ($\wreath_\rnn^{\inset_1} = \wreath_\rnn$, by \cref{rem:Tfunctoriality}) and is the relevant setting when $\rnn$ is viewed as an autonomous transducer freely driven by inputs to its first core. 
The case $T = \enc(\alphabet)$, for some encoder $\enc \colon \alphabet \to \inset_1$ of a finite alphabet $\alphabet$, is the relevant setting when $\rnn$ is the dynamical backbone of an automaton (\cref{sec:rnn-acceptors}): the realized wreath product $\wreath_\rnn^{\enc(\alphabet)}$ then governs the dynamics reachable by feeding $\alphabet^*$ into $\rnn$, and divides the recognized language's algebra.

\subsection{Algebraic RNNs as Automata}\label{sec:rnn-acceptors}
\subsubsection{Accepting core}
A classical automaton accepts via a final-state set $\accreg \subseteq \states$. 
The algebraic semi-RNNs of \cref{def:algebraic-rnn} carry instead, in their last core, a Mealy-style readout $\outfn_\nlayers \colon \states_\nlayers \times \inset_\nlayers \to \outset_\nlayers$ whose value depends jointly on the post-update state \emph{and} the last input read. 
Composing with a decoder $\dec \colon \outset_\nlayers \to \{0,1\}$ thus yields not a subset of $\states_\nlayers$ but a subset of $\states_\nlayers \times \inset_\nlayers$. 
The acceptance condition becomes a predicate on \emph{(state, last symbol)} pairs rather than on states alone, and the algebraic recognition framework of \citet{eilenberg1976} does not apply directly.\looseness=-1

We resolve this not by external bookkeeping, as would be done by reserving a dedicated end-of-sequence symbol, but \emph{structurally}, by absorbing the decoder into the algebra itself. 
The decoder $\dec$ is realized as an additional algebraic core, appended to the architecture, whose internal state evolves under the same wiring discipline as every other layer and whose role is to contain the acceptance condition as a property of its own state.

\begin{definition}\label{def:accepting-core}
An \defn{accepting core} is an algebraic core $\acccore = (\states_\bullet, \inset_\bullet, \{0,1\}, \recfn_\bullet, \outfn_\bullet)$ whose readout depends only on the state: there exists a subset $\accreg \subseteq \states_\bullet$, called the \defn{accepting region}, such that
$$
\outfn_\bullet(\state, \insym) = \charfun{\accreg}(\state) \quad \text{for all } \state \in \states_\bullet,\ \insym \in \inset_\bullet,
$$
where $\charfun{\accreg} \colon \states_\bullet \to \{0,1\}$ is the characteristic function of $\accreg$.
\end{definition}

\begin{remark}[Moore content]\label{rem:moore-content}
\Cref{def:accepting-core} captures, in algebraic terms, exactly the structure of a Moore machine with Boolean output: the readout depends only on the state, which is what turns acceptance into a predicate on states rather than a predicate on state-letter pairs. 
When $\states_\bullet$ is a singleton, $\acccore$ degenerates to a pure set map $\inset_\bullet \to \{0,1\}$, recovering the memoryless decoder $\dec \colon \outset_\nlayers \to \{0,1\}$ of the literature (precomposed with the wiring map). 
The present formulation thus strictly generalizes that one: it allows the decoder to be itself stateful, while specializing back to a memoryless decision in the singleton case.
\end{remark}

\subsubsection{Algebraic RNNs}

\begin{definition}\label{def:rnn-acceptor}
An \defn{algebraic RNN} is a tuple $\acceptor = (\alphabet,\augrnn, \initstate, \enc)$ where where $\augrnn = \rnn \cascadewith{\wiremap_\bullet} \acccore$ is the \emph{augmented RNN} obtained by appending an accepting core $\acccore$ to a base semi-RNN $\rnn$ via a wiring map $\wiremap_\bullet$, and:
\begin{itemize}\itemsep 0em
\item $\alphabet$ is a finite alphabet;
\item $\rnn$ is an algebraic semi-RNN of depth $\nlayers$, which we name the \defn{base RNN} of $\acceptor$;
\item $\acccore$ is an accepting core (\cref{def:accepting-core});
\item $\wiremap_\bullet \colon \inset_\nlayers \times \outset_\nlayers \to \inset_\bullet$ is the wiring map between $\rnn$ and $\acccore$;
\item $\initstate \in \states_{\augrnn}$ is an initial global state;
\item $\enc \colon \alphabet \to \inset_1$ is the encoder.
\end{itemize}

The wiring map is, in every case, part of the algebraic RNN's specification.
\end{definition}

\begin{remark}[The role of $\augrnn$]
The augmented semi-RNN $\augrnn$ is, by \cref{def:algebraic-rnn}, a genuine algebraic semi-RNN: the Moore constraint on $\acccore$ acts only at the last layer and does not affect the application of the algebraic theory of \cref{sec:realized} to $\augrnn$.
\end{remark}

\begin{remark}[The algebraic RNN as a sandwich]\label{rem:sandwich}
The definition of an algebraic RNN (\cref{def:rnn-acceptor}) treats $\initstate$ and $\accreg$ asymmetrically: the initial state $\initstate$ enters as direct primitive data, while the accepting region $\accreg$ is wrapped inside the accepting core $\acccore$ as part of its algebraic structure. 
This asymmetry is dispensable. Indeed, the initial state, like the accepting core, can be internalized into an algebraic core as well, placed at the input end of the cascade. 
Concretely, an \defn{initialization core} is an algebraic core $\inicore = (\states_\circ, \inset_\circ, \outset_\circ, \recfn_\circ, \outfn_\circ)$ equipped with a distinguished input $\sym_* \in \inset_\circ$ and a base state $\state_\circ^0 \in \states_\circ$, such that $\recfn_\circ(\state_\circ^0, \sym_*)$ delivers, via an outgoing wiring map, the desired initial state $\initstate$ to the next core, and such that $\recfn_\circ(\state, \insym) = \state$ for all $\state \neq \state_\circ^0$ and all $\insym \in \inset_\circ$. 
The \defn{dualized RNN automaton} 
$$
\acceptor^\circ = \bigl(\alphabet \cup \{\sym_*\}, \inicore \cascadewith{\wiremap_0} \rnn \cascadewith{\wiremap_\nlayers} \acccore, \state_\circ^0, \enc^\circ\bigr),
$$
where $\sym_* \notin \alphabet$ is a distinguished prefix symbol, is constructed by juxtaposing $\inicore$ to the left of $\rnn$ via $\wiremap_0$, and reading $\sym_* \str$ in place of $\str$. 
It recognizes exactly $\Lang{\acceptor}$, with identical syntactic monoid $\synmon{\Lang{\acceptor^\circ}} = \synmon{\Lang{\acceptor}}$. 
A general algebraic RNN is thus, in its most symmetric form, a \emph{sandwich}\looseness=-1
$$
\inicore \cascadewith{\wiremap_1} \rnn \cascadewith{\wiremap_\bullet} \acccore,
$$
with an initialization core $\inicore$ at the input end of the cascade and an accepting core $\acccore$ at the output end. 
The initialization core contributes a one-element submonoid to the realized wreath product (the identity, after its first action), so divisibility arguments collapse its contribution to triviality: the dualization is harmless at the level of recognized languages and their syntactic monoids. 
We do not adopt the dualized form as a working definition. 
The standard formulation (\cref{def:rnn-acceptor}) is closer to the machine-learning idiom in which an RNN's initial state is part of its specification. 
\end{remark}

\subsubsection{Algebraic content}\label{rem:acceptor-algebraic}
\begin{definition}
\input{figures/structural_perspective}

Let $\acceptor = (\alphabet, \augrnn, \initstate, \enc)$ be an algebraic RNN. 
For $\str = \sym_1 \cdots \sym_{|\str|} \in \alphabet^*$, write $\extF{\augrnn}(\initstate, \str) \in \states_{\augrnn}$ for the global state obtained by iterating $\augrnn$ from $\initstate$ on the encoded sequence $\enc(\sym_1), \ldots, \enc(\sym_{|\str|})$, with the convention that the empty word acts as $\identity{\states_{\augrnn}}$, so that in particular $\extF{\augrnn}(\initstate, \emptyword) = \initstate$.
The \defn{language recognized by} $\acceptor$ is
$$
\Lang{\acceptor} \defeq \Bigl\{ \str \in \alphabet^* \Bigm| \projlayer{\nlayers+1}\!\bigl(\extF{\augrnn}(\initstate, \str)\bigr) \in \accreg \Bigr\},
$$
where $\projlayer{\nlayers+1} \colon \states_{\augrnn} \to \states_{\nlayers+1}$ is the projection onto the accepting core's state coordinate.
\end{definition}

The dynamics relevant for $\Lang{\acceptor}$ live in the realized wreath product $\wreath_{\augrnn}^{\enc(\alphabet)}$ associated with the augmented network and the admissible-input set $T=\enc(\alphabet) \subseteq \inset_1$. 
The encoder lifts to a monoid morphism
$$
\Phimon_\acceptor \colon \alphabet^* \longrightarrow \monoid_{\augrnn}^{\enc(\alphabet)}, \quad
\str \mapsto \extF{\augrnn}(-, \str),
$$
and
$$
\Lang{\acceptor} = \Phimon_\acceptor^{-1}(P_\accreg),
$$
where $P_\accreg \defeq \{ m \in \monoid_{\augrnn}^{\enc(\alphabet)}: \projlayer{\nlayers+1}(\initstate \cdot m) \in \accreg \}$. 
The syntactic monoid $\synmon{\Lang{\acceptor}}$ therefore divides $\monoid_{\augrnn}^{\enc(\alphabet)}$, and by \cref{lem:factorization} divides the realized wreath product $\wreathmon_{\augrnn}^{\enc(\alphabet)}$, which in turn divides the ambient wreath product $\wreathmon_{\augrnn}$:
\begin{equation}\label{eq:langmon-divides-wreath}    
\synmon{\Lang{\acceptor}} \divides \monoid_{\augrnn}^{\enc(\alphabet)} \submon \wreathmon_{\augrnn}^{\enc(\alphabet)} \submon \wreathmon_{\augrnn}.
\end{equation}

\paragraph{Decomposition of the recognition problem.}
The divisibility chain above turns the recognition question into a question about finite monoids. 
Fix a family of algebraic RNNs, a choice of permissible cores, wirings, and arithmetic semantics fixed once and for all, and a target language $\lang \subseteq \alphabet^*$. 
The question \emph{does some algebraic RNN in the family recognize $\lang$?} decomposes as follows: 
\begin{enumerate}[label=(\roman*), leftmargin=*, itemsep=0pt]
\item \emph{Algebraic impossibility.} The syntactic monoid $\synmon{\lang}$ must divide some realized wreath product $\wreathmon_{\augrnn}^{\enc(\alphabet)}$ achievable in the family. 
Krohn--Rhodes theory (\cref{app:kr}) decomposes $\synmon{\lang}$ into flip-flops and prime transformation groups. 
If any required factor cannot be realized as a layer monoid permitted in the family, no algebraic RNN in the family recognizes $\lang$.
\item \emph{Observational realization.} Conversely, to exhibit a recognizer, one constructs cores realizing the Krohn--Rhodes factors of $\synmon{\lang}$, wires them so that the realized wreath product contains $\synmon{\lang}$ as a divisor, and chooses the accepting region $\accreg$ to separate the syntactic congruence classes through the readout-induced quotient.\looseness=-1
\end{enumerate}

This decomposition applies uniformly to any architecture family fitting \cref{def:algebraic-rnn} under any arithmetic semantics fixed as in \cref{sec:discretization}. 
The case study of \cref{sec:case-study} carries out both directions on diagonal SSMs: the algebraic side rules out modular counting under floating-point recurrences, while the observational side constructs an explicit algebraic RNN for parity under unsigned-integer quantization.

\subsubsection{The Augmentation as a Wreath Extension}\label{sec:augmentation-wreath}

The cascade juxtaposition $\rnn \cascadewith{\wiremap_\bullet} \acccore$ of \cref{def:rnn-acceptor} takes a base semi-RNN $\rnn$ and an accepting core $\acccore$, glues them by a wiring map $\wiremap_\bullet$, and produces a deeper algebraic semi-RNN. 
At the level of realized dynamics, the cascade juxtaposition becomes a wreath product: the realized monoid of the augmented semi-RNN sits inside the wreath product of the realized monoid of $\rnn$ and the realized monoid of $\acccore$. This sharpens \cref{lem:factorization} for the automaton case and isolates the algebraic role of the accepting core.

\begin{restatable}{proposition}{propCascjuxtaposition}\label{prop:juxtaposition-wreath}
Let $\rnn$ and $\rnn'$ be algebraic semi-RNNs of depths $\nlayers$ and $\nlayers'$ respectively, and let $\wiremap \colon \outsettype(\rnn) \to \insettype(\rnn')$ be a wiring map between them. For any $T \subseteq \insettype(\rnn)$,
$$
\wreath_{\rnn \cascadewith{\wiremap} \rnn'}^T = \wreath_\rnn^T \wr \wreath_{\rnn'}^T,
$$
where the equality is understood up to canonical isomorphism (\cref{rem:wreath-associativity}), and $\wreath_{\rnn'}^T$ denotes the realized wreath product of $\rnn'$ relative to the set of inputs reachable through $\wiremap$ from states of $\rnn$ activated by $T$. 
The realized transition monoid of the juxtaposition satisfies the chain
$$
\monoid_\rnn^T \divides \monoid_{\rnn \cascadewith{\wiremap} \rnn'}^T \submon \wreathmon_\rnn^T \wr \wreathmon_{\rnn'}^T,
$$
the first division being given by the canonical projection $\projlayer{\leq\nlayers} \colon \states_{\rnn \cascadewith{\wiremap} \rnn'} \to \states_\rnn$.
\end{restatable}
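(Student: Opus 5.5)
The plan is to unfold the definitions layer by layer and reduce everything to \cref{rem:wreath-associativity} and \cref{lem:factorization}. Write $\rnn'' \defeq \rnn \cascadewith{\wiremap} \rnn'$, with cores $\core_1,\dots,\core_\nlayers,\core'_1,\dots,\core'_{\nlayers'}$ and wirings $\wiremap_2,\dots,\wiremap_\nlayers,\wiremap,\wiremap'_2,\dots$ (\cref{def:juxtaposition}).

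\emph{Step 1: equality of realized wreath products.} First I will compare the layer-input dependency maps of $\rnn''$ (\cref{def:dep-maps}) with those of $\rnn$ and of $\rnn'$. For $\layer\le\nlayers$, the recursion for $\varphi^T_\layer$ in $\rnn''$ involves only $\core_1,\dots,\core_{\layer-1}$ and $\wiremap_2,\dots,\wiremap_\layer$. It is therefore literally the recursion of $\rnn$, so the reachable sets $\inset^T_\layer$ and the realized monoids $\monoid^T_\layer$ agree. For layer $\nlayers+1$, the reachable input set is the image $T' \defeq \{\wiremap(x_\nlayers, \outfn_\nlayers(\recfn_\nlayers(\state_\nlayers,x_\nlayers),x_\nlayers))\}$ over all $(\state_{<\nlayers+1},t)$. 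This is exactly the set the statement calls ``inputs reachable through $\wiremap$ from states of $\rnn$ activated by $T$''. I interpret $\wreath^T_{\rnn'}$ as $\wreath^{T'}_{\rnn'}$. For layers $\nlayers+j$ with $j\ge2$, the recursion of $\rnn''$ restricted to the $\rnn'$-coordinates coincides with that of $\rnn'$ started from $T'$. The one subtle point is that the image of $\varphi$ over the full product $\states_{<\nlayers+j}\times T$ equals the image over $\states'_{<j}\times T'$. This holds because the $\rnn'$-state coordinates are free and independent of the first $\nlayers$ coordinates, and $t$ enters only through $x_{\nlayers+1}\in T'$, which ranges over all of $T'$. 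Hence the layer monoids of $\wreath^T_{\rnn''}$ are $\monoid^T_1,\dots,\monoid^T_\nlayers,\monoid'^{T'}_1,\dots$. Re-bracketing the iterated wreath product by \cref{rem:wreath-associativity} then gives $\wreath^T_{\rnn''}=\wreath^T_\rnn\wr\wreath^{T'}_{\rnn'}$.

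\emph{Step 2: the inclusion.} \Cref{lem:factorization} applied to $\rnn''$ gives $\monoid^T_{\rnn''}\submon\wreathmon^T_{\rnn''}$, and Step 1 rewrites the right-hand side as $\wreathmon^T_\rnn\wr\wreathmon^{T'}_{\rnn'}$.

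\emph{Step 3: the division.} Take $N=\monoid^T_{\rnn''}$ itself and define $\pi(m)\defeq \projlayer{\leq\nlayers}\circ m$ restricted appropriately. To see this is well defined, I will note that each generator $(F_{\rnn''})_t$ acts on the first $\nlayers$ coordinates independently of the remaining ones, since the wiring is feedforward. Its first-block component is exactly $(F_\rnn)_t$. Thus every $m\in N$ has the triangular form $(\state,\state')\mapsto(\bar m(\state),\ast)$, and $\pi(m)\defeq\bar m$ is well defined. It is a monoid morphism because triangular maps compose blockwise on the first coordinate. It sends generators to generators, so it is surjective onto $\monoid^T_\rnn$, and it is compatible with the projection of state spaces, which gives the transformation-monoid division.

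The main obstacle I expect is Step 1's bookkeeping. It must show that the reachable input sets of the upper block of $\rnn''$ coincide with those of $\rnn'$ relative to $T'$, rather than being a proper subset or depending on correlations between $t$ and the lower states. The argument rests on the fact that $\varphi$ ranges over all states, not only reachable ones. I would write this out as a short induction on $j$.
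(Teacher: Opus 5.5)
Your proposal is correct and follows the paper's proof step for step. You regroup the iterated realized wreath product via \cref{rem:wreath-associativity}, apply \cref{lem:factorization} to the juxtaposition for the inclusion, and obtain the surjection induced by $\projlayer{\leq\nlayers}$ from the feedforward independence of the first $\nlayers$ coordinates. Your extra Step~1 check is sound: because the $\rnn'$-coordinates range freely and $t$ enters only through $x_{\nlayers+1}$, the upper layers' reachable input sets coincide with those of $\rnn'$ driven by $T'$. The paper does not verify this; it simply absorbs the point into its reading of $\wreath_{\rnn'}^T$.
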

\begin{proof}
  See \cref{sec:Cascjuxtaposition}.
\end{proof}

\begin{corollary}[Augmentation of an algebraic semi-RNN]\label{cor:augmentation-wreath}
Let $\acceptor = \bigl(\alphabet,  \augrnn, \initstate, \enc\bigr)$ be an algebraic RNN with base semi-RNN $\rnn$ of depth $\nlayers$, accepting core $\acccore$ and wiring map $\wiremap_\bullet$, and let $T = \enc(\alphabet) \subseteq \inset_1$. Then
$$
\wreathmon_{\augrnn}^T = \wreathmon_\rnn^T \wr \monoid_\acccore^T,
$$
again up to canonical isomorphism, and the chain
$$
\monoid_\rnn^T \divides \monoid_{\augrnn}^T \submon \wreathmon_\rnn^T \wr \monoid_\acccore^T
$$
holds, with $\monoid_\acccore^T \submon \monoid_\acccore$ the realized transition monoid of $\acccore$ relative to the inputs it receives through $\wiremap_\bullet$.
\end{corollary}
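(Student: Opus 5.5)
The corollary is the special case of \cref{prop:juxtaposition-wreath} in which $\rnn' = \acccore$ is the depth-one semi-RNN consisting of the single pair $(\acccore, \wiremap_\bullet)$ and $T = \enc(\alphabet)$. The plan is to instantiate the proposition and then simplify the right-hand factor.

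\emph{Step 1 (identify the augmented RNN as a juxtaposition).} By \cref{def:rnn-acceptor}, $\augrnn = \rnn \cascadewith{\wiremap_\bullet} \acccore$. Here $\wiremap_\bullet \colon \inset_\nlayers \times \outset_\nlayers \to \inset_\bullet$ is exactly a wiring map $\outsettype(\rnn) \to \insettype(\acccore)$ in the sense of \cref{def:juxtaposition}. So the hypotheses of \cref{prop:juxtaposition-wreath} hold with $\nlayers' = 1$.

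\emph{Step 2 (collapse the depth-one wreath product).} For a depth-one semi-RNN, \cref{def:realized-wreath} gives a wreath product with a single factor. Thus $\wreath_{\acccore}^T = (\monoid_\acccore^T, \states_\bullet)$, where $\monoid_\acccore^T$ is the realized transition monoid (\cref{def:realized-layer-monoid}). It is generated by the induced transitions $(\recfn_\bullet)_x$ for $x$ in the reachable input set of the accepting layer. Following the convention in the statement of \cref{prop:juxtaposition-wreath}, this set is the image of the layer-input dependency map $\varphi_{\nlayers+1}^T$ of $\augrnn$, i.e.\ the inputs delivered through $\wiremap_\bullet$ from states of $\rnn$ activated by $T$. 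Since $\inset_{\nlayers+1}^T \subseteq \inset_\bullet$, we get $\monoid_\acccore^T \submon \monoid_\acccore$ immediately.

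\emph{Step 3 (substitute).} The identity $\wreath^T_{\augrnn} = \wreath^T_\rnn \wr \wreath^T_{\acccore}$ becomes $\wreathmon_{\augrnn}^T = \wreathmon_\rnn^T \wr \monoid_\acccore^T$, up to the re-association of \cref{rem:wreath-associativity}. Similarly, the chain of the proposition becomes $\monoid_\rnn^T \divides \monoid_{\augrnn}^T \submon \wreathmon_\rnn^T \wr \monoid_\acccore^T$. The division is again witnessed by the projection $\projlayer{\leq \nlayers}$ forgetting the accepting coordinate.

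The only delicate point is bookkeeping rather than mathematics. The proposition indexes the reachable inputs of $\rnn'$ by the original $T \subseteq \inset_1$, whereas $\acccore$'s own input type is $\inset_\bullet$. I will make explicit that the superscript $T$ on $\monoid_\acccore^T$ means $\genmon{(\recfn_\bullet)_x \mid x \in \inset_{\nlayers+1}^T}$, computed in $\augrnn$. With this reading the corollary's statement matches the proposition's verbatim, and nothing further needs checking.
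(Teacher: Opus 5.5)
Your proposal is correct and matches the paper's proof: specialize \cref{prop:juxtaposition-wreath} to the case where $\rnn'$ is the depth-one semi-RNN consisting of $\acccore$ alone, so that its realized wreath product collapses to $\monoid_\acccore^T$. Your explicit reading of $\monoid_\acccore^T$ as $\genmon{(\recfn_\bullet)_x \mid x \in \inset_{\nlayers+1}^T}$ usefully spells out the bookkeeping that the paper leaves implicit.
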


\begin{proof}
Specialization of \cref{prop:juxtaposition-wreath} to the case where $\rnn'$ is the depth-one RNN consisting of $\acccore$ alone; the realized wreath product $\wreathmon_{\rnn'}^T$ then reduces to $\monoid_\acccore^T$.
\end{proof}

\begin{remark}[The accepting core sits above the dynamics]\label{rem:accepting-above} 
\Cref{cor:augmentation-wreath} makes explicit the following: the augmentation by an accepting core \emph{adds expressive resolution above} the base semi-RNN without modifying the dynamics on $\states_\rnn$. 
For divisibility-based expressivity arguments, the consequence is: $\synmon{\Lang{\acceptor}}$ divides the augmented monoid $\monoid_{\rnn \cascadewith{\wiremap_\bullet} \acccore}^T$, but not necessarily the base monoid $\monoid_\rnn^T$. 
The accepting core may, in general, extract recognition power that the base alone cannot encode, exactly when its own state space $\states_\bullet$ has cardinality greater than one. 
The interest of the accepting-core formulation, compared to a memoryless decoder, depends on the cardinality of $\states_\bullet$:
\begin{enumerate}\itemsep 0em
\item[(i)] If $|\states_\bullet| = 1$, then $\monoid_\acccore = \{\identity{\states_\bullet}\}$ and the wreath extension collapses, $\wreathmon_\rnn^T \wr \monoid_\acccore^T \cong \wreathmon_\rnn^T$. 
The algebraic RNN reduces to a pair $(\rnn, \accreg \subseteq \outset_\nlayers)$ where $\accreg$ is a Boolean subset of the base RNN's last readout values. This yields the formulation of a recognizer found in the literature.
\item[(ii)] When $|\states_\bullet| > 1$, the accepting core carries \emph{stateful memory} above the base dynamics and may extract regular languages that the memoryless case cannot reach.
\end{enumerate}
\end{remark}

\section{The Impact of Discretization}\label{sec:discretization}
\subsection{Algebraic Properties of Finite-Precision Semantics}

Digital computers store numbers in discrete, finite data types. They therefore cannot represent arbitrary real values. Instead, computation is carried out over a finite representable set $\fp\subseteq\R$, and the result of each arithmetic operation is mapped back into $\fp$ by rounding or truncation.

For formal expressivity analysis, reasoning over $\R$ while treating finite precision as a small approximation error fails: quantization alters the very algebraic structure on which expressivity statements rest.

\begin{definition}[Evaluation tree]\label{def:eval-tree}
An \defn{evaluation tree} over a set of binary operations $\ops$ in $k$ variables is a finite rooted ordered binary tree whose leaves are labeled by variables $x_1, \ldots, x_k$ or by constants, and whose internal nodes are labeled by operations in $\ops$.
A tree thus encodes a parenthesized expression.
\end{definition}

\begin{definition}\label{def:arith-model}
An \defn{arithmetic model} is a triple $\arith = (\domain, \ops, \round)$ with $\domain$ a value domain in an ambient set (usually $\R$), $\ops$ a finite set of binary operations on $\domain$, and $\round$ a rounding map from the ambient set into $\domain$.
For an evaluation tree $\tree$ over $\ops$ with constants in $\domain$, the \defn{evaluation under $\arith$} is the function $\llbracket \tree \rrbracket_\arith \colon \domain^k \to \domain$ defined by induction: at a variable leaf $x_j$, $\llbracket \tree \rrbracket_\arith \defeq \pi_j$ (the $j$-th projection); at a constant leaf $c$, $\llbracket \tree \rrbracket_\arith \defeq c$; at an internal node $\tree = \tree_1 \odot \tree_2$, $\llbracket \tree \rrbracket_\arith \defeq \round\!\bigl( \llbracket \tree_1 \rrbracket_\arith \odot \llbracket \tree_2 \rrbracket_\arith \bigr)$.
An expression $f$ admits, in general, several evaluation trees; the abbreviation $\llbracket f \rrbracket_\arith \defeq \llbracket \tree \rrbracket_\arith$ is used when one such tree is fixed by context.
\end{definition}

The dominant arithmetic model in practice is \defn{floating-point}:  
a basic operation $\odot \in \{+, -, \times, \div\}$ is evaluated in  real arithmetic and rounded back into $\fp$, yielding a relative error $\delta$ with $\round(a \odot b) = (a \odot b)(1 + \delta)$. 

This step breaks associativity. For instance, in IEEE 32-bit floating point, with $a = 1$, $b = 2^{24}$, $c = -2^{24}$, and $\round(1 + 2^{24}) = 2^{24}$: 
$$(a + b) + c = 0 \neq 1 = a + (b + c)$$

Similar phenomena arise from overflow and exceptional values (see \cref{app:precision}). 
In particular, once associativity is lost, the arithmetic no longer forms a monoid under $\odot$, meaning that operations that assume this structure (changing parenthesation of evaluations, etc.) are no longer well-defined.

For an overview of arithmetic models used in the expressivity literature, see \cref{app:precision}.

\subsection{Evaluation Order}

Non-associativity makes evaluation order semantically relevant: the value of an expression may depend on how it is parenthesized. Any expression built from binary operations induces an evaluation tree specifying a bracketing (and, operationally, an order of intermediate rounding).

In algebraic language theory, letter-by-letter transitions are modeled by function composition, which is associative by definition. To treat an RNN update as a monoid operation when the underlying arithmetic is non-associative, two conditions are required:
\begin{requirements}
    \item The evaluation order \emph{within each transition update} must be fixed so that the update map is well-defined on every admissible input.\label{req:eval-first}
    \item The evaluation of recurrent updates must respect temporal order: the update at time $\tstep$ must be completed before the update at time $\tstep+1$.\label{req:eval-second}
\end{requirements}
\begin{definition}[Recurrence-consistent evaluation]\label{def:rec-consistent}
An evaluation strategy is \defn{recurrence-consistent} if conditions \ref{req:eval-first} and \ref{req:eval-second} are met.
\end{definition}
If either condition fails, the model's behavior on sequences is not a well-defined recurrence, and formal statements about languages recognized via recurrent transitions become ill-posed.\looseness=-1

Modern hardware and compilers may reorder operations for performance, often relying on assumed associativity, and may even choose evaluation orders that vary across executions. This can violate the conditions above, even when the model is syntactically recurrent.

Discretization and evaluation order are not mere afterthoughts; they fundamentally determine the architecture's expressivity. This can have surprising consequences. For instance, \citet[Theorem 4.2]{merrill2024illusion} implies that RNNs without nonlinearity cannot simulate arbitrary finite machines. However, the theorem assumes a word-level evaluation order that does not satisfy condition \ref{req:eval-second}, meaning it is not recurrence-consistent. It can be shown that, under different implementation semantics with consistent evaluation order, nominally linear recurrences are sufficient to simulate arbitrary finite machines. See \cref{app:nominally-linear} for details.

\section{Conclusion}
The paper establishes an algebraic characterization of recurrent language models in which the arithmetic semantics is part of the object characterized. 
For any model with recurrence-consistent evaluation, the global dynamics factor through the wreath product of the layer transition monoids; the syntactic monoid of any recognized language thereby divides this realized wreath product. 
Expressivity questions reduce to divisibility statements in a family of finite monoids, decidable via Krohn--Rhodes decomposition.

The stratification (cores, wirings, arithmetic) isolates the arithmetic semantics as the single component on which the realized monoids depend. 
The diagonal-SSM case study makes this concrete: the same idealized recurrence supports no modular counting under nonnegative floating-point arithmetic, yet recognizes parity under unsigned-integer quantization. 
The result of \citet{sarrof2024expressive} on the impossibility of parity in nonnegative diagonal SSMs is recovered as an instance of the divisibility chain.

The same algebraic analysis applies, with some adaptation, to attention-based architectures once attention is formalized as a recurrence, a direction we leave for future work.

\section*{Acknowledgements}
We thank the anonymous reviewers for their valuable remarks and suggestions.

\section*{Impact Statement}
This paper presents theoretical work aimed at advancing the field of machine learning by providing a procedure for understanding the formal capabilities of language models, which may also lead to the discovery of limitations and, in turn, to more capable architectures. Hence, as with any machine learning research, there are many potential societal consequences of our work, none of which we feel must be specifically highlighted here.

\bibliographystyle{stylebib}
\bibliography{references}

\appendix

\onecolumn

\appendix
\crefalias{section}{appendix}
\onecolumn
\etocdepthtag.toc{appendix}
\addtocontents{toc}{}
\etocsettocstyle{\section*{Contents}}{}
\etocsettocdepth{subsection}
\begingroup
\small
\setlength{\parskip}{2.9pt}
\tableofcontents
\endgroup

\newpage

\section{Notation}\label{app:notation}

\paragraph{Typographic convention.}
Upright/italic fonts denote \emph{algebraic objects} (sets, monoids, 
morphisms, languages: $\monoid$, $\states$, $\inset$, $\alphabet$, $\wreath_\rnn$, 
$\recfn$, $\outfn$); calligraphic and decorated fonts denote 
\emph{semantic and machines related objects} ($\lang$, $\rnn$, $\acceptor$, 
$\core$, $\acccore$, $\transducer$, $\arith$).

\paragraph{Monoids and transformation monoids.}\mbox{}\\
\begin{tabular}{@{}p{2cm}p{12.5cm}@{}}
$\monoid \submon N$ & Submonoid inclusion. \\
$\monoid \divides N$ & Division: $\monoid$ is a quotient of a submonoid of $N$. \\
$\genmon{S}$ & Submonoid generated by $S$. \\
$\states^\states$ & Full transformation monoid of $\states$. \\
$(\monoid, \states)$ & Transformation monoid: $\monoid \submon \states^\states$. \\
$\synmon{\lang}$ & Syntactic monoid of $\lang$. \\
$\times,\ \wr$ & Direct product and (left) wreath product. \\
$\cyclic{n}$ & cyclic group of order $n$. \\
\end{tabular}

\paragraph{Algebraic Semi-RNNs.}\mbox{}\\
\begin{tabular}{@{}p{2cm}p{12.5cm}@{}}
$\core$ & Algebraic core $(\states, \inset, \outset, \recfn, \outfn)$. \\
$\monoid_\core$ & Transition monoid of $\core$, generated by $\{\recfn_\insym\}$. \\
$\rnn$ & Algebraic semi-RNN of depth $\nlayers$, with cores $\core_\layer$ and wirings $\wiremap_\layer$. \\
$\states_\rnn$ & Global state space $\states_1 \times \cdots \times \states_\nlayers$. \\
$\rnn \cascadewith{\wiremap} \rnn'$ & Cascade juxtaposition along $\wiremap$. \\
\end{tabular}

\paragraph{Wreath products and realized dynamics.}\mbox{}\\
\begin{tabular}{@{}p{2cm}p{12.5cm}@{}}
$\wreath_\rnn,\ \wreathmon_\rnn$ & Ambient wreath product and its underlying monoid. \\
$T \subseteq \inset_1$ & Admissible inputs to the first core. \\
$\varphi_\layer^T$ & Layer-input dependency map relative to $T$. \\
$\monoid_\layer^T$ & Realized layer monoid relative to $T$. \\
$\wreath_\rnn^T$ & Realized wreath product relative to $T$. \\
$\monoid_\rnn^T$ & Global transition monoid driven by $T$. \\
\end{tabular}

\paragraph{Automata.}\mbox{}\\
\begin{tabular}{@{}p{2cm}p{12.5cm}@{}}
$\acceptor$ & Algebraic RNN $(\alphabet, \augrnn, \initstate, \enc)$. \\
$\augrnn$ & Augmented semi-RNN $\rnn \cascadewith{\wiremap_\bullet} \acccore$. \\
$\acccore$ & Accepting core (Moore-style); $\accreg \subseteq \states_\bullet$ the accepting region. \\
$\Lang{\acceptor}$ & Language recognized by $\acceptor$. \\
\end{tabular}

\paragraph{Arithmetic models.}\mbox{}\\
\begin{tabular}{@{}p{2cm}p{12.5cm}@{}}
$\arith$ & Arithmetic model $(\domain, \ops, \round)$. \\
$\llbracket f \rrbracket_\arith$ & Evaluation of $f$ under $\arith$. \\
$\fp$ & Floating point. \\
$\integ^p$ & Unsigned integers modulo $2^p$. \\
$\aperiodic$ & Aperiodic pseudovariety \\
\end{tabular}
\newpage

\section{A Structural Perspective}\label{sec:structural-perspective}
The paper's algebraic account sits between two hierarchies the literature has historically kept apart: the \emph{analytic} hierarchy of real-valued recurrent architectures, and the \emph{algebraic} hierarchy of finite-state automata. 
The bridge between them, discretization, is usually treated as a numerical perturbation absorbed by an $\varepsilon$. 
This is a category error in the technical sense: it confuses two structurally distinct levels.

A natural reflex is to organize the situation as a functor $\functor_\arith \colon \realrnn \to \finrnn$ from real-valued to finite-state RNNs, parameterized by an arithmetic $\arith$. 
This picture is wrong structurally: $\functor$ does not preserve composition. 
For state updates $f, g$ over $\R$,
\begin{equation*}
\functor(g \circ f) \neq \functor(g) \circ \functor(f),
\end{equation*}
because rounding does not distribute over composition. 
The local loss of associativity in $\arith$ (\cref{sec:discretization}) is the global non-functoriality of $\functor$. 
A theorem in $\realrnn$ that relies on associativity, the chain of compositions underlying Siegelmann--Sontag Turing-completeness \citep{Siegelmann1992OnTC}, for instance, does not transfer to $\finrnn$ without a new proof carried out in the discrete setting. 
What survives, once $\arith$ is fixed, is the realized transition monoid $\monoid_{\functor(\rnn)}^{\enc(\alphabet)}$ of \cref{def:realized-layer-monoid}: a finite object carrying its own monoid structure inherited from $\arith$, and the algebraic invariant that any syntactic monoid of a recognized language must divide.

Two consequences follow, each correcting a claim common in the literature. 
First, real-valued expressivity is not a lower bound on discrete expressivity: the Siegelmann--Sontag construction, discretized underfixed-evaluation floating-point, lands in a strictly subregular pseudovariety. 
Second, discrete expressivity is not a corruption of real-valued expressivity: wraparound integer arithmetic makes available cyclic group structure that real arithmetic does not naturally realize. 
Discretization does not project expressivity downward; it maps it onto a different pseudovariety, determined by $\arith$.

Real-valued expressivity is an ideal; discrete expressivity is what computes. 
The translation between them cannot be analyzed at the level of morphisms in any meaningful category, because the proposed functor is not one. 
What can be analyzed are the algebraic invariants of $\functor(\rnn)$: its realized transition monoid and the pseudovariety it inhabits. 
These invariants were the object of the present paper, and of the case studies that follow. 

\section{Limitations} 

This paper establishes expressivity rather than learnability: it asks whether there exists a specific parametrization of a given architecture from a family that can recognize a given formal language under a fixed numerical semantics. However, it makes no claims about whether such a parametrization could be found automatically through gradient-based optimization.

The present treatment is restricted to finite-precision semantics. As a result, every induced transition monoid is finite, and every recognized language is at most regular. 
The construction could be extended to settings in which precision or depth grows with input length, yielding infinite monoids and enabling the architecture to capture non-regular languages.

Furthermore, the paper focuses on explicitly recurrent architectures, such as Elman-RNNs and structured state-space models. However, transformers are currently the dominant language model architecture, raising the question of whether the algebraic construction of this paper could be adapted to attention-based architectures as well. To fit the abstraction of transformation monoids established here, transformers would need to be formalized as recurrent models, which is natural for linear attention and certain finite-precision implementations, but requires further work.\looseness=-1 

Lastly, the paper focuses on establishing a purely algebraic abstraction of recurrent neural models and demonstrates its applicability through a case study of discrete diagonal state-space models. While this rederives and corrects known expressivity results about an important practical architecture, the paper does not attempt to prove fundamentally novel results about the expressivity of the plethora of other sequence-model architectures existing in the literature.

\section{Common Types of RNNs}\label{app:rnn_types}
Recall the definition of an algebraic RNN core.
\rnncoredef*
This definition is broad enough to apply to many RNN architectures.
\paragraph{Elman RNN.}
Let $\hidset=\R^m$ and $\inset=\outset=\R^d$. With parameters $W_\hidsym\in\R^{m\times m}$, $U_\hidsym\in\R^{m\times d}$, $W_\outsym\in\R^{d\times m}$,  $b_\hidsym\in\R^m$, $b_\outsym\in\R^d$
and nonlinearities $\sym_\hidsym$ and $\sym_\outsym$:
\begin{align}
    f(\hidsym_{\tstep-1},\insym_\tstep)&=\sym_\hidsym\!\bigl(W_\hidsym\hidsym_{\tstep-1}+U_\hidsym\insym_{\tstep}+b_\hidsym\bigr)\label{eq:elman-recurrence}\\
    g(\hidsym_\tstep,\insym_\tstep)&=\sym_y(W_\outsym \hidsym_\tstep+b_\outsym)
\end{align}

\paragraph{GRU.}
Let $\hidset=\R^m$ and $\inset=\outset=\R^d$. With sigmoid $\sym$ and elementwise product $\odot$, parameters
$W_{\{z,r,c\}}\in\R^{m\times d}$, $U_{\{z,r,c\}}\in\R^{m\times m}$, $b_{\{z,r,c\}}\in\R^m$, $W_\outsym\in\R^{d\times m}$, and $b_\outsym\in\R^d$:
\begin{align}
z_{\tstep} &= \sym\!\bigl(W_z\insym_{\tstep}+U_z\hidsym_{\tstep-1}+b_z\bigr)\\
r_{\tstep} &= \sym\!\bigl(W_r\insym_{\tstep}+U_r\hidsym_{\tstep-1}+b_r\bigr)\\
\tilde{\hidsym}_{\tstep} &= \tanh\!\bigl(W_c\insym_{\tstep}+U_c(r_{\tstep}\odot \hidsym_{\tstep-1})+b_c\bigr)\\
\end{align}
The recurrence and output are then given by:
\begin{align}
f(\hidsym_{\tstep-1},\insym_\tstep) &= (1-z_{\tstep})\odot \hidsym_{\tstep-1}+z_{\tstep}\odot \tilde{\hidsym}_{\tstep}\\
g(\hidsym_\tstep,\insym_\tstep)&=\sym_\outsym(W_\outsym\hidsym_{\tstep}+b_\outsym)
\end{align}

\paragraph{LSTM.}
Let $\hidset=\R^m\times\R^m\cong\R^{2m}$ and $\inset=\outset=\R^d$. Write $\hidsym_\tstep=(\hidsym^{(c)}_\tstep,\hidsym^{(h)}_\tstep)$ with $\hidsym^{(c)}_\tstep,\hidsym^{(h)}_\tstep\in\R^m$. 
With sigmoid $\sym$, elementwise product $\odot$, parameters
$W_{\{i,f,o,g\}}\in\R^{m\times d}$, $U_{\{i,f,o,g\}}\in\R^{m\times m}$, $b_{\{i,f,o,g\}}\in\R^m$,
$W_\outsym\in\R^{d\times m}$, and $b_\outsym\in\R^d$:
\begin{align}
i_{\tstep} &= \sym\!\bigl(W_i\insym_{\tstep}+U_i\hidsym^{(h)}_{\tstep-1}+b_i\bigr)\\
f_{\tstep} &= \sym\!\bigl(W_f\insym_{\tstep}+U_f\hidsym^{(h)}_{\tstep-1}+b_f\bigr)\\
o_{\tstep} &= \sym\!\bigl(W_o\insym_{\tstep}+U_o\hidsym^{(h)}_{\tstep-1}+b_o\bigr)\\
\tilde{\hidsym}^{(c)}_{\tstep} &= \tanh\!\bigl(W_g\insym_{\tstep}+U_g\hidsym^{(h)}_{\tstep-1}+b_g\bigr)
\end{align}
The recurrence and output are then given by:
\begin{align}
f(\hidsym_{\tstep-1},\insym_\tstep) 
&= \Bigl(f_{\tstep}\odot \hidsym^{(c)}_{\tstep-1}+ i_{\tstep}\odot \tilde{\hidsym}^{(c)}_{\tstep}, o_{\tstep}\odot \tanh\!\bigl(f_{\tstep}\odot \hidsym^{(c)}_{\tstep-1}+ i_{\tstep}\odot \tilde{\hidsym}^{(c)}_{\tstep}\bigr)\Bigr)\\
g(\hidsym_\tstep,\insym_\tstep) &=\sym_\outsym\!\bigl(W_\outsym \hidsym^{(h)}_\tstep+b_\outsym\bigr)
\end{align}

\paragraph{SSM.}
Let $\hidset=\R^m$ and $\inset=\outset=\R^d$. With parameters
$A\in\R^{m\times m}$, $B\in\R^{m\times d}$, $C\in\R^{d\times m}$, define
\begin{align}
f(\hidsym_{\tstep-1},\insym_\tstep) &= A\hidsym_{\tstep-1}+B\insym_{\tstep}\\
g(\hidsym_\tstep,\insym_\tstep) &= C\hidsym_\tstep
\end{align}
In some variants, A, B, and C are themselves functions of the input, meaning
\begin{align}
f(\hidsym_{\tstep-1},\insym_\tstep) &= A(\insym_\tstep)\hidsym_{\tstep-1}+B(\insym_\tstep)\insym_{\tstep}\\
g(\hidsym_\tstep,\insym_\tstep) &= C(u_\tstep)\hidsym_\tstep
\end{align}

\begin{remark}
    Note that we explicitly add a dependence on the input for both the readout $\outfn$, as well as the wiring function $\wiremap$.
    This allows for the optional formalization of residual connections, both for the RNN core alone and for the entire layer.
\end{remark}

\section{Proof of the Factorization Lemma}\label{app:lemma}

We restate and prove the factorization lemma of \cref{sec:realized}. The proof is by induction on the depth $\nlayers$ of $\rnn$ and rests on a single observation: the input $x_{\layer+1}$ delivered to layer $\layer+1$ depends on the states of the lower layers before they update on the current symbol, never after. This is precisely the dependency pattern encoded by the left-wreath-product convention of \cref{sec:algebraic_prelim}, where a wreath element $(m, \phi)$ acts as
$(\state, \state') \mapsto (\state \cdot m, \state' \cdot \phi(\state))$, with $\phi$ evaluated at the old lower state.

\factorization*

\begin{proof}
We induct on $\nlayers$.

\paragraph{Base case, $\nlayers = 1$.}
An algebraic semi-RNN of depth $1$ consists of a single core $\core_1$ and no wiring maps; its global state space is $\states_\rnn = \states_1$ and its realized wreath product is $\wreath_\rnn^T = (\monoid_1^T, \states_1)$. For $t \in T$,
$$
(F_\rnn)_t(\state_1) = \recfn_1(\state_1, t) = (\recfn_1)_t(\state_1).
$$
By \cref{def:dep-maps}, $\inset_1^T = T$, hence $t \in \inset_1^T$, hence $(\recfn_1)_t \in \monoid_1^T = \wreathmon_\rnn^T$ by \cref{def:realized-layer-monoid}.

\paragraph{Inductive step.}
Let $\rnn$ be of depth $\nlayers+1$ and let $\rnn|_{\leq \nlayers}$ denote its truncation to the first $\nlayers$ layers, itself an algebraic semi-RNN, equipped with the same generating set $T \subseteq \inset_1$. By \cref{def:realized-wreath},
$$
\wreath_\rnn^T = \wreath_{\rnn|_{\leq \nlayers}}^T \wr (\monoid_{\nlayers+1}^T, \states_{\nlayers+1}).
$$
Decompose a global state of $\rnn$ as $(\state_{<\nlayers+1}, \state_{\nlayers+1})$ where $\state_{<\nlayers+1} \defeq (\state_1, \ldots, \state_\nlayers) \in \states_{\rnn|_{\leq\nlayers}}$.
Fix $t \in T$. We analyze the two components of $(F_\rnn)_t$.

\smallskip
\emph{(a) Lower-layer component.} The wiring of an algebraic semi-RNN is strictly feedforward: by \cref{def:dep-maps}, the input $x_\layer$ to layer $\layer$ depends only on $(\state_{<\layer}, t)$. 
Consequently the update of the first $\nlayers$ layers under symbol $t$ coincides with the global update of $\rnn|_{\leq \nlayers}$:
$$
\projlayer{<\nlayers+1}\bigl((F_\rnn)_t(\state_{<\nlayers+1}, \state_{\nlayers+1})\bigr) = (F_{\rnn|_{\leq \nlayers}})_t(\state_{<\nlayers+1}),
$$
where $\projlayer{<\nlayers+1}$ is the projection onto the first $\nlayers$ coordinates. By the inductive hypothesis applied to $\rnn|_{\leq\nlayers}$ with the same $T$,
$$
(F_{\rnn|_{\leq \nlayers}})_t \in \wreathmon_{\rnn|_{\leq \nlayers}}^T.
$$
\smallskip
\emph{(b) Upper-layer component.} The update of layer $\nlayers+1$ is
$$
\state_{\nlayers+1}' = \recfn_{\nlayers+1}\bigl(\state_{\nlayers+1}, x_{\nlayers+1}\bigr), \quad x_{\nlayers+1} = \varphi_{\nlayers+1}^T(\state_{<\nlayers+1}, t).
$$
Define
$$
\psi_t \colon \states_{\rnn|_{\leq\nlayers}} \longrightarrow \monoid_{\nlayers+1}^T, \qquad
\psi_t(\state_{<\nlayers+1}) \defeq (\recfn_{\nlayers+1})_{\varphi_{\nlayers+1}^T(\state_{<\nlayers+1}, t)}.
$$

\smallskip
\emph{(c)} Combining (a) and (b) shows
$$
(F_\rnn)_t(\state_{<\nlayers+1}, \state_{\nlayers+1}) = \Bigl( (F_{\rnn|_{\leq \nlayers}})_t(\state_{<\nlayers+1}), \psi_t(\state_{<\nlayers+1})\bigl(\state_{\nlayers+1}\bigr)
\Bigr).
$$
This is exactly the action on $\states_\rnn$ of the wreath element
$$
\bigl((F_{\rnn|_{\leq \nlayers}})_t, \psi_t\bigr) \in \wreathmon_{\rnn|_{\leq \nlayers}}^T \wr \monoid_{\nlayers+1}^T = \wreathmon_\rnn^T,
$$
where the evaluation of $\psi_t$ at the \emph{old} lower state $\state_{<\nlayers+1}$, 
matches the left-wreath convention precisely because $x_{\nlayers+1}$ depends only on the old lower-layer states. 
Accordingly, $(F_\rnn)_t \in \wreathmon_\rnn^T$, which closes the induction.

\paragraph{The action of the empty word.}
The monoid $\monoid_\rnn^T = \genmon{(F_\rnn)_t \mid t \in T}$ contains, by definition, the identity $\identity{\states_\rnn}$. 
Under the morphism $T^* \to \monoid_\rnn^T$, the empty word $\emptyword \in T^*$ is sent to $\identity{\states_\rnn}$. The same is true of $\wreathmon_\rnn^T$: its identity is the wreath element $(1_{\wreathmon_{\rnn|_{\leq \nlayers}}^T}, \state \mapsto 1_{\monoid_{\nlayers+1}^T})$, which acts as the identity on $\states_\rnn$. 

\paragraph{Conclusion.}
We have shown that every generator of $\monoid_\rnn^T$ lies in $\wreathmon_\rnn^T$, and that the identities of the two monoids agree. Therefore,
$$
\monoid_\rnn^T = \genmon{(F_\rnn)_t \mid t \in T} \submon \wreathmon_\rnn^T.
$$
The remaining inclusion $\wreathmon_\rnn^T \submon \wreathmon_\rnn$ is monotonicity of the wreath product: $\monoid_\layer^T \submon \monoid_\layer$ for every $\layer$ (\cref{def:realized-layer-monoid}), and a wreath element $(m, \phi)$ with $m \in \monoid_\layer^T$ and $\phi$ taking values in $\monoid_{\layer+1}^T$ is, by \cref{def:wreath}, an element of the ambient wreath product $\wreathmon_\rnn$. 
\end{proof}

\begin{remark}\label{rem:functoriality-T}
The proof above shows the following: if $T \subseteq T'$, then induction on $\layer$ gives $\inset_\layer^T \subseteq \inset_\layer^{T'}$ at every layer (the dependency map $\varphi_\layer^T$ is the restriction of $\varphi_\layer^{T'}$ to $\states_{<\layer} \times T$), hence $\monoid_\layer^T \submon \monoid_\layer^{T'}$, hence $\wreath_\rnn^T \submon \wreath_\rnn^{T'}$, hence $\monoid_\rnn^T \submon \monoid_\rnn^{T'}$. \end{remark}

\section{Proof of the Cascade Juxtaposition as Wreath Extension}\label{sec:Cascjuxtaposition}

\propCascjuxtaposition*

\begin{proof}
    
Throughout the proof, we write $\rnn'' \defeq \rnn \cascadewith{\wiremap} \rnn'$ for brevity, and abbreviate $\cdot$ for $\cascadewith{\wiremap}$, the wiring being fixed by hypothesis.

\paragraph{The wreath identity.}
By \cref{def:juxtaposition}, $\rnn''$ is the algebraic semi-RNN of depth $\nlayers + \nlayers'$ whose sequence of pairs is the concatenation of those of $\rnn$ and $\rnn'$, with $\wiremap$ inserted as the wiring of the first core of $\rnn'$. 
The realized wreath product $\wreathmon_{\rnn''}^T$ is then, by \cref{def:realized-wreath}, the iterated wreath product
$$
\wreath_{\rnn''}^T = (\monoid_1^T, \states_1) \wr (\monoid_2^T, \states_2) \wr \cdots \wr (\monoid_{\nlayers + \nlayers'}^T, \states_{\nlayers + \nlayers'}).
$$
By \cref{rem:wreath-associativity}, the wreath product of transformation monoids is associative up to canonical isomorphism. 
Grouping the first $\nlayers$ factors and the last $\nlayers'$ factors yields
$$
\wreath_{\rnn''}^T \cong \Bigl( (\monoid_1^T, \states_1) \wr \cdots \wr (\monoid_\nlayers^T, \states_\nlayers) \Bigr) \wr \Bigl( (\monoid_{\nlayers+1}^T, \states_{\nlayers+1}) \wr \cdots \wr (\monoid_{\nlayers+\nlayers'}^T, \states_{\nlayers+\nlayers'}) \Bigr).
$$
The left bracket is $\wreathmon_\rnn^T$ by \cref{def:realized-wreath} applied to $\rnn$ and the set $T$. 
The right bracket is the realized wreath product of $\rnn'$ relative to the inputs delivered by $\wiremap$ through the cascade, that is, $\wreathmon_{\rnn'}^T$ in the sense of the proposition's statement. 
Hence the asserted identity, modulo the canonical isomorphism.

\paragraph{The inclusion in the wreath product.}
The inclusion $\monoid_{\rnn''}^T \submon \wreathmon_{\rnn''}^T = \wreathmon_\rnn^T \wr \wreathmon_{\rnn'}^T$ is \cref{lem:factorization} applied to $\rnn''$ with generating set $T$.

\paragraph{The division.}For the division $\monoid_\rnn^T \divides \monoid_{\rnn''}^T$, consider the projection $\projlayer{\leq\nlayers} \colon \states_{\rnn''} \to \states_\rnn$ onto the first $\nlayers$ coordinates of the global state. 
The strictly feedforward wiring of the cascade (\cref{def:dep-maps}) ensures that the dynamics on the first $\nlayers$ coordinates are independent of the remaining $\nlayers'$ coordinates: for every $t \in T$ and every $(\state_{\leq \nlayers}, \state_{>\nlayers}) \in \states_{\rnn''}$,
$$
\projlayer{\leq\nlayers}\!\bigl((F_{\rnn''})_t(\state_{\leq \nlayers}, \state_{>\nlayers})\bigr) = (F_\rnn)_t(\state_{\leq \nlayers}).
$$
The projection therefore induces a surjective monoid morphism $\projlayer{\leq\nlayers}^* \colon \monoid_{\rnn''}^T \twoheadrightarrow \monoid_\rnn^T$, exhibiting $\monoid_\rnn^T$ as a quotient of $\monoid_{\rnn''}^T$. This is the asserted division.
\end{proof}

\section{Models of Arithmetic}\label{app:precision}

In the following, we outline the main arithmetic models used to analyze the expressivity of neural architectures.
Critically, we point out the tradeoffs between space requirements and the cost of losing algebraic properties (most importantly, associativity).

\begin{table*}[]
    \caption{Properties of main arithmetic models. Memory scaling is given as a function of the position $\tstep$ in the input sequence.}
    \centering
    \begin{tabular}{c|c c c c c}
         Model $\arith$ & Domain $\domain$ & Operations $\ops$ & Overflow & Memory Scaling & Associative \\
         \hline
         Reals & $\R$ & $\{+,-,\times,\div\}$ & N.A. & $\infty $ & yes\\
         Rationals & $\Q$ & $\{+,-,\times,\div\}$ & N.A. & $\mathcal{O}(\tstep)$ & yes\\
         Integers & $\Z$ & $\{+, -, \times\}$ & N.A. & $\mathcal{O}(\tstep)$ & yes \\ 
         Float & FP $\cup\ \{\pm\infty, \text{NaN}\}$ & rd. $\{+,-,\times,\div\}$ & $\pm\infty$ &  $\mathcal{O}(1)$ & no\\
         Int & finite interval of $\Z$ & $\{+,-, \times\}$ & clamp & $\mathcal{O}(1)$ & no\\ 
         UInt & $\Z/2^p\Z$ & $\{+,-, \times\}$ & wrap & $\mathcal{O}(1)$ & yes\\ 
    \end{tabular}
    \label{tab:arithmetic_models}
\end{table*}

\subsection{Reals and Rationals}
One common choice for expressivity analysis is to assume that all values lie in a field such as the real numbers $\R$ or the rationals $\Q$. This may not be a realistic model for actual neural network implementations, because a given real number may not be representable in finite memory, and rationals may also have non-terminating binary expansions. Note that any specific rational number can be represented with finite space via an integer enumerator and denominator.

\subsection{Floating-Point Arithmetic}

By far the most common arithmetic model used to approximate real numbers by computer hardware is floating-point arithmetic. In this model, each of a finite subset of the real numbers is represented as the product of a sign $\pm$, a significand $m$, and an integer power of a base $b$ (usually $2$).\footnote{Note that only numbers whose base-$b$ expansion terminates are exactly representable in that base with finitely many digits. For example, the fraction $1/3$ cannot be exactly represented in base $2$.} The arithmetic model is specified by the precision $p$ (the number of digits in the significand) and an integer exponent range $[e_{\min}, e_{\max}]$.

The significand $m$ is a fractional value composed of $p$ digits ($d_0, d_1, \dots, d_{p-1}$) in base $b$. Ignoring special values for the moment, floating-point numbers have the form:
$$
\pm \underbrace{d_0.d_1d_2\dots d_{p-1}}_{\text{significand}} \times \underbrace{b}_{\text{base}} \overbrace{\vphantom{b}^e}^{\text{exponent}}
$$
where $e_{\min} \le e \le e_{\max}$ and $0 \le d_i < b$ for all $i < p$. A floating-point number is called \defn{normal} if $d_0 \neq 0$. In binary ($b=2$), this implies $d_0 = 1$.
Otherwise, it is called a \textbf{subnormal number} if $d_0=0$ and the exponent is fixed at $e = e_{\min}$. Subnormals extend the representable set of numbers toward $0$ by trading significant bits of the mantissa for more space to represent smaller exponents.\footnote{Note that in floating-point, 0 is not unique, since both $+0$ and $-0$ are valid distinct floating-point numbers.}

\paragraph{Overflow and error behavior.}
With any numerical system based on a finite set, an important question is what happens when operations result in values outside that range.
In floating-point, numbers whose magnitude is larger than the maximum/minimum representable value result in overflow of dedicated \defn{infinity} values ($+\infty$ and $-\infty$). 
Adding or multiplying any normal or subnormal floating-point number with $\pm\infty$ yields $\pm\infty$.

For results of invalid computations such as $0/0$, $\infty - \infty$, etc. floating-point has \defn{NaNs} (Not-a-Number). Note that this can also result from the denominator becoming 0 due to underflow.
NaNs propagate, meaning that if at any point in a computation a subresult is NaN, the result of the whole computation will be NaN.

\paragraph{Rounding.}
For any real number $x$, let $\round(x)$ denote its floating-point rounding to the nearest representable value.\footnote{Other rounding modes exist (toward $0$, toward $\pm\infty$, stochastic rounding, etc.). The dominant default in scientific computing is round-to-nearest, with ties broken by the ties-to-even rule.}
Rounding introduces a small relative error with respect to the ambient system ($\R$) in which floating-point numbers are embedded when the result falls within the normal range. A standard abstraction for $x \in \mathbb{R}$ is:
$$
\round(x) = x(1+\delta), \qquad |\delta| \le u,
$$
where $u$ is the \emph{unit roundoff} ($u = \frac{1}{2}b^{1-p}$ for round-to-nearest). The spacing between adjacent representable numbers grows with $|x|$, so floating point has roughly constant \emph{relative} precision across magnitudes.

Floating-point arithmetic implements the basic operations $\{+, -, \times, \div\}$ as if computing the real result and then rounding it back into the format with some rounding error $\delta$:
$$
\round(a \odot b) = (a \odot b)(1+\delta), \quad \odot \in \{+, -, \times, \div\}
$$
In typical implementations, rounding occurs after each primitive operation; long multi-term computations therefore accumulate rounding error over many steps.

\paragraph{Rounding and overflow break associativity.} 
Rounding is fundamentally at odds with associativity, because the results of computations can differ depending on their order. This happens, for example, when a subterm adds two numbers, one of which is much smaller than the other, so the smaller one disappears. This is called \defn{absorption}.
\begin{example}[Absorption]
Consider $x=2^{24}-2^{24} + 1 = 1$. In 32-bit floating-point, depending on ordering, this yields:
\begin{enumerate}[label=(\roman*)]
    \item $\round(\round(2^{24}-2^{24}) + 1) = \round(0 + 1) = 1$
    \item $\round(2^{24}-\round(2^{24} + 1) = \round(2^{24} - 2^{24}) = 0$
\end{enumerate}
\end{example}
Another reason for associativity to fail in floating-point is overflow and error behavior.
\begin{example}[Overflow]
    Let $f_{\max}$ be the largest representable number smaller than $\infty$ in a given floating-point arithmetic. Then $x = f_{\max} + 1 - 1$ evaluates to:
    \begin{enumerate}[label=(\roman*)]
    \item $\round(\round(f_{\max} + 1) - 1) = \round(\infty + 1) = \infty$
    \item $\round(f_{\max}+\round(1-1) = \round(f_{\max} + 0) = f_{\max} $
\end{enumerate}
\end{example}
For similar reasons, rounding also breaks distributivity, meaning that in general:
$$
\round((x\cdot \round(y + z)) \neq \round(\round(x\cdot y)+\round(x \cdot z))
$$
\begin{definition}[Magma]
A \defn{magma} is a set $S$ closed under a binary operation $\cdot$, \textbf{not} necessarily associative.
\end{definition}
The algebraic structure formed by fixed-precision floating-point arithmetic is an 
ordered magma, an algebraic object with almost no structure.\footnote{More specifically, it is an ordered commutative unital magma for $+$ and $\times$ and an ordered right-unital magma for $-$ and $\div$ when disregarding NaNs. Note that ordering is not strict due to absorption.}
\begin{definition}[Ordered magma]
An \defn{ordered magma} is a magma $(S,\cdot)$ together with a partial order $\leq$ on $S$ such that the binary operation is monotone in both arguments: for all $a,b,c \in S$,
\[
a \leq b \implies a \cdot c \leq b \cdot c
\quad\text{and}\quad
a \leq b \implies c \cdot a \leq c \cdot b.
\]
Equivalently, $\cdot: S \times S \to S$ is order-preserving with respect to the product order on $S \times S$.
\end{definition}

In fact, for composite computations in a magma to be well-defined at all, we need to fix an \defn{execution order}, such as always computing terms of the same precedence from left to right.

\begin{remark}
    Most expressivity analyses in the literature treat fixed-precision arithmetic, such as floating-point arithmetic, as essentially "Real arithmetic plus $\epsilon$", where all analyses are done over $\R$ and a small error is accounted for in the end.
    This \emph{does not work} for formal language theory results, because rounding errors cannot be kept arbitrarily small in the general case (e.g., catastrophic cancellation, \citet{goldberg1991every}). Furthermore, formal language theory handles arbitrarily long sequences, which means error accumulation can grow arbitrarily large for any non-zero marginal error.
\end{remark}

Note also that the failure of distributivity makes matrix multiplication non-associative in floating-point arithmetic, meaning that regroupings of matrix- or tensor products for computational efficiency will generally not be arithmetically equivalent.

\subsection{Integers}
Another common model is integer arithmetic (quantization), which has been used successfully for efficient RNN inference \cite{jacob2018quantization,kim2021ibert,sari2021irnn} and training \cite{wang2022niti, nia2023irnn}. 
In contrast to floating-point arithmetic, integer-only arithmetic represents numbers as fixed-width bit words that map directly to the set of integers $\mathbb{Z}$. 
In this model, a value $x\in\Z$ is represented by a bit word of length $p$ (the precision). 
Most modern hardware supports $p \in \{8, 16, 32, 64\}$.

\paragraph{Overflow behavior.}
For fixed-size integers, let $\wrap$ denote the map that sends any integer to the bounded interval of representable values.
For a fixed precision, integer arithmetic on standard hardware typically follows the rules of modular arithmetic, i.e., overflow is handled by \defn{wrap-around}. 
In this case, for a precision $p$, the result of an operation $\odot\in\{+,-,\times\}$ is:\looseness=-1
$$\wrap (x \odot y) = x \odot y \pmod{2^p}$$
A common alternative to handle overflow in integer arithmetic, especially in machine learning, is to \defn{clamp} to the lowest value $i_{\min}$ or the highest value $i_{\max}$ at the boundaries of the interval.
$$\wrap (x \odot y) = \min(\max(x\odot y,i_{\min}),i_{\max})$$

\paragraph{Associativity.}
The algebraic consequences of overflow handling in integer arithmetic depend sharply on whether wrap-around or clamping is used, and on whether the integers are signed or unsigned.

Wrap-around induces exact modular arithmetic over $\mathbb{Z}/2^p\mathbb{Z}$, where $p$ is the precision. In this case, addition and multiplication are associative and distributive, yielding a \defn{commutative ring} (for addition and multiplication) and hence a well-structured algebra.

Clamped arithmetic behaves fundamentally differently. For signed integers, clamping breaks associativity:
$$\wrap(\wrap (x \odot y)\odot z) \neq \wrap(x \odot \wrap (y\odot z))$$
in general, because intermediate results may saturate at $i_{\min}$ or $i_{\max}$ depending on evaluation order. Distributivity over multiplication also fails for the same reason. The resulting structure is only a \defn{commutative ordered unital bimagma}: closure and commutativity hold, but associativity and distributivity do not.

For unsigned integers, clamping preserves associativity of addition and multiplication, since saturation occurs only at a single boundary and cannot be crossed in the opposite direction. In this case, clamped arithmetic forms a \defn{commutative ordered bimonoid}: associativity and units are preserved, but inverses are absent.
In this case, the resulting algebra is a strictly ordered commutative magma.

Crucially, even when scalar associativity is preserved (as in unsigned clamped arithmetic), matrix associativity fails in all clamped settings. Matrix multiplication relies on distributivity to rearrange sums of products; clamping inside the scalar operations breaks this property, so in general
$$
(AB)C \neq A(BC)
$$
As a result, clamped integer arithmetic does not admit a well-defined matrix monoid without fixing an execution order, mirroring the situation for floating-point arithmetic.

\paragraph{Unbounded Integers}
Modern programming languages like Python include integer datatypes that can grow arbitrarily large.
This means there is no possibility for overflow and hence no need to define wrap-around or saturation behavior.
The resulting algebraic structure is simply the ring of integers $\Z$.
Note that unbounded integer representations of recurrent hidden states can grow linearly in the sequence length.

\subsection{Fixed-Point Precision}
Another common model is \defn{fixed-point} arithmetic. 
We will not focus on it here, as it does not introduce unique algebraic structures and is effectively an intermediate between floating-point and integer arithmetic. In this model, each number is represented as a scaled integer, with the radix point (decimal or binary) fixed in a consistent position across all values and operations. Results are rounded or truncated to ensure they remain within the representable set. Consequently, fixed-point inherits the rounding complexities and non-associativity of floating-point arithmetic; however, these issues are often more pronounced because the rounding error is absolute (constant) rather than relative to the magnitude of the value. Ultimately, it can be viewed as an extension of integer arithmetic, in which the unit of least precision is shifted by a fixed scaling factor.

\section{Example: Clamped Linear Recurrence}\label{app:nominally-linear}

The written recurrence of an RNN layer alone does not determine the implemented transition map. In particular, a source-level affine update may become nonlinear once finite-precision semantics are fixed. 
This is not merely a pathological possibility: in standard fixed-point and integer-only implementations, matrix products are typically accumulated in a wider register and then requantized or saturated when cast back to the target format, so clamping already appears as part of realistic execution semantics \citep{jacob2018quantization, kim2021ibert}. 
The crucial issue is therefore not whether clamping occurs, but where the clamp barriers occur in the staged implementation.

To make this explicit, compare the usual Elman recurrence (\cref{eq:elman-recurrence}, where we choose $\mathrm{ReLU}$ for $\sigma_\state$)
\begin{equation}\label{eq:ex-rnn-recfn}
f(\state_{\tstep-1},\insym_\tstep)=\mathrm{ReLU}(W_\state \state_{\tstep-1}+U_\state \insym_\tstep+b_\state),
\end{equation}
with the source-level affine update
\begin{equation}\label{eq:ex-ssm-recfn}
f(\state_{\tstep-1},\insym_\tstep) = A \state_{\tstep-1}+B \insym_\tstep+b_1+b_2+b_3,
\end{equation}
where $\state_{\tstep-1},\state_\tstep, b_i \in \R^d$ for $0\leq i\leq 3, d\in\R^+$, $\insym_t\in\R^n$, $W_\state, A\in \R^{d\times d}, U_\state, B\in\Z^{d\times n}$, and $t,n,d,q \in \Z^+$.

Syntactically, the latter update contains no nonlinearity. 
Now fix a consistent evaluation order and a staged clamped signed-integer semantics with coordinatewise saturation
\begin{equation}\label{eq:clamp}
    \wrap(z)=\max(\min(z,c),-c),
\end{equation}
for some constant $c\in\Z^+$. 
So, all values are integers and, at chosen points of evaluation, are clamped back to the range of allowed values $[-c,c]\subseteq \Z$ using \cref{eq:clamp}.

For the affine update of \cref{eq:ex-ssm-recfn}, we fix the following evaluation semantics:
Assume that the affine term $A \state_{\tstep-1}+B\insym_\tstep+b_1$ is first materialized in the target fixed integer format, and that the subsequent bias additions are then each performed successively in that same clamped format. The full implemented update is therefore:
\begin{equation*}
\state_\tstep=\wrap\Big(\wrap\big(\wrap(A \state_{\tstep-1}+B\insym_\tstep+b_1)+b_2\big)+b_3\Big).
\end{equation*}
The intermediate clamp barriers are part of the semantics.
The Elman update of \cref{eq:ex-rnn-recfn}, on the other hand, is computed with a single clamping step as
\begin{equation*}\label{eq:elman-clamp}
\state_\tstep=\wrap\big(
\mathrm{ReLU}(W_\state \state_{\tstep-1}+U_\state \insym_\tstep+b_\state)\big),    
\end{equation*}

\begin{proposition}\label{prop:linear-relu}
Let the recurrent update be as in \cref{eq:ex-rnn-recfn}.
Choose $A=W_\state, B=U_\state, b_1=b_\state, b_2=-c\mathbf{1}_d,$ and $b_3=c\mathbf{1}_d$, where $\mathbf{1}_d$ is a $d$-dimensional vector of 1s, and $c$ is the constant from \cref{eq:clamp}.

Let $v = W_\state \state_{\tstep-1}+U_\state \insym_\tstep+b_\state$. 
If $v\in[-c,c]^d$, then the staged clamped affine recurrence computes $\state_\tstep=\wrap\big(\text{ReLU}(v)\big)$ exactly, coordinatewise.
\end{proposition}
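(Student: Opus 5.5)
The argument is a direct coordinatewise computation. With the parameter choices of the statement, the first stage of the staged clamped affine recurrence materializes $\wrap(A\state_{\tstep-1}+B\insym_\tstep+b_1)=\wrap(v)$. Under the hypothesis $v\in[-c,c]^d$, this first clamp is the identity, so the first stage yields exactly $v$. Every operation involved (adding $b_2$, adding $b_3$, clamping, ReLU) acts coordinatewise. It therefore suffices to fix one coordinate $v_i\in[-c,c]$ and show that
\[
\wrap\bigl(\wrap(v_i-c)+c\bigr)=\wrap\bigl(\mathrm{ReLU}(v_i)\bigr).
\]

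For the second stage, I note that $v_i-c\in[-2c,0]$, so the upper clamp is inactive and $\wrap(v_i-c)=\max(v_i-c,-c)$. I then split into two cases according to the sign of $v_i$.

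\emph{Case $v_i\ge 0$.} Here $v_i-c\ge -c$, so the clamp does nothing and the second stage gives $v_i-c$. The third stage gives $\wrap(v_i)=v_i$, since $v_i\in[0,c]$. This equals $\wrap(\mathrm{ReLU}(v_i))=\wrap(v_i)=v_i$.

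\emph{Case $v_i<0$.} Here $v_i-c<-c$, so the second stage saturates at $-c$. Adding $c$ gives $0$, and the third stage gives $\wrap(0)=0$. This equals $\wrap(\mathrm{ReLU}(v_i))=\wrap(0)=0$.

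Combining the two cases gives $\wrap(\wrap(v_i-c)+c)=\max(v_i,0)$, and since $\max(v_i,0)\in[0,c]$ this is also $\wrap(\mathrm{ReLU}(v_i))$. Assembling the coordinates yields $\state_\tstep=\wrap(\mathrm{ReLU}(v))$.

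There is no real obstacle here. The only point needing care is the first stage: the hypothesis $v\in[-c,c]^d$ is used precisely to make the first clamp inactive. Without it, a coordinate $v_i>c$ would be clamped to $c$ before the shift and the case analysis would change. I will also remark that all intermediate values are integers, because $c\in\Z^+$ and $v$ is integer-valued under the integer semantics, so the clamp $\wrap$ from \cref{eq:clamp} is applied to integers as intended.
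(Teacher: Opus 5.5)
Your proposal is correct and follows essentially the same route as the paper: fix a coordinate, use $v\in[-c,c]^d$ to make the first clamp inactive, then case-split on the sign of $v_i$ to see that the second clamp either saturates at $-c$ (yielding $0$ after adding $c$) or is inactive (yielding $v_i$). The only difference is cosmetic: you place the boundary case $v_i=0$ in the nonnegative branch rather than the nonpositive one, which changes nothing.
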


\begin{proof}
Fix a coordinate $1\leq i\leq d$. Since $v_i\in[-c,c]$, the first clamp is inactive, so $\wrap(v_i)=v_i$. At the second stage,\looseness=-1
$$
\wrap(v_i-c)=
\begin{cases}
-c & \text{if } v_i\le 0,\\
v_i-c & \text{if } v_i>0,
\end{cases}
$$
because $v_i-c\le -c$ in the first case and $v_i-c\in(-c,0]$ in the second. After adding $c$ and clamping once more, we obtain
$$
\wrap(\wrap(v_i-c)+c)=
\begin{cases}
0 & \text{if } v_i\le 0,\\
v_i & \text{if } v_i>0.
\end{cases}
$$
Hence the output is $\max(v_i,0)=\wrap\big(\text{ReLU}(v_i)\big)$. 
\end{proof}

Intuitively, the gadget ``subtract $c$, clamp, add $c$'' turns the lower saturation boundary into a zeroing threshold: negative values are pushed to $-c$ and return as $0$, while positive values remain in the unsaturated regime and survive unchanged.

\begin{corollary}
Any DFA simulation theorem for ReLU Elman recurrences using bounded-range integers transfers to this staged clamped affine model.
\end{corollary}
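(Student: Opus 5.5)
The plan is to obtain the corollary from \cref{prop:linear-relu} by a simulation argument at the level of transition maps. The goal is to show that the two algebraic cores, the clamped ReLU Elman core and the staged clamped affine core with $A=W_\state$, $B=U_\state$, $b_1=b_\state$, $b_2=-c\mathbf{1}_d$, $b_3=c\mathbf{1}_d$, have identical induced transitions $\recfn_\insym$ on all states the DFA simulation ever visits. Once that holds, the recognized languages coincide.

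\emph{Step 1: extract the range hypothesis.} Any DFA simulation theorem for ReLU Elman recurrences over bounded-range integers supplies finitely many parameters together with a finite set $S \subseteq [-c,c]^d$ of reachable (or encoded) hidden states and a finite set of encoded inputs $\enc(\alphabet)$. We will need that the pre-activation $v = W_\state \state + U_\state \insym + b_\state$ lies in $[-c,c]^d$ for every $\state \in S$ and every $\insym \in \enc(\alphabet)$. This is the main obstacle, because the proposition only applies when $v$ is in range.

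I would handle it as follows. Let $K$ be the maximum of $\|v\|_\infty$ over the finite set $S \times \enc(\alphabet)$. If $K > c$, we enlarge the clamp constant to $c' \ge K$. Reachable states of the Elman simulation already lie in $[0,c]^d$, so the extra headroom does not change the Elman dynamics on $S$. For typical constructions with $\{0,1\}$-valued one-hot states this simply means choosing $c$ at least the maximal pre-activation magnitude. If the cited theorem instead relies on saturation of the outer clamp, I would first note that $\wrap(\mathrm{ReLU}(v))$ equals $\mathrm{ReLU}(v)$ whenever $c' \ge K$. Hence the Elman transition on $S$ is unchanged either way, and the proposition is applied with $c'$.

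\emph{Step 2: equality of transitions.} Applying \cref{prop:linear-relu} coordinatewise gives, for each $\state \in S$ and $\insym \in \enc(\alphabet)$, that the staged affine update equals the Elman update. The evaluation order is fixed, so the update is recurrence-consistent in the sense of \cref{def:rec-consistent}. By induction on word length, starting from the common initial state $\initstate$, both cores therefore produce the same state sequence and stay inside $S$.

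\emph{Step 3: transfer the language.} Use the same readout, wiring $\wiremap_\bullet$, accepting core and encoder for both models. The map $\Phimon_\acceptor$ then agrees on $\alphabet^*$ along the orbit of $\initstate$, so $\Lang{\acceptor}$ is identical for the two models. Equivalently, restricted to $S$, the realized transition monoids $\monoid_{\augrnn}^{\enc(\alphabet)}$ coincide, and so do the divisibility statements in \eqref{eq:langmon-divides-wreath}. Consequently every language that the ReLU construction recognizes, in particular every regular language given by a DFA simulation, is recognized by the staged clamped affine model.
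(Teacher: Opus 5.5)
Your route is the paper's: apply \cref{prop:linear-relu} on the states the simulation actually visits, conclude that the two cores induce the same transition system, and hence that they recognize the same language. Steps 2 and 3 are fine. The failure is in Step 1, in precisely the case you flag. Suppose the cited construction relies on saturation of the outer clamp. That is, some reachable pair has a coordinate with $v_i > c$, and the simulation uses the value $\wrap(\mathrm{ReLU}(v_i)) = c$; an example is a clamp at $c=1$ used as a Boolean threshold on pre-activations that can reach $2$. Then $K > c$, and you replace $c$ by $c' \ge K$. But with clamp constant $c'$, the Elman update on that pair is $\mathrm{ReLU}(v_i) > c$, not $c$. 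Your observation that $\wrap(\mathrm{ReLU}(v))=\mathrm{ReLU}(v)$ once $c' \ge K$ says exactly that the saturation has been destroyed. So ``hence the Elman transition on $S$ is unchanged either way'' does not follow. The transition changes, the reachable set $S$ on which $K$ was computed changes with it, and the DFA simulation need no longer hold. The enlargement is harmless only when every out-of-range pre-activation is negative, and even then it silently changes the bit width of the target model.

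The repair is to drop the enlargement altogether, because the range hypothesis of \cref{prop:linear-relu} is superfluous. The first stage already clamps. Set $u_i \defeq \wrap(v_i) \in [-c,c]$. Then $u_i - c \in [-2c, 0]$, so $\wrap(u_i - c) = \max(u_i, 0) - c$, and adding $c$ back gives $\max(u_i,0) \in [0,c]$ with the last clamp inactive. Checking the cases $v_i > c$, $0 \le v_i \le c$, $-c \le v_i < 0$, and $v_i < -c$ gives $\max(\wrap(v_i), 0) = \min(\max(v_i,0), c) = \wrap(\mathrm{ReLU}(v_i))$ for every integer $v_i$. So with the original $c$, the staged affine update equals the clamped Elman update on the whole state space, saturation included. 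The transition monoids then coincide outright, not only after restriction to $S$, and your Steps 2 and 3 carry over to any bounded-range construction. This also supplies what the paper's one-line appeal to ``the same bounded working region'' tacitly assumes.
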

\begin{proof}
Classical threshold-network constructions, dating back to Minsky, show how finite dynamics can be embedded in recurrent networks via thresholding. 
For the bounded finite-precision setting relevant here, \citet{korsky2019computationalpowerrnns} show constructively that finite-precision single-layer ReLU recurrences are exactly as computationally powerful as deterministic finite automata, using only integers in a bounded range. 
By \cref{prop:linear-relu}, each such ReLU update can be replaced, within the same bounded working region, by the staged clamped affine update above. Therefore, the same hidden-state transition system, and hence the same DFA simulation, is realized. 
\end{proof}
Note that this only apparently contradicts prior limitations for linear recurrences \citep[e.g.,][Theorem 4.2]{merrill2024illusion}. Those results assume computational semantics under which the recurrence remains genuinely linear and also follows a recurrence-consistent evaluation strategy. By contrast, we show that whether a recurrence is genuinely linear cannot be read off from the written formula alone. Once finite-precision clamping is part of the operational semantics, a nominally affine recurrence may already contain the thresholding nonlinearity on which classical automata simulations rely.

\section{Krohn--Rhodes Theorem}\label{app:kr}

Here, we briefly state a transformation-monoid version of the Krohn--Rhodes theorem
\citep{krohn-rhodes-1965-algebraic}.

We first recall the two kinds of prime transformation monoids that occur in the
decomposition.

\begin{definition}
    The \defn{flip-flop} is the transformation monoid
    $$
        \FlipFlop \defeq (U_3, \{\state_a,\state_b\}),
    $$
    where $U_3 \defeq \{\id,a,b\}$ consists of the identity transformation together with the two constant transformations
 $$\state \cdot a = \state_a,
        \qquad
        \state \cdot b = \state_b, 
        \qquad \forall \state\in\{\state_a,\state_b\}$$
    Its transition monoid has the following multiplication table:
    $$
    \begin{array}{c|ccc}
       U_3  & \id & a & b \\
       \hline
        \id & \id & a & b \\
        a & a & a & b \\
        b & b & a & b
    \end{array}
    $$
    where composition is read in the convention used throughout this paper.
\end{definition}

\begin{definition}
    A \defn{prime transformation group} is a transformation monoid $(G, G)$
    where $G$ is a nontrivial finite simple group acting faithfully on itself by right multiplication.
\end{definition}

Thus, the prime transformation groups are exactly
$$
    \{(G,G)\mid G \text{ is a nontrivial finite simple group}\}.
$$
Equivalently, the possible groups $G$ are the non-abelian finite simple groups together with the cyclic groups $\cyclic{p}$ of prime order $p$.

\begin{theorem}[Krohn--Rhodes]\label{thm:kr}
    Let $(\monoid, \states)$ be a finite transformation monoid. Then there exist transformation monoids
    $$
        (\monoid_1, \states_1),\ldots,(\monoid_k, \states_k)
    $$
    such that each $(\monoid_i,\states_i)$ is either the flip-flop $\FlipFlop$ or a prime transformation group $(G_i,G_i)$, and
    $$
        (M,X) \prec (\monoid_1,\states_1) \wr (\monoid_2,\states_2) \wr \cdots \wr (\monoid_k,\states_k).
    $$
    That is, every finite transformation monoid divides an iterated wreath product of flip-flops and prime transformation groups.
\end{theorem}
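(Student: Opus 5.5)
The plan is the classical holonomy-free proof of Krohn--Rhodes: first reduce to the semigroup version, then induct on the pair $(|\states|, |\monoid|)$, splitting by Green's relations. Every finite transformation monoid divides the full transformation monoid $(\states^\states, \states)$. Division is transitive and compatible with $\wr$ on both sides; this follows directly from \cref{def:wreath} and \cref{def:division}. So it suffices to decompose suitable generators and then compose the decompositions. The base objects are as follows. The trivial monoid divides everything. A group $(G,G)$ divides an iterated wreath product of its composition factors: via a composition series $G = G_0 \triangleright G_1 \triangleright \cdots$ one has the Kaloujnine--Krasner embedding $G \hookrightarrow G/G_1 \wr G_1$, and iterating this produces prime transformation groups. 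Finally, $U_3$ and the reset monoids on $n$ points divide iterated wreath products of flip-flops. For the reset monoids I would use a binary encoding of the $n$ states with $\lceil \log_2 n\rceil$ parallel flip-flops; parallel composition embeds in a wreath product with constant $\phi$.

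For the inductive step I would use the Rhodes--Zeiger ``ideal/cyclic'' decomposition. Consider the generating set of $\monoid$. The first case is when every generator acts as a permutation. Then $\monoid$ is a group $G$ acting on $\states$, and $(G,\states)$ divides $(G,G)$ by the orbit map. So that case reduces to groups.

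Otherwise some generator $a$ is not a permutation, so $\states a \subsetneq \states$. Then I would write $\monoid$ as generated by $\langle a\rangle$ together with the rest. I would use Zeiger's lemma (Eilenberg, Vol.~B) in the form: $(\monoid,\states)$ divides $(\monoid',\states') \wr (\monoid'',\states'')$, where $(\monoid',\states')$ is built from the action on the smaller set $\states a$ together with a reset, and $(\monoid'',\states'')$ is the monoid generated by the remaining generators acting on a coarser structure. Both pieces are strictly smaller in the lexicographic order on $(|\states|,\text{number of generators})$, so the induction hypothesis applies to each. The wreath products of their decompositions are then reassociated using \cref{rem:wreath-associativity}.

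The main obstacle is setting up this splitting precisely so that the controlling transformation monoid and the controlled one are both genuinely smaller. The cleanest route I would try first is the holonomy decomposition. Take the image sets $\states m$ for $m \in \monoid$, ordered by inclusion and grouped into equivalence classes under mutual reachability by $\monoid$. Each class contributes a holonomy permutation group, which gives the group factors, plus set-wise resets, which give the flip-flop factors. The cascade over height levels then yields the covering.

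Verifying that the cascade covers $(\monoid,\states)$ is the technical core. That means checking a surjective compatible morphism from a submonoid, as in \cref{def:division}, using tiling and cover-choice functions. It is routine but long, so I would cite Eilenberg (1976) and Ginzburg for the bookkeeping, and only verify the left-wreath convention of \cref{def:wreath}. As the paper notes for the left and right conventions, that check amounts to reversing operand order.
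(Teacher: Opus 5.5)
The paper gives no proof of this theorem; it only quotes the classical result of Krohn and Rhodes. Your proposal therefore has to stand on its own, and it has a genuine gap in the inductive step, which is where all the content of the theorem lies.

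The ``Zeiger's lemma'' of Case~2 cannot be checked or cited in the form you state it:
\begin{enumerate}[label=(\roman*)]
\item The remaining generators $B$ need not preserve $\states a$, so they have no ``action on $\states a$''.
\item The ``coarser structure'' on which $\genmon{B}$ acts is never specified.
\item No covering map from $\states'\times\states''$ onto $\states$ is given.
\end{enumerate}
Take the case you reduced to: $\monoid=\states^\states$, generated by $\mathrm{Sym}(\states)$ and one idempotent $a$ of rank $|\states|-1$. There $\genmon{B}=\mathrm{Sym}(\states)$ acts primitively, so it has no proper nontrivial quotient of $\states$ to act on. The holonomy fallback is a legitimate classical route, but you delegate its verification, which is essentially the whole theorem, to Eilenberg and Ginzburg. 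Even granting it, each holonomy level is a permutation group with all constants adjoined, acting on a single set. Saying ``set-wise resets give the flip-flop factors'' skips the step that separates the constants from the group.

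Case~1 also has an error. As transformation monoids, $(G,\states)\divides(G,G)$ via $g\mapsto \state_0 g$ holds only when the action is transitive. For example, $\cyclic{2}$ swapping two of three points admits no surjection $G\to\states$. You need an extra factor indexing the orbits: the trivial monoid on $r$ points, which divides a product of flip-flops.

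Your own reduction to $(\states^\states,\states)$ gives an elementary complete proof by induction on $n=|\states|\ge 2$.
\begin{enumerate}[label=(\roman*)]
\item Let $\mathcal{B}$ be the set of $(n-1)$-element subsets of $\states$. Fix bijections $\beta_B\colon\{1,\dots,n-1\}\to B$ and set $\theta(B,y)=\beta_B(y)$, a total surjection onto $\states$.
\item Lift $t\in\states^\states$ to $(k_t,\phi_t)$. If $t$ is a permutation, put $B\cdot k_t=Bt$; otherwise let $k_t$ be the constant map to some $C\supseteq\states t$. In both cases put $y\cdot\phi_t(B)=\beta_{B\cdot k_t}^{-1}(\beta_B(y)\,t)$.
\item Then $\theta((B,y)\cdot(k_t,\phi_t))=\theta(B,y)\,t$. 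Here $\phi_t$ reads the old $B$, exactly as the left convention of \cref{def:wreath} requires.
\item Since $\mathcal{B}\cong\states$ equivariantly via complements, $(\states^\states,\states)\divides\overline{(\mathrm{Sym}(\states),\states)}\wr(T_{n-1},\{1,\dots,n-1\})$. The bar adjoins all constant maps, and $T_{n-1}$ is the full transformation monoid.
\item For transitive $G$, the orbit map gives $\overline{(G,\states)}\divides\overline{(G,G)}$, with constants lifting to constants.
\item Let $R_m$ be the reset monoid on $m$ points. Then $\overline{(G,G)}\divides(G,G)\wr R_{|G|}$. Encode the state as $uv$ with $v$ in the group layer. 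A letter $g$ acts by $v\mapsto vg$. A constant $c_y$ fixes $v$ and resets $u$ to $yv^{-1}$, which depends only on the old $v$.
\end{enumerate}
Your base cases then finish the proof: Kaloujnine--Krasner along a composition series, and the binary encoding of $R_m$ into flip-flops. Combine them with transitivity of $\divides$ and its compatibility with $\wr$. No Zeiger or holonomy machinery is needed.
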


\section{Case Study: Diagonal State Space Models}\label{sec:case-study}
This section instantiates the algebraic theory of \cref{sec:algebraic-rnns} on diagonal SSMs with input-dependent transitions, in the spirit of \citet{sarrof2024expressive}.
The same idealized recurrence is analyzed under several arithmetic models, and the resulting expressivity differs from one model to the next.

\subsection{Architecture specification}\label{sec:ssm-arch}
The diagonal SSM is an algebraic semi-RNN in the sense of \cref{def:algebraic-rnn}.
To make the translation between the two transparent, the symbols of the algebraic core (\cref{def:algebraic-core}) are reused verbatim: the state, input, and readout sets are $\hidset_\layer, \inset_\layer, \outset_\layer$, the transition and readout maps are $\recfn_\layer, \outfn_\layer$, and the wiring is $\wiremap_\layer$.
Every algebraic statement of \cref{sec:algebraic-rnns} thus translates to the present setting by direct substitution.

Fix integers $\nlayers, \hiddim, \seqdim \in \Z^{+}$, where $\nlayers$ is the depth, $\hiddim$ the hidden dimension, and $\seqdim$ the model dimension.
For each layer $\layer \in \{1, \dots, \nlayers\}$, set $\hidset_\layer = \R^{\hiddim}$ and $\inset_\layer = \outset_\layer = \R^{\seqdim}$.
Write $\diag_\hiddim \subseteq \R^{\hiddim \times \hiddim}$ for the ring of diagonal $\hiddim \times \hiddim$ real matrices, and identify a diagonal matrix with its diagonal vector, $\diag_\hiddim \cong \R^{\hiddim}$. 
Write $\diag_\hiddim^+ \cong \diag_\hiddim$ for the sub-semiring of diagonals with nonnegative values. 

The layer is parameterized by three learned maps, not assumed linear,
$$
A_\layer \colon \R^{\seqdim} \to \diag_\hiddim, \qquad
B_\layer \colon \R^{\seqdim} \to \R^{\hiddim \times \seqdim}, \qquad
C_\layer \colon \R^{\seqdim} \to \R^{\seqdim \times \hiddim},
$$
assigning to each input the diagonal transition matrix, the input matrix, and the readout matrix.
The recurrence and readout maps $\recfn_\layer \colon \hidset_\layer \times \inset_\layer \to \hidset_\layer$ and $\outfn_\layer \colon \hidset_\layer \times \inset_\layer \to \outset_\layer$ are, for every $\tstep \ge 1$, every $\hidsym_{\tstep-1} \in \R^{\hiddim}$, and every $\insym_\tstep \in \R^{\seqdim}$,
\begin{align}
    \recfn_\layer(\hidsym_{\tstep-1}, \insym_\tstep)
      &= A_\layer(\insym_\tstep)\, \hidsym_{\tstep-1} + B_\layer(\insym_\tstep)\, \insym_\tstep,
      \label{eq:ssm_recurrence}\\
    \outfn_\layer(\hidsym_\tstep, \insym_\tstep)
      &= C_\layer(\insym_\tstep)\, \hidsym_\tstep.
      \label{eq:ssm_readout}
\end{align}
The wiring map $\wiremap_\layer \colon \inset_{\layer-1} \times \outset_{\layer-1} \to \inset_{\layer}$ is, for $\insym \in \R^{\seqdim}$ and $\outsym \in \R^{\seqdim}$,
$$
\wiremap_\layer(\insym, \outsym) = \rmsnorm\!\bigl( \relu(\insym\, M)\, \outsym \bigr) + \insym,
$$
where $M \in \R^{\seqdim \times \seqdim}$ is a parameter, $\epsilon \in \R_{>0}$ is a fixed stabilization constant, and, for $1 \le i \le \seqdim$,
$$
\relu(x)_i = \max(0, x_i), \qquad
\rmsnorm(x)_i = \frac{x_i}{\sqrt{\tfrac{1}{\seqdim}\sum_{j=1}^{\seqdim} x_j^2 + \epsilon}}.
$$
The encoder and decoder are set maps $\enc \colon \alphabet \to \R^{\seqdim}$ and $\dec \colon \R^{\seqdim} \to \{0, 1\}$, where $\alphabet$ is the finite input alphabet.
The encoder factors through the one-hot embedding $\alphabet \hookrightarrow \R^{|\alphabet|}$ followed by a linear map $\R^{|\alphabet|} \to \R^{\seqdim}$; this is a factorization, not a linearity assumption on $\enc$, since $\alphabet$ carries no vector-space structure.

\begin{definition}[Nonnegative diagonal SSM]\label{def:nonnegative}
A diagonal SSM is \defn{nonnegative} if every transition matrix it produces has nonnegative entries, that is, $A_\layer \colon \R^{\seqdim} \to \diag_\hiddim^+$,  for every $\layer$.\footnote{This is close to Mamba, with $\relu$ in place of SiLU and no convolution across time.}
\end{definition}

\subsection{Discretization}\label{sec:ssm-discretization}

The expressivity question is well-posed only after the numerical semantics that turns each symbolic recurrence into a total function has been fixed.
Fix an arithmetic model $\arith$ (\cref{def:arith-model}) with finite representable set $\states_\arith \subseteq \R$, so that the layer state set is $\states_\layer^{\arith} = \states_\arith^{\hiddim}$ and the encoder takes values in $\states_\arith^\seqdim$.
For diagonal SSM updates, the evaluation tree of \cref{eq:ssm_recurrence} is fixed coordinatewise (\cref{def:rec-consistent}): at coordinate $i$, the multiplication $A_\layer(\insym)_{i,i}\,\state_i$ is evaluated before the addition with the precomputed constant $b_i^\arith(\insym) \defeq \llbracket B_\layer(\insym)\,\insym \rrbracket_\arith\big|_i \in \states_\arith$, and \cref{def:arith-model} prescribes a rounding at each of these two nodes:
$$
\llbracket \recfn_\layer(\state, \insym) \rrbracket_\arith \big|_i = \round\!\bigl( \round( A_\layer(\insym)_{i,i}\, \state_i ) + b_i^\arith(\insym) \bigr).
$$
The induced map
$$
(F_\layer)_\insym^{\arith} \colon \states_\layer^{\arith} \to \states_\layer^{\arith}, \qquad
(F_\layer)_\insym^{\arith}(\state) \defeq \llbracket \recfn_\layer(\state, \insym) \rrbracket_{\arith},
$$
is total for every admissible input $\insym \in \inset_\layer^\arith \defeq \enc(\alphabet) \subseteq \states_\arith^\seqdim$, on the standing assumption that every realizable parameter choice yields a self-map of $\states_\arith$, producing in particular no NaN.
The core transition monoid is
$$
\monoid_\layer^{\arith} \defeq \genmon{(F_\layer)_\insym^{\arith} : \insym \in \inset_\layer^\arith} \submon (\states_\layer^{\arith})^{\states_\layer^{\arith}},
$$
and the composition of two steps with inputs $\insym_1, \insym_2$ coincides with the function composition $(F_\layer)_{\insym_2}^{\arith} \circ (F_\layer)_{\insym_1}^{\arith}$.

\paragraph{Arithmetic models.}
Two models are used below.
$\fp$ denotes a fixed finite floating-point format of IEEE type, with its induced finite set of representable values.
$\integ^p$ denotes unsigned integer arithmetic with wraparound at $p$ bits, that is, the ring $\Z/2^p\Z$.

\paragraph{Reduction to one coordinate.}
Since $A_\layer(\insym)$ is diagonal, the update acts coordinatewise: for every $\insym \in \inset_\layer^\arith$ and $1 \le i \le \hiddim$, the $i$-th coordinate evolves under the one-dimensional affine self-map
$$
F_{a,b}^\arith \colon \states_\arith \to \states_\arith, \qquad
\state \mapsto \round\!\bigl( \round(a\state) + b \bigr),
$$
with $a \defeq A_\layer(\insym)_{i,i} \in \states_\arith$ and $b \defeq b_i^\arith(\insym) \in \states_\arith$.
Let $\Pi_i \defeq \{ (A_\layer(\insym)_{i,i},\, b_i^\arith(\insym)) : \insym \in \inset_\layer^\arith \} \subseteq \states_\arith \times \states_\arith$ collect the parameter pairs realizable at coordinate $i$, and let $\monoid_i \defeq \genmon{F_{a,b}^\arith : (a,b) \in \Pi_i} \submon \states_\arith^{\states_\arith}$ be the monoid they generate.
The coordinatewise action embeds the core monoid in the product, $\monoid_\layer^\arith \submon \monoid_1 \times \cdots \times \monoid_\hiddim$.

The one-dimensional analysis serves two distinct purposes.
To exhibit a group inside a core, it suffices to display a single pair $(a,b) \in \Pi_i$, frozen on the other coordinates; this is an existence statement and uses a single generator (\cref{lem:fp-c2,lem:wrap-cyclic}).
To bound the group content of a core, one must control the whole monoid $\monoid_i$, hence all words in its generators; the bound holds word by word because the governing invariant, monotonicity or the parity of the number of negative multipliers, is stable under composition (\cref{lem:fpplus-aperiodic,lem:fp-c2-only}).

\subsubsection{Floating point}

\begin{definition}\label{def:aperiodic}
A finite monoid $\monoid$ is \defn{aperiodic} if there exists $N \in \N$ with $m^N = m^{N+1}$ for all $m \in \monoid$.
The class of finite aperiodic monoids is the pseudovariety $\aperiodic$.\footnote{For a definition of pseudovarieties, see \citet{eilenberg1976pseudovarieties}.}
\end{definition}

\begin{restatable}[Nonnegative floating-point updates are aperiodic]{lemma}{nfpaperiodic}\label{lem:fpplus-aperiodic}
Fix $\arith = \fp$ with NaNs excluded and a recurrence-consistent evaluation order.
If every realizable one-dimensional update at a coordinate has a nonnegative multiplier, then the monoid $\monoid_i$ it generates is aperiodic.
\end{restatable}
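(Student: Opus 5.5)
The plan is to prove that every element of $\monoid_i$ is an order-preserving self-map of a finite totally ordered set, and then to show that order-preserving self-maps of a finite chain satisfy $f^N = f^{N+1}$ with $N = |\states_\arith|$. Throughout, $\states_\arith$ carries the order inherited from $\R \cup \{\pm\infty\}$; NaN is excluded by hypothesis.

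\emph{Step 1: each generator is monotone.} Take $F_{a,b}^\arith(\state) = \round(\round(a\state) + b)$ with $a \ge 0$. Each of the three stages is nondecreasing:
\begin{itemize}
\item $\state \mapsto a\state$ is nondecreasing in $\R$ because $a \ge 0$, and it also behaves correctly at $\pm\infty$ for $a>0$; the case $a = 0$ at infinity is exactly the NaN case, which is excluded.
\item Round-to-nearest is a nondecreasing map from the extended reals onto $\states_\arith$, including overflow to $\pm\infty$. Ties-to-even does not break this, since a tie point lies strictly between its two neighbours.
\item $u \mapsto u + b$ is nondecreasing, and the outer $\round$ is nondecreasing again.
\end{itemize}
Hence every generator $F_{a,b}^\arith$ with $(a,b) \in \Pi_i$ is order-preserving. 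Since composites of order-preserving maps are order-preserving, every $m \in \monoid_i$ is order-preserving. This is where the recurrence-consistent evaluation order is used: every word acts as a composite of these fixed generators.

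\emph{Step 2: a monotone self-map of a finite chain is aperiodic.} Let $f$ be order-preserving and let $x \in \states_\arith$. If $f(x) \ge x$, then applying $f$ repeatedly gives $f^{k+1}(x) \ge f^k(x)$ for all $k$. If $f(x) \le x$, the same argument gives a nonincreasing sequence. Either way, the orbit of $x$ is monotone in a finite chain, so it stabilises after at most $|\states_\arith|-1$ steps. Thus $f$ has no cycles of length greater than $1$, and $f^N(x) = f^{N+1}(x)$ for every $x$ with $N = |\states_\arith|$. Since $N$ is uniform over all of $\monoid_i$, Definition~\ref{def:aperiodic} applies and $\monoid_i$ is aperiodic.

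\emph{Main obstacle: signed zeros.} $+0$ and $-0$ are distinct elements of $\states_\arith$ but compare equal. So $\states_\arith$ is only totally \emph{preordered}, and in principle a map could swap $\pm 0$ and create a $2$-cycle inside one equivalence class.

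The first thing I would try is to refine the order to $-0 < +0$ and re-check Step 1 against the IEEE sign rules:
\begin{itemize}
\item For $a > 0$, the product $a\cdot(\pm 0)$ keeps its sign.
\item The sum $z + b$ with $b \ne 0$ does not depend on the sign of $z$.
\item For $b = \pm 0$, the round-to-nearest rule that $-0 + (-0) = -0$ and otherwise $+0$ is monotone in $z$.
\item The multiplier $a = +0$ gives a constant map, apart from the excluded infinities.
\end{itemize}
A multiplier $a = -0$ is the delicate case, since it flips the sign of zero. I would exclude it by reading ``nonnegative'' as excluding $-0$.

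If that reading is not available, I would use the fallback argument instead. Order-preservation up to the preorder still forces each orbit to be eventually confined to a single class. The only nontrivial class is $\{-0,+0\}$, and on it $f$ acts either as a constant or as a map determined by the fixed sign of $b$; a swap would need $a = -0$ and $b = \pm 0$ simultaneously. In the remaining case one checks directly that $f^2 = f^3$ on that class. Either route gives a uniform exponent, which completes the proof.
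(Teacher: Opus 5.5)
Your Steps 1 and 2 are the paper's proof. Each stage of $\round(\round(a\state)+b)$ is monotone for $a\ge 0$; the paper cites a monotonicity result for IEEE rounding where you argue stage by stage. Composites of monotone maps stay monotone, and orbits of a monotone self-map of a finite chain are monotone and stabilise, which gives the uniform exponent $N=|\states_\fp|$. The paper never discusses signed zeros; it just asserts a ``usual total order''. Your refinement $-0<+0$ is therefore extra care rather than a different route. For strictly positive multipliers your sign-rule check is correct. One small slip: $a=+0$, $b=-0$ is not constant, since it sends $\state\le -0$ to $-0$ and $\state\ge +0$ to $+0$. It is still monotone in the refined order, so nothing breaks.

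Your fallback, however, is wrong and should be dropped. Take $a=b=-0$. This generator sends every $\state\ge +0$ to $-0$ and every $\state\le -0$ to $+0$. Its powers then alternate ($F^{k+2}=F^k\neq F^{k+1}$ for $k\ge 1$), so $\{F,F^2\}\cong\cyclic{2}$ would sit inside $\monoid_i$. No direct check of $f^2=f^3$ can rescue this, and your fallback only observes that a swap needs these parameters without excluding them.

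What excludes them is not a reading of ``nonnegative'' but the paper's standing assumption. The state set $\states_\fp$ contains $\pm\infty$, and realizable parameters must produce no NaN. Hence $a=\pm 0$ is impossible, because $0\cdot\infty$ is NaN. Likewise $b=\pm\infty$ is impossible, because it yields $\infty-\infty$ at $\state=\mp\infty$. Every realizable multiplier is therefore strictly positive and finite, and your refined-order argument then applies unconditionally.
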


\begin{proof}
Let $(\states_\fp, \le)$ be the finite set of IEEE~754 values, including $\pm\infty$ and excluding NaNs, with its usual total order.
Each generator has the form $F_{a,b}^\fp(\state) = \round(\round(a\state) + b)$ with $a \ge 0$.
IEEE~754 rounding preserves monotonicity for addition and multiplication under the fixed semantics \citep[Thms.~II.1, II.3]{mikaitis2024monotonicity}, so multiplication by $a \ge 0$, addition of $b$, and both roundings are order-preserving, whence each generator is an order-preserving self-map of the finite chain.
A composition of order-preserving maps is order-preserving, so every $m \in \monoid_i$ is order-preserving.
An order-preserving self-map of a finite chain has no nontrivial cycle: for any $\state$, the sequence $(\state \cdot m^k)_{k \ge 1}$ is monotone in a finite set, hence stabilizes after at most $N = |\states_\fp|$ steps, giving $m^N = m^{N+1}$.
Therefore $\monoid_i \in \aperiodic$.
\end{proof}

\begin{corollary}[Nonnegative floating-point cores are aperiodic]\label{cor:aperiodic}
Every nonnegative diagonal SSM core over $\fp$, with NaNs excluded and fixed evaluation semantics, has an aperiodic transition monoid.
\end{corollary}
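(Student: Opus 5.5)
The corollary follows from \cref{lem:fpplus-aperiodic} and the coordinatewise embedding $\monoid_\layer^\fp \submon \monoid_1 \times \cdots \times \monoid_\hiddim$ recorded in the paragraph on reduction to one coordinate. The only extra fact needed is that aperiodicity is inherited by finite direct products and by submonoids. These are the closure properties of the pseudovariety $\aperiodic$, but I will verify them directly from \cref{def:aperiodic} rather than cite them.

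First, I fix a nonnegative diagonal SSM core $\layer$ over $\fp$, with NaNs excluded and the coordinatewise evaluation tree of \cref{sec:ssm-discretization}. By \cref{def:nonnegative}, every realizable multiplier $A_\layer(\insym)_{i,i}$ is nonnegative for every admissible input $\insym$ and every coordinate $i$. So every pair $(a,b)\in\Pi_i$ has $a\ge 0$. This is exactly the hypothesis of \cref{lem:fpplus-aperiodic}, which then gives an $N_i$ with $m^{N_i}=m^{N_i+1}$ for all $m\in\monoid_i$. The proof of that lemma shows one may take $N_i = |\states_\fp|$ uniformly in $i$.

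Second, I set $N=\max_i N_i$ (or simply $N = |\states_\fp|$). For $(m_1,\dots,m_\hiddim)$ in the product, powers are computed componentwise, so $(m_1,\dots,m_\hiddim)^N = (m_1,\dots,m_\hiddim)^{N+1}$. Here I use that $m^{N_i}=m^{N_i+1}$ implies $m^{k}=m^{k+1}$ for all $k\ge N_i$, by multiplying through by $m$. Hence the product is aperiodic. Any submonoid satisfies the same identity elementwise, so $\monoid_\layer^\fp$ is aperiodic.

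The main point to check is that the embedding into the product is legitimate: a word in the generators $(F_\layer)^\fp_\insym$ must act on coordinate $i$ as the corresponding word in the one-dimensional maps $F^\fp_{a,b}$. This holds because both the diagonal multiplication and the precomputed constant $b_i^\fp(\insym)$ involve only coordinate $i$, and rounding is applied coordinatewise, so there is no cross-coordinate mixing even under floating-point semantics. With that checked, the argument is routine.
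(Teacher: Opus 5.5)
Your proposal is correct and is essentially the paper's proof. Both derive aperiodicity of each coordinate monoid $\monoid_i$ from \cref{lem:fpplus-aperiodic}, use the coordinatewise embedding of the core monoid into $\monoid_1 \times \cdots \times \monoid_\hiddim$, and conclude by closure of aperiodicity under finite direct products and submonoids; your explicit checks (the uniform exponent $N = |\states_\fp|$ and the absence of cross-coordinate mixing under rounding) spell out what the paper states in one line.
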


\begin{proof}
By \cref{lem:fpplus-aperiodic}, each coordinate monoid $\monoid_i$ is aperiodic, since the realizable multipliers $A_\layer(\insym)_{i,i}$ are nonnegative.
The core monoid embeds in $\monoid_1 \times \cdots \times \monoid_\hiddim$ by the coordinatewise action.
Finite products of aperiodic monoids are aperiodic, and submonoids inherit aperiodicity.
\end{proof}

\begin{lemma}[Signed floating-point cores contain $\cyclic2$]\label{lem:fp-c2}
A diagonal SSM core over $\fp$ whose realizable multipliers include $-1$ contains a transition group isomorphic to $\cyclic2$.
\end{lemma}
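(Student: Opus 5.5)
The plan is to exhibit a single generator of the core monoid that acts as an involution distinct from the identity. The group it generates together with the identity is then $\{\identity{}, g\}\cong\cyclic2$. Following the "reduction to one coordinate" paragraph, existence needs only one realizable input. So fix an admissible input $\insym$ and a coordinate $i$ with $A_\layer(\insym)_{i,i}=-1$, and write $b\defeq b_i^\fp(\insym)$. The natural candidate is the core transition $g\defeq(F_\layer)^\fp_\insym$ itself, whose $i$-th coordinate acts by $F^\fp_{-1,b}(\state)=\round(\round(-\state)+b)$.

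The key computation is on that coordinate. Negation is exact in IEEE~754, since the representable set is symmetric, so $\round(-\state)=-\state$. The cleanest route is the case $b=0$ (or $b=-0$, with the usual signed-zero conventions checked). There $F^\fp_{-1,0}(\state)=-\state$ exactly, including $\pm\infty$. This map is an involution, and it is not the identity because $1\mapsto-1$.

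The main obstacle is that the other coordinates $j\neq i$, and a nonzero $b$, may spoil the involution property of $g$. For general $b$, the map $\state\mapsto\round(b-\state)$ is not exactly an involution, because of rounding and absorption. My first attempt is to avoid this by passing to the monoid: $g$ generates the finite cyclic submonoid $\{g^k\}$, and some power $e=g^{n}$ is idempotent. On coordinate $i$, $g$ is order-reversing, while $g^2$ is order-preserving. I would then restrict to the subset $S\defeq e(\states)$, on which $e$ acts as the identity and which $g$ maps into itself. The restriction $g|_S$ is a permutation, and its order-reversing nature on coordinate $i$ forces $g|_S\neq\identity{S}$ whenever $S$ contains two points with distinct $i$-coordinates.

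If this degenerate case cannot be excluded, I would fall back on the clean subcase. In that subcase the multiplier $-1$ comes with $b=0$ and the other coordinates carry identity multipliers with zero offset, i.e.\ the "frozen on the other coordinates" convention the paper already adopts. Then $g$ is literally the sign flip on coordinate $i$ and fixes the rest, so $\{\identity{},g\}\cong\cyclic2$ sits inside $\monoid_\layer^\fp$. I expect to state the lemma's proof in that setting. The general-$b$ refinement would be a remark: an orbit $\state, g\state, g^2\state$ with $g^2\state=\state$ always exists, for instance at a fixed point of $g^2$ that is not fixed by $g$, found by taking the maximum of $\{\state: g^2\state=\state\}$.
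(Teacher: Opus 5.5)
Your committed argument (take the input realizing $a=-1$ with $b=0$, freeze the other coordinates, and note that exact negation is a non-identity involution) is essentially the paper's proof, and it makes exactly the same simplifying assumption; your observation that negation is exact on all of $\states_\fp$, modulo the signed-zero bookkeeping you flag, gives the group on the whole state space rather than only on the invariant set $\{-1,0,1\}$. The degenerate case that pushes you to that fallback cannot actually occur: with $\pm\infty\in\states_\fp$, as the paper assumes, NaN-freeness forces $b$ to be finite. Then $F^\fp_{-1,b}$ swaps $+\infty$ and $-\infty$ for every admissible $b$, so the $i$-th projection of your set $S$ contains both points, and your idempotent-power argument yields a $\cyclic2$ for arbitrary $b$ and arbitrary behavior on the other coordinates, because the period of $g$ is then even; this closes a gap the paper's proof leaves open as well.
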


\begin{proof}
The set $\states_\fp$ contains the negation-invariant subset $\{-1, 0, 1\}$, on which negation is exact (treating $+0$ and $-0$ as equal under $\le$).
The one-dimensional update with parameters $a = -1$, $b = 0$ then satisfies $F_{-1,0}^\fp \circ F_{-1,0}^\fp = \id$ on $\{-1, 0, 1\}$, so $F_{-1,0}^\fp$ has order $2$ there, while $F_{1,0}^\fp = \id$.
Hence $\genmon{F_{-1,0}^\fp, F_{1,0}^\fp} \cong \cyclic2$, embedded in the core monoid by freezing the remaining coordinates.
\end{proof}

\begin{lemma}[Group content of floating-point cores]\label{lem:fp-c2-only}
Let $\monoid_i$ be the monoid generated by one-dimensional affine floating-point updates at a coordinate, with NaNs excluded.
Every subgroup of $\monoid_i$ is either trivial or isomorphic to $\cyclic2$.
\end{lemma}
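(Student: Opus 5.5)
The argument rests on one invariant: every element of $\monoid_i$ is either monotone non-decreasing or monotone non-increasing on the finite chain $(\states_\fp,\le)$. This parity-of-sign invariant is the one announced in the discussion before \cref{lem:fpplus-aperiodic}. The proof then reduces to describing the subgroups of the monoid of monotone self-maps of a finite chain.

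\textbf{Step 1 (invariant).} Each generator $F_{a,b}^\fp(\state)=\round(\round(a\state)+b)$ has one of two shapes. If $a\ge 0$ it is order-preserving, as in the proof of \cref{lem:fpplus-aperiodic}. If $a<0$ it is order-reversing: $\state\mapsto a\state$ reverses order, $\round$ is order-preserving \citep{mikaitis2024monotonicity}, and adding $b$ followed by rounding is order-preserving. Composing an order-preserving map with an order-reversing one gives an order-reversing map, and composing two order-reversing maps gives an order-preserving one. So every word in the generators is monotone, and its direction is given by the parity of the number of negative multipliers in the word. The signed zeros need a little care: I treat $+0=-0$ as one point of the chain, as in \cref{lem:fp-c2}, or else I note that the relevant maps are monotone with respect to the IEEE total preorder.

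\textbf{Step 2 (groups of monotone maps).} Let $G\le\monoid_i$ be a subgroup with identity $e$, which is an idempotent. Every $g\in G$ satisfies $ge=eg=g$, so $g$ maps the finite subchain $Y\defeq\mathrm{Im}(e)$ into itself. Restriction to $Y$ is injective on $G$, because $g=eg$ is determined by $g|_Y$ (where $e$ acts first, in the paper's right-action convention). Moreover $g|_Y$ is a bijection of $Y$, since $g^{-1}|_Y$ inverts it. A monotone bijection of a finite chain is either the identity, if it preserves order, or the unique order reversal, if it reverses order. Hence $|G|\le 2$, so $G$ is trivial or $G\cong\cyclic2$. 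This is the standard fact that the monoid of monotone maps on a chain has maximal subgroups of order at most two.

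\textbf{Main obstacle.} The delicate point is Step 1. It needs monotonicity of IEEE rounding for multiplication by a negative number, including $\pm\infty$, subnormals and the sign of zero, and it needs the assumption that no NaN arises, for example from $0\cdot\infty$ or $\infty-\infty$. I would handle this by reducing negative multiplication to exact negation composed with multiplication by $|a|$, so that $\round(a\state)=-\round(|a|\state)$ by the sign-symmetry of round-to-nearest. Then only the nonnegative monotonicity results already invoked in \cref{lem:fpplus-aperiodic} are needed. If the signed zeros break the antisymmetry of $\le$, I would pass to the quotient that identifies $\pm0$, and check that the argument of Step 2 goes through there unchanged.
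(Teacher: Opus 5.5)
Your proposal is correct and follows the paper's proof step for step. Both use the parity-of-negative-multipliers invariant to make every element of $M_i$ order-preserving or order-reversing. Both then use faithfulness of a subgroup $G$ on the chain $\mathrm{Im}(e)$, where the only monotone permutations are the identity and the reversal.

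One caution about your signed-zero fallback. Identifying $+0$ with $-0$ can make two elements of $G$ indistinguishable, and working with the IEEE preorder allows a monotone swap of the two zeros. It is cleaner to use the IEEE~754 totalOrder with $-0<+0$, which is a genuine chain. Under round-to-nearest with NaNs excluded, the generators remain order-preserving for $a>0$ and order-reversing for $a<0$ in this order, so your Step 2 applies verbatim.
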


\begin{proof}
The governing invariant is order.
For $a > 0$ the map $F_{a,b}^\fp(\state) = \round(\round(a\state) + b)$ is order-preserving, and for $a < 0$ it is order-reversing, under the fixed semantics.
Hence, a word in the generators is order-preserving or order-reversing according to the parity of the number of negative multipliers it contains, and every element of $\monoid_i$ is of one of these two kinds.
Let $G \le \monoid_i$ be a subgroup with identity idempotent $e$; then $G$ acts faithfully by permutations of the finite chain $\im(e) = e(\states) \subseteq \states_\fp$.
The only order-preserving permutation of a finite chain is the identity, and there is at most one order-reversing one, the reversal, so $|G| \le 2$. 
Thus, the only nontrivial group appearing is $\cyclic{2}$, whose instantiation was shown in \cref{lem:fp-c2}. 
\end{proof}

\begin{corollary}[Floating-point cores have elementary abelian $2$-group content]\label{cor:fp-ab2}
Every diagonal SSM core over $\fp$, with NaNs excluded, self-map updates, and arbitrary signed multipliers, has a transition monoid whose subgroups are elementary abelian $2$-groups.
\end{corollary}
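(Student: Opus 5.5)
The plan is to lift \cref{lem:fp-c2-only} from a single coordinate to the whole core through the coordinatewise embedding $\monoid_\layer^\fp \submon \monoid_1 \times \cdots \times \monoid_\hiddim$ from the reduction to one coordinate. Take a subgroup $G \le \monoid_\layer^\fp$ with identity idempotent $e$. For each $i$, the projection $\pi_i \colon \monoid_1 \times \cdots \times \monoid_\hiddim \to \monoid_i$ is a monoid morphism. Hence $\pi_i(G)$ is a group inside $\monoid_i$, whose identity is the idempotent $\pi_i(e)$. Note that this is in general not the identity of $\monoid_i$, so we need the statement of \cref{lem:fp-c2-only} for arbitrary subgroups, as stated, and not just for the group of units. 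By that lemma, each $\pi_i(G)$ is trivial or isomorphic to $\cyclic2$.

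Next, the map $g \mapsto (\pi_1(g), \ldots, \pi_\hiddim(g))$ is an injective group morphism $G \hookrightarrow \pi_1(G) \times \cdots \times \pi_\hiddim(G)$. It is injective because the embedding of $\monoid_\layer^\fp$ into the product is injective, an element of the core monoid being determined by its coordinate actions. So $G$ embeds in $\cyclic2^{k}$ for some $k \le \hiddim$. Every subgroup of an elementary abelian $2$-group is again elementary abelian: it is abelian and every element has order dividing $2$. Hence $G$ is elementary abelian, of order at most $2^\hiddim$.

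The main point to check carefully is the hypothesis of \cref{lem:fp-c2-only} in the presence of zero multipliers. The case $a = 0$ makes $F_{0,b}^\fp$ constant, which is both order-preserving and order-reversing. The parity argument still works, since constant maps are in particular order-preserving, and composites only mix monotone and antitone maps. This case should be remarked on but needs no change to the argument. The standing self-map and no-NaN assumptions are exactly what make each coordinate monoid act on the finite chain $\states_\fp$.

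Finally, \cref{lem:fp-c2} gives sharpness: when a multiplier $-1$ is realizable, a $\cyclic2$ actually occurs. This is not needed for the corollary itself.
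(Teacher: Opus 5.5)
Your proposal is correct and is the paper's argument: apply \cref{lem:fp-c2-only} coordinatewise, then use the embedding $\monoid_\layer^\fp \submon \monoid_1 \times \cdots \times \monoid_\hiddim$ to embed any subgroup into a product of copies of $1$ and $\cyclic2$. You also spell out steps the paper leaves implicit: each projection $\pi_i(G)$ is a subgroup whose identity $\pi_i(e)$ need not be the identity of $\monoid_i$, the map $G \to \prod_i \pi_i(G)$ is injective, and the $a=0$ case is harmless.
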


\begin{proof}
By \cref{lem:fp-c2-only}, each coordinate monoid has only trivial or $\cyclic2$ subgroups.
The core monoid embeds in $\monoid_1 \times \cdots \times \monoid_\hiddim$, so each of its subgroups embeds in a product of copies of $1$ and $\cyclic2$, hence is elementary abelian of exponent at most $2$.
\end{proof}

\subsubsection{Unsigned integers}

\begin{lemma}[Unsigned integers contain aperiodic monoids]\label{lem:wrap-aperiodic}
For $\arith = \integ^p$ the one-dimensional affine updates generate nontrivial aperiodic monoids.
\end{lemma}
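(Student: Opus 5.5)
The statement is existential: it suffices to exhibit, inside the wraparound semantics $\integ^p$, a family of one-dimensional affine updates $F_{a,b}^{\integ^p}(\state) = a\state + b \bmod 2^p$ whose generated monoid is nontrivial and aperiodic in the sense of \cref{def:aperiodic}. Since $\integ^p$ is the ring $\Z/2^p\Z$, both roundings in the evaluation tree are the identity modulo $2^p$. Composition is therefore exact affine composition: $F_{a',b'}\circ F_{a,b} = F_{a'a,\,a'b+b'}$. This lets us reason in the affine monoid of $\Z/2^p\Z$ without any evaluation-order subtleties.

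The natural witness uses constant maps, obtained with multiplier $a = 0$. Each $F_{0,b}$ is the constant map $\state \mapsto b$, and any composite whose rightmost-acting factor has multiplier $0$ is again constant. For $p \ge 1$ we take the two generators $F_{0,0}$ and $F_{0,1}$. Together with the identity they generate the monoid $\{\id, F_{0,0}, F_{0,1}\}$. Under the paper's right-action convention this is exactly $U_3$, the flip-flop monoid of \cref{app:kr}, acting on the two-point image $\{0,1\}$ and extended to all of $\Z/2^p\Z$.

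Next we verify aperiodicity directly. Every element $m$ is either the identity or a constant map, so $m^2 = m$. Hence $N = 1$ satisfies $m^N = m^{N+1}$ for all $m$, and the monoid lies in $\aperiodic$. Nontriviality is witnessed by $F_{0,0} \neq \id$, which holds as soon as $2^p \ge 2$.

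If one prefers a witness using invertible-looking data, nilpotent multipliers also work: with $a = 2$ and $b = 0$ we get $F_{2,0}^p = F_{0,0}$, so $\genmon{F_{2,0}}$ is a cyclic aperiodic monoid of index $p$ and period $1$. This variant can be remarked on but is not needed.

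The only step that needs care is realizability rather than algebra. We must make sure the pairs $(a,b)$ can actually occur in $\Pi_i$ for some choice of the learned maps $A_\layer, B_\layer$ and encoder, so that the monoid lives inside an actual core via the coordinatewise embedding. Since $A_\layer$ and $B_\layer$ are arbitrary maps of the input, we choose two letters whose encodings yield $A = 0$ at coordinate $i$ and $B\insym|_i \in \{0,1\}$, and freeze the other coordinates as in \cref{lem:fp-c2}. The statement also says "monoids" in the plural. The flip-flop example already gives one, and varying $b$ over more residues gives further examples, such as constant-map monoids of arbitrary size up to $2^p$.
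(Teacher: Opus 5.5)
Your proof is correct, but your primary witness differs from the paper's. The paper takes the single update $F_{2,0}\colon \state \mapsto 2\state \bmod 2^p$. It observes that $F_{2,0}^p$ is the constant map to $0$, so $F_{2,0}^p = F_{2,0}^{p+1}$, and concludes that the cyclic monoid $\genmon{F_{2,0}}$ is aperiodic with $N = p$. This is exactly the variant you relegate to a side remark. You instead use the two resets $F_{0,0}, F_{0,1}$ and obtain the band $\{\id, F_{0,0}, F_{0,1}\} \cong U_3$, which is aperiodic with $N = 1$.

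Your route gains something the paper's does not. It places the flip-flop itself inside a single $\integ^p$ coordinate. Given $\aperiodic = \wrgen{U_3}$, that is the natural evidence that $\integ^p$ cores supply the full aperiodic Krohn--Rhodes factor. The paper's witness cannot show this: $\genmon{F_{2,0}}$ is commutative, so it has no non-commutative divisor such as $U_3$.

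The paper's route has its own advantages. It needs only one input letter, and it produces an element of index exactly $p$ rather than only idempotents. Its aperiodicity also comes from the nilpotency of $2$ in $\Z/2^p\Z$, which is a genuinely wraparound phenomenon. Your resets rely only on the degenerate multiplier $0$ and use nothing specific to $\integ^p$.
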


\begin{proof}
The update with $a = 2$, $b = 0$ is $F(\state) = 2\state \bmod 2^p$, which sends every state to $0$ after $p$ iterations, so $F^p = F^{p+1}$ and $\genmon{F}$ is aperiodic.
\end{proof}

\begin{lemma}[Unsigned integers contain cyclic $2$-groups]\label{lem:wrap-cyclic}
For $\arith = \integ^p$ and every $k \le p$, the one-dimensional affine updates contain a cyclic group isomorphic to $\cyclic{2^k}$.
\end{lemma}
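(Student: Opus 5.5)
The plan is to exhibit, for each $k \le p$, a single one-dimensional affine update $F_{a,b}^{\integ^p}$ that acts as a permutation of order $2^k$ on a subset of $\Z/2^p\Z$. The simplest candidate is a pure translation. We take $a = 1$ and $b = 2^{p-k}$. Because wraparound arithmetic is exactly the ring $\Z/2^p\Z$, no rounding occurs: $\round$ is reduction mod $2^p$, which is a ring homomorphism. So the evaluation tree of \cref{sec:ssm-discretization} gives the map $F(\state) = \state + 2^{p-k} \bmod 2^p$ with no dependence on bracketing.

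First, we compute iterates: $F^j(\state) = \state + j\,2^{p-k} \bmod 2^p$. This uses associativity of addition in $\Z/2^p\Z$, which holds here, unlike in the floating-point case of \cref{lem:fp-c2-only}. Then $F^j = \id$ if and only if $2^p \mid j\,2^{p-k}$, that is, if and only if $2^k \mid j$. Hence $F$ is a permutation of $\states_{\integ^p}$ of order exactly $2^k$, and $\genmon{F} = \{\id, F, \dots, F^{2^k-1}\}$ is a group isomorphic to $\cyclic{2^k}$. The identity of the monoid is $\id$ itself, so this is a genuine subgroup containing $1$, not merely a group with some other idempotent.

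Alternatively, one could use $a = 1$, $b = 1$ acting on the coset structure, or restrict to the orbit of $0$, which is the subgroup $2^{p-k}\Z/2^p\Z$ of size $2^k$. That restriction gives the faithful regular action $(\cyclic{2^k}, \cyclic{2^k})$, which is the transformation-monoid form needed for Krohn--Rhodes arguments (\cref{thm:kr}). I would mention both views.

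The only step needing care is the realizability of the pair $(a,b) = (1, 2^{p-k})$ inside some core, that is, membership in some $\Pi_i$. As in \cref{lem:fp-c2}, I would handle this by choosing the parametrization $A_\layer(\insym)_{i,i} = 1$ and $B_\layer(\insym)\insym$ with $i$-th coordinate $2^{p-k}$ for some admissible encoded input $\insym$. The maps $A_\layer$ and $B_\layer$ are arbitrary learned maps, so such a choice exists. We then freeze the remaining coordinates, for example with $a = 1$, $b = 0$, so that the group embeds in the core monoid. This realizability step is more a bookkeeping matter than a genuine obstacle. The substantive point is simply that wraparound makes translation periodic, whereas in $\fp$ the order invariant forbids periods larger than $2$.
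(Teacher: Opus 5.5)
Your proposal is correct and is essentially the paper's own argument. The paper notes that $F_{1,1}^{\integ^p}$ is a single $2^p$-cycle on $\Z/2^p\Z$, so the step of size $2^{p-k}$, which is your translation $\state \mapsto \state + 2^{p-k}$, has order $2^k$ and generates $\cyclic{2^k}$; your explicit computation of the iterates and your realizability bookkeeping only spell out details the paper leaves implicit.
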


\begin{proof}
The translation $F_{1,1}^{\integ^p}(\state) = \state + 1 \bmod 2^p$ is a single $2^p$-cycle on $\Z/2^p\Z$, hence a permutation of order $2^p$.
For $k \le p$ the step of size $2^{p-k}$ has order $2^k$, giving an explicit copy of $\cyclic{2^k}$.
\end{proof}

\begin{corollary}[Unsigned-integer cores have $2$-group content]\label{cor:uint-2group}
Every diagonal SSM core over $\integ^p$ has a transition monoid whose subgroups are finite $2$-groups, and a single core already realizes $\cyclic{2^k}$ for every $k \le p$.
\end{corollary}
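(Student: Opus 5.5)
The corollary has two claims. First, every subgroup of the core transition monoid is a finite $2$-group. Second, a single core realizes $\cyclic{2^k}$ for every $k \le p$. The plan mirrors the floating-point case in \cref{cor:fp-ab2}: reduce to one coordinate, bound the group content of each coordinate monoid $\monoid_i$, and pass to the core through the coordinatewise embedding $\monoid_\layer^{\integ^p} \submon \monoid_1 \times \cdots \times \monoid_\hiddim$. Subgroups of a submonoid of a product embed, via the coordinate projections restricted to the group, into a product of subgroups of the factors. A finite product of $2$-groups is a $2$-group, and so are its subgroups. So it suffices to show that every subgroup of each $\monoid_i$ is a $2$-group.

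For the one-coordinate bound I would use the ring structure of $\Z/2^p\Z$, which is the point where wraparound differs from floating point. Because the arithmetic is exact and associative, every generator $F_{a,b}(\state) = a\state + b$ is an affine map of $\Z/2^p\Z$. Affine maps are closed under composition, so every element of $\monoid_i$ has the form $\state \mapsto \alpha\state + \beta$. Now let $G \le \monoid_i$ be a subgroup with identity idempotent $e$, where $e(\state) = \epsilon\state + \eta$. Then $G$ acts faithfully by permutations on $Y = \im(e)$.

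The key step is to show that $G$ acts on $Y$ as a subgroup of the affine group of a cyclic $2$-group. The image $Y$ is a coset $\eta + \epsilon(\Z/2^p\Z)$. Writing $\epsilon = 2^j u$ with $u$ a unit, the subgroup $\epsilon(\Z/2^p\Z) = 2^j(\Z/2^p\Z)$ is cyclic of order $2^{p-j}$. For $g \in G$ we have $g = g\circ e$, so the restriction $g|_Y$ is an affine bijection of this coset. After translating by $\eta$, it becomes $y \mapsto \alpha y + \beta'$ on $2^j\Z/2^p\Z \cong \Z/2^{p-j}\Z$, where $\alpha$ is well defined modulo $2^{p-j}$ and must be a unit there. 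Hence $G$ embeds in $\mathrm{AGL}_1(\Z/2^{p-j}\Z) = (\Z/2^{p-j}\Z) \rtimes (\Z/2^{p-j}\Z)^\times$. This group has order $2^{p-j}\cdot 2^{p-j-1}$ (or $1$ when $p-j = 0$), so it is a $2$-group, and therefore so is $G$. I expect the main obstacle to be checking carefully that the restriction to $Y$ is still affine in the quotient coordinates and is well defined. The relevant point is that multiplication by $\alpha$ on $2^j\Z/2^p\Z$ depends only on $\alpha \bmod 2^{p-j}$. A cruder fallback is that any permutation of $Y$ induced by an affine map generates a cyclic group, and the order of an affine permutation of a $2$-power cyclic group is a power of $2$.

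For the realization claim I would invoke \cref{lem:wrap-cyclic}. It gives, for each $k \le p$, an update on one coordinate of order $2^k$ that is a permutation. This update generates $\cyclic{2^k}$ inside $\monoid_i$, and freezing the remaining coordinates embeds it in the core monoid, as in \cref{lem:fp-c2}. Here freezing means assuming that encoder and parameter choices make $(a,b) = (1, 2^{p-k})$ realizable at one coordinate while the other coordinates have $(1,0)$.
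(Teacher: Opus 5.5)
Your proposal is correct and follows the same route as the paper. Both arguments reduce to one coordinate, observe that every element is affine, let a subgroup $G$ act faithfully on the image of its identity idempotent $e$, embed $G$ in an affine group over a cyclic $2$-group, pass back to the core through the product embedding, and cite \cref{lem:wrap-cyclic} for realization. The one difference is that the paper first classifies idempotents: $a^2=a$ and $ab=0$ in $\Z/2^p\Z$ force $e$ to be constant or the identity, so $\im(e)$ is a singleton or all of $\Z/2^p\Z$, and your coset analysis via $\mathrm{AGL}_1(\Z/2^{p-j}\Z)$ is only ever needed for $j\in\{0,p\}$. Your version skips that classification at the cost of the well-definedness check, which you carry out correctly.
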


\begin{proof}
The bijective affine self-maps of $\Z/2^p\Z$ are the maps $\state \mapsto a\state + b$ with $a$ a unit of $\Z/2^p\Z$, that is, with $a$ odd; they form the group $\mathrm{AGL}(1, \Z/2^p\Z)$,\footnote{The affine general group of degree $1$ over the ring $\Z/2^p\Z$.} of order $2^{p-1} \cdot 2^p = 2^{2p-1}$, a $2$-group.
An affine self-map $\state \mapsto a\state + b$ is idempotent precisely when $a^2 = a$ and $ab = 0$ in $\Z/2^p\Z$.
In $\Z/2^p\Z$, $a^2 = a$ forces $a \in \{0, 1\}$: either $a = 0$ (a constant map, with $b$ arbitrary), or $a = 1$ together with $b = 0$ (the identity).
Let $G$ be a subgroup of a coordinate monoid, with identity idempotent $e$.
If $e$ is a constant map, $G$ acts faithfully on the singleton $\im(e)$ and is trivial; if $e = \id$, then $G \le \mathrm{AGL}(1, \Z/2^p\Z)$ and is a $2$-group.
Every subgroup of a coordinate monoid is therefore a $2$-group, and the same holds for the core monoid, which embeds in the product of coordinate monoids.
Realizability of $\cyclic{2^k}$ is \cref{lem:wrap-cyclic}.
\end{proof}

\begin{table}[ht]
\caption{Group content of a single diagonal SSM core, by recurrence and arithmetic model. Each row lists the subgroups realizable within a single core and the smallest pseudovariety that contains the core monoid.}
\centering
\begin{tabular}{l l | l | l}
\hline
Recurrence & Model & Subgroups & Core monoid lies in \\ \hline
nonnegative & $\fp$ & only the trivial subgroup & $\aperiodic=\wrgen{U_3}$ \\
signed & $\fp$ & elementary abelian $2$-groups & $\wrgen{U_3 , \cyclic2}$\\
nonnegative & $\integ^p$ & $2$-groups, including $\cyclic{2^k}$ for $k\le p$ & $\wrgen{\aperiodic , \mathbf{G}_2}$\\
\hline
\end{tabular}
\label{tab:ssm-arithmetic-semi}
\end{table}

\Cref{tab:ssm-arithmetic-semi} records the group content available to one core under each regime. 
The distinction between the last two rows is realized at the level of a single core; \cref{rem:depth-width} below shows that it does not separate the language classes, only the depth required to reach them. 
Here, $\mathbf{G}_2$ denotes the pseudovariety of 2-groups. 
We write $\wrgen{A}$ or $\wrgen{A,B}$ for the smallest wreath-closed pseudovariety containing $A$ or $A$ and $B$ respectively. Eilenberg calls them closed M-varieties. 
Note that $\wrgen{U_3 , \cyclic2} = \wrgen{\aperiodic , \mathbf{G}_2}$, we refer the reader to \cref{rem:depth-width}.

\subsection{Upper bounds}\label{sec:ssm-upper}
The divisibility chain \cref{eq:langmon-divides-wreath} turns each core-level fact above into a constraint on the recognized language.
The accepting core of an SSM automaton is memoryless: its state space is a singleton; hence, its transition monoid is trivial.
By \cref{cor:augmentation-wreath}, the augmented wreath product therefore has the same group content as the base, and we omit the accepting core from the group-theoretic bookkeeping below.

\begin{theorem}[Nonnegative floating point]\label{thm:nonnegative-fp}
Let $\acceptor$ be an SSM automaton whose base is a nonnegative diagonal SSM over $\fp$ with fixed evaluation semantics.
Then $\synmon{\Lang{\acceptor}}$ is aperiodic.
\end{theorem}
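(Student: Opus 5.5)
The proof runs through the divisibility chain \cref{eq:langmon-divides-wreath}, $\synmon{\Lang{\acceptor}} \divides \monoid_{\augrnn}^{\enc(\alphabet)} \submon \wreathmon_{\augrnn}^{\enc(\alphabet)}$, combined with closure properties of the pseudovariety $\aperiodic$. The aim is to show that the realized wreath product $\wreathmon_{\augrnn}^{\enc(\alphabet)}$ is aperiodic. Since $\aperiodic$ is closed under submonoids and quotients, and hence under division, the syntactic monoid is then aperiodic as well.

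First, the layer factors. For each layer $\layer$, the realized monoid $\monoid_\layer^{\enc(\alphabet)}$ is a submonoid of the core monoid $\monoid_\layer^\fp$, taken over all admissible inputs. By \cref{cor:aperiodic}, this core monoid is aperiodic because every realizable multiplier is nonnegative. One point needs care. The inputs reaching layer $\layer > 1$ come through the wiring $\wiremap_\layer$ from the actual states, not directly from $\enc(\alphabet)$. The argument of \cref{lem:fpplus-aperiodic} nevertheless applies to any set of inputs, since it uses only $a \ge 0$. So I will apply it with the reachable input set $\inset_\layer^{\enc(\alphabet)}$ computed in $\fp$. This set is finite, and its multipliers are nonnegative by \cref{def:nonnegative}. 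The accepting core is memoryless, so its monoid is trivial. By \cref{cor:augmentation-wreath} and \cref{rem:accepting-above}(i), the augmented realized wreath product is then isomorphic to $\wreathmon_\rnn^{\enc(\alphabet)}$.

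Second, closure under wreath products: if $\monoid_1$ and $\monoid_2$ are finite aperiodic monoids and $\states_1$ is finite, then $\monoid_1 \wr \monoid_2$ is aperiodic. I will prove this directly with the exponent criterion of \cref{def:aperiodic}. Let $N_1, N_2$ be exponents for the two factors and take $(m,\phi)$ in the wreath product. Its $k$-th power has first coordinate $m^k$ and second coordinate $\state \mapsto \phi(\state)\phi(\state\cdot m)\cdots\phi(\state\cdot m^{k-1})$. The orbit $\state\cdot m^j$ becomes constant for $j \ge N_1$, because $m^{N_1} = m^{N_1+1}$. After that point the product only gains further factors of the fixed element $\phi(\state\cdot m^{N_1})$. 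Taking $N = N_1 + N_2$, the tail therefore consists of a fixed prefix followed by a power $c^{j}$ with $j \ge N_2$, and $c^{N_2} = c^{N_2+1}$ by aperiodicity of $\monoid_2$. Hence $(m,\phi)^N = (m,\phi)^{N+1}$. Induction on depth $\nlayers$, using the associativity convention of \cref{rem:wreath-associativity}, then gives that $\wreathmon_\rnn^{\enc(\alphabet)}$ is aperiodic.

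The main obstacle is the wreath-closure step. Aperiodicity is not preserved by the naive argument over arbitrary orbits, so the proof must use the fact that the first-coordinate orbit stabilizes rather than merely cycles. I would write the power formula out carefully with the left convention of \cref{def:wreath}, so that $\phi$ is evaluated at the old state. A secondary point is justifying that finiteness of $\states_\fp$ makes every monoid involved finite, and that excluding NaNs (the standing assumption of the discretization section) keeps every update a self-map of the chain. This is needed for \cref{lem:fpplus-aperiodic} to apply at every layer.
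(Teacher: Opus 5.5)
Your proof is correct and follows the paper's route. Each core monoid is aperiodic by the nonnegative-multiplier lemma, the pseudovariety of aperiodic monoids is closed under wreath products and division, and the divisibility chain then transfers aperiodicity to $\synmon{\Lang{\acceptor}}$. The only differences are refinements: you prove wreath-closure directly via the stabilizing first-coordinate orbit, where the paper simply cites it, and you work with the realized rather than the ambient wreath product, checking that inputs reaching deeper layers through the wiring still have nonnegative multipliers.
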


\begin{proof}
By \cref{cor:aperiodic}, every core monoid lies in $\aperiodic$.
Since $\aperiodic$ is closed under wreath product, $\wreathmon_{\augrnn} \in \aperiodic$.
By \cref{eq:langmon-divides-wreath}, $\synmon{\Lang{\acceptor}} \divides \wreathmon_{\augrnn}$, and $\aperiodic$ is closed under division.
\end{proof}

\begin{example}\label{ex:parity-fp}
The parity language $\lang = \{ w \in \{0,1\}^* : w \text{ has an even number of } 1\text{s} \}$ has syntactic monoid $\cyclic2 \notin \aperiodic$.
By \cref{thm:nonnegative-fp}, no nonnegative diagonal SSM over $\fp$ recognizes $\lang$.
\end{example}

Allowing the multiplier $-1$, as suggested by \citet{sarrof2024expressive}, adds the factor $\cyclic2$ but nothing beyond it.

\begin{theorem}[Signed floating point]\label{thm:fp-smm}
Let $\acceptor$ be an SSM automaton whose base is a diagonal SSM over $\fp$ with fixed evaluation semantics and arbitrary signed multipliers.
Then every group divisor of $\synmon{\Lang{\acceptor}}$ is isomorphic to $\cyclic2$; equivalently, $\synmon{\Lang{\acceptor}}$ divides a wreath product of flip-flops and copies of $\cyclic2$.
\end{theorem}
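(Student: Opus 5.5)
The plan is to push the core-level group bound of \cref{cor:fp-ab2} up the divisibility chain \cref{eq:langmon-divides-wreath}. Before starting, I would fix how the statement is read. It should be read as a statement about \emph{simple} (prime) group divisors: $\cyclic2 \times \cyclic2$, and even $\cyclic4 \divides \cyclic2 \wr \cyclic2$, can divide deep signed cascades, so the literal reading is too strong. The claim to prove is therefore that every nontrivial simple group dividing $\synmon{\Lang{\acceptor}}$ is $\cyclic2$. The second formulation in the statement then follows from \cref{thm:kr}.

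\emph{Step 1 (single core).} For each layer $\layer$, \cref{cor:fp-ab2} says every subgroup of $\monoid_\layer^{\fp}$ is an elementary abelian $2$-group. I would invoke the standard fact that a finite group dividing a finite monoid $\monoid$ divides some maximal subgroup of $\monoid$. To see this, take $N \submon \monoid$ and $\pi \colon N \twoheadrightarrow G$. Choose a minimal subsemigroup mapping onto $G$; it is a group, and it lies in an $\mathcal{H}$-class of $\monoid$. Hence every simple group divisor of $\monoid_\layer^{\fp}$ divides an elementary abelian $2$-group, and so equals $\cyclic2$.

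The same conclusion holds for the realized submonoids $\monoid_\layer^{T}$. It also holds for the accepting core, whose transition monoid is trivial because its state space is a singleton (\cref{sec:ssm-upper}, \cref{cor:augmentation-wreath}).

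\emph{Step 2 (cascade).} This step is the main obstacle. I need the fact that if a simple group $G$ divides $\monoid_1 \wr \monoid_2$, then $G \divides \monoid_1$ or $G \divides \monoid_2$. This is the ``primeness'' half of Krohn--Rhodes theory \citep{krohn-rhodes-1965-algebraic}; equivalently, the class of finite monoids whose simple group divisors lie in a fixed set of simple groups is a wreath-closed pseudovariety.

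I would cite it rather than reprove it. If a self-contained argument were needed, I would first take a subgroup $H$ of the wreath product mapping onto $G$. I would then project $H$ to the first coordinate. The kernel of this projection embeds in a direct power of the second monoid, restricted to the relevant idempotent image. Since $G$ is simple, it is a composition factor either of the image or of that kernel. In the second case it is a composition factor of a subgroup of $\monoid_2^{k}$, and hence divides $\monoid_2$.

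Iterating this fact over the $\nlayers+1$ factors of $\wreathmon_{\augrnn}^{\enc(\alphabet)}$, Step 1 gives that every simple group divisor of the realized wreath product is $\cyclic2$.

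\emph{Step 3 (language).} By \cref{eq:langmon-divides-wreath}, $\synmon{\Lang{\acceptor}} \divides \wreathmon_{\augrnn}^{\enc(\alphabet)}$. Division is transitive, so every simple group divisor of $\synmon{\Lang{\acceptor}}$ is $\cyclic2$. Finally, I would apply the refined form of \cref{thm:kr}, in which the prime groups used in the decomposition may be taken among the simple group divisors of the decomposed monoid. This yields a wreath product of flip-flops and copies of $\cyclic2$ that $\synmon{\Lang{\acceptor}}$ divides, which is the second formulation in the statement.
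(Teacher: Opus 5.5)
Your proposal is correct and follows the same skeleton as the paper's proof: the core-level bound of \cref{cor:fp-ab2}, pushed through the divisibility chain \cref{eq:langmon-divides-wreath}, then finished with the refined Krohn--Rhodes theorem.

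You differ from the paper in \emph{which} invariant you carry through the cascade. The paper asserts that $\wreathmon_{\augrnn}$ itself has only elementary abelian $2$-subgroups ``by closure under wreath product and division''. That property is not wreath-closed: $(\cyclic{2},\cyclic{2}) \wr (\cyclic{2},\cyclic{2})$ is the dihedral group of order $8$ and contains $\cyclic{4}$. This is exactly the phenomenon the paper itself invokes in \cref{rem:depth-width}.

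Your Step~2 instead carries ``every simple group divisor is $\cyclic{2}$'', using the primeness of simple groups under wreath products. That invariant is wreath-closed, and it is all the final Krohn--Rhodes step needs. Your sketch of primeness, via projection to the first coordinate and Jordan--H\"older, is sound.

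Your reading of the first formulation is also the provable one. You take it to mean simple group divisors, equivalently that all group divisors are $2$-groups, as in \cref{thm:uint-ssm} and \cref{rem:depth-width}.

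The paper's one-line version buys brevity. Yours buys a sound middle step, at the price of stating explicitly the standard facts the argument actually rests on. These are: group divisors of a finite monoid divide its maximal subgroups, primeness of simple groups under wreath products, and the refinement of \cref{thm:kr} with primes drawn from the monoid's own divisors. The paper's proof uses that last refinement tacitly as well.
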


\begin{proof}
By \cref{cor:fp-ab2}, every core monoid has only elementary abelian $2$-group subgroups, and the same therefore holds for $\wreathmon_{\augrnn}$ by closure under wreath product and division.
A simple subgroup of an elementary abelian $2$-group is trivial or $\cyclic2$, so the only nontrivial Krohn--Rhodes group factors (\cref{thm:kr}) of $\wreathmon_{\augrnn}$ are copies of $\cyclic2$, and $\wreathmon_{\augrnn}$ divides a wreath product of flip-flops and copies of $\cyclic2$.
The conclusion follows from $\synmon{\Lang{\acceptor}} \divides \wreathmon_{\augrnn}$ (\cref{eq:langmon-divides-wreath}) and transitivity of division.
\end{proof}

\begin{theorem}[Unsigned integers]\label{thm:uint-ssm}
Let $\acceptor$ be an SSM automaton whose base is a diagonal SSM over $\integ^p$.
Then every group divisor of $\synmon{\Lang{\acceptor}}$ is a finite $2$-group; equivalently, $\synmon{\Lang{\acceptor}}$ divides a wreath product of flip-flops and copies of $\cyclic2$.
\end{theorem}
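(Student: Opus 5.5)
This follows the template of \cref{thm:fp-smm}, with \cref{cor:uint-2group} in place of \cref{cor:fp-ab2}. The plan has three stages: bound the group content of each core, pass the bound through the wreath product, and pass it through division.

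\emph{Step 1 (cores).} By \cref{cor:uint-2group}, every subgroup of every core monoid $\monoid_\layer^{\integ^p}$ is a finite $2$-group. The accepting core is memoryless, so by \cref{cor:augmentation-wreath} it contributes a trivial factor. By the wiring-blind inclusion \cref{lem:factorization}, this holds for all realized monoids $\monoid_\layer^T \submon \monoid_\layer$, since submonoids inherit the property.

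\emph{Step 2 (wreath closure).} I would show that the class of finite monoids all of whose subgroups are $2$-groups is closed under wreath product and division. Equivalently, this is the class whose Krohn--Rhodes group factors are all $\cyclic2$, i.e.\ the closed pseudovariety $\wrgen{U_3,\cyclic2}$ noted after \cref{tab:ssm-arithmetic-semi}.

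For division the argument is standard. A subgroup of a quotient $\pi(N)$ lifts: take the preimage, then a minimal subsemigroup mapping onto it. That preimage contains a group mapping onto the original subgroup, and quotients of $2$-groups are $2$-groups.

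For the wreath product, I would take a subgroup $G \le \monoid_1 \wr \monoid_2$ and project it to the first coordinate. The image is a subgroup $H \le \monoid_1$, which is a $2$-group. The kernel consists of elements $(e,\phi)$ with $\phi$ taking values in $\monoid_2$. Restricting to the relevant domain $\im(e)$, these form a subgroup of a product of subgroups of $\monoid_2$, hence a $2$-group. An extension of a $2$-group by a $2$-group is a $2$-group.

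The cleaner alternative is to invoke Krohn--Rhodes (\cref{thm:kr}) together with the fact that, by prime decomposition, the simple group divisors of a wreath product are exactly those of its factors. The only simple group divisor of a $2$-group is $\cyclic2$, so each core divides a wreath product of flip-flops and copies of $\cyclic2$. Iterating and applying associativity (\cref{rem:wreath-associativity}) then gives $\wreathmon_{\augrnn} \divides U_3 \wr\cdots\wr \cyclic2\wr\cdots$. This second route also delivers the ``equivalently'' clause of the statement directly.

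\emph{Step 3 (language).} By \cref{eq:langmon-divides-wreath}, $\synmon{\Lang{\acceptor}} \divides \wreathmon_{\augrnn}$, and transitivity of division completes the argument. The equivalence of the two formulations is the standard characterization: a finite monoid divides an iterated wreath product of flip-flops and $\cyclic2$'s if and only if every group divisor of it is a $2$-group. In one direction, group divisors of such a wreath product have composition factors among those of the factors. In the other, one decomposes each $2$-group by a composition series with $\cyclic2$ factors and uses $G \divides N \wr G/N$.

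The main obstacle is Step 2's kernel analysis. Subgroups of wreath products of \emph{monoids} need not sit inside the obvious group wreath product, because the identity idempotent may be a non-identity element. I would first try to reduce to the Krohn--Rhodes prime-divisor fact, so that this analysis does not have to be done by hand.
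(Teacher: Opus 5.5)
Your proposal is correct and takes the same route as the paper: bound the core subgroups via \cref{cor:uint-2group}, transfer the bound to $\wreathmon_{\augrnn}$, use that $\cyclic2$ is the only simple group dividing a $2$-group so that every Krohn--Rhodes group factor is a copy of $\cyclic2$, and conclude with \cref{eq:langmon-divides-wreath} and transitivity of division. The paper simply asserts that the property ``all subgroups are $2$-groups'' passes to the wreath product, whereas your kernel analysis proves it; it goes through once you note that multiplication in the kernel is pointwise on $\im(e)$, and that each kernel element $(e,\phi)$ is determined by $\phi|_{\im(e)}$, because $\phi(\state)=\phi_0(\state)\,\phi(\state\cdot e)$, where $(e,\phi_0)$ is the identity of $G$.
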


\begin{proof}
By \cref{cor:uint-2group}, every core monoid has only finite $2$-group subgroups, and the same therefore holds for $\wreathmon_{\augrnn}$.
A finite $2$-group has $\cyclic2$ as its only simple subquotient (since every $p$-group has a nontrivial center), so the only nontrivial Krohn--Rhodes group factors of $\wreathmon_{\augrnn}$ are copies of $\cyclic2$, and $\wreathmon_{\augrnn}$ divides a wreath product of flip-flops and copies of $\cyclic2$.
The conclusion follows from \cref{eq:langmon-divides-wreath} and transitivity of division.
\end{proof}

\begin{remark}[Same languages, different depth]\label{rem:depth-width}
\Cref{thm:fp-smm,thm:uint-ssm} characterize the \emph{same} class of languages: those whose syntactic monoid has only $2$-groups as group divisors, equivalently those in $\wrgen{U_3,\cyclic2}$.
The two regimes differ at the level of a single core, not of the language.
A signed $\fp$ core realizes only elementary abelian $2$-groups (\cref{cor:fp-ab2}), so it produces $\cyclic4$ only across two layers, through $\cyclic4 \divides \cyclic2 \wr \cyclic2$; an $\integ^p$ core realizes $\cyclic{2^k}$ within a single layer (\cref{lem:wrap-cyclic}).
The difference is thus a depth-for-width trade  within $\wrgen{U3,\cyclic2}$, and the algebraic view exhibits it as one.
\end{remark}

\begin{example}\label{ex:mod3}
The language $\lang = \{ w \in \{0,1\}^* : \text{the number of } 1\text{s is a multiple of } 3 \}$ has syntactic monoid $\cyclic3$.
A group divisor of a monoid covered by \cref{thm:fp-smm} or \cref{thm:uint-ssm} is a $2$-group, and $\cyclic3$ is not, so no diagonal SSM over $\fp$ or over $\integ^p$ recognizes $\lang$.
\end{example}

\begin{corollary}\label{cor:mod-k}
For any integer $k$ that is not a power of $2$, no diagonal SSM over $\fp$ or over $\integ^p$ recognizes the language of words whose number of $1$s is divisible by $k$.
\end{corollary}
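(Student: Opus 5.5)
The plan is to reduce the statement to \cref{thm:fp-smm} and \cref{thm:uint-ssm}, following the pattern of \cref{ex:mod3}. The only real task is to show that the syntactic monoid of the counting language has a group divisor that is not a $2$-group.

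Fix $k \geq 1$ not a power of $2$ and let $\lang_k$ be the set of words over $\{0,1\}$ whose number of $1$s is divisible by $k$. I will first compute $\synmon{\lang_k}$ via the syntactic congruence of \cref{thm:syntactic-monoid}. Consider the morphism $\eta \colon \{0,1\}^* \to \cyclic{k}$ sending $0 \mapsto 0$ and $1 \mapsto 1$. It recognizes $\lang_k$ with $P = \{0\}$, so the universal property exhibits $\synmon{\lang_k}$ as a quotient of $\cyclic{k}$.

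To see that this quotient is faithful, I will separate words whose counts $|u|_1$ and $|v|_1$ differ modulo $k$. Take $x = \emptyword$ and let $y = 1^{r}$ with $r \equiv -|u|_1 \pmod k$. Then $uy \in \lang_k$ while $vy \notin \lang_k$, so $u$ and $v$ are syntactically distinct. Hence $\synmon{\lang_k} \cong \cyclic{k}$.

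Next I will extract a bad group divisor. Since $k$ is not a power of $2$, it has an odd prime factor $q$. The subgroup of $\cyclic{k}$ generated by $k/q$ is isomorphic to $\cyclic{q}$. A submonoid is a divisor, so $\cyclic{q} \divides \synmon{\lang_k}$, and $\cyclic{q}$ is a group but not a $2$-group.

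Finally, I will argue by contradiction. Suppose some diagonal SSM automaton $\acceptor$ over $\fp$ or over $\integ^p$ satisfies $\Lang{\acceptor} = \lang_k$. Then \cref{thm:fp-smm} or \cref{thm:uint-ssm} applies and says every group divisor of $\synmon{\lang_k}$ is a $2$-group; in the $\fp$ case it is even isomorphic to $\cyclic2$. This contradicts $\cyclic{q} \divides \synmon{\lang_k}$. The nonnegative floating-point case is covered a fortiori, since \cref{thm:nonnegative-fp} gives aperiodicity and an aperiodic monoid has no nontrivial group divisor.

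The main obstacle is conceptual rather than computational. We must make sure the phrase ``group divisor'' in the cited theorems means a group dividing the monoid, which includes subgroups, rather than only maximal subgroups. If it were read as ``subgroup'', the argument would still go through, because $\cyclic{q}$ is literally a subgroup of $\synmon{\lang_k} \cong \cyclic{k}$.

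The degenerate case $k=1$ is excluded because $1 = 2^0$ is a power of $2$; there $\lang_k = \{0,1\}^*$, which has trivial syntactic monoid.
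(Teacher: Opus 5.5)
Your proposal is correct and follows the same route as the paper's proof sketch. You identify $\synmon{\lang_k} \cong \cyclic{k}$, pick an odd prime $q \mid k$ so that $\cyclic{q} \le \cyclic{k}$ is a group divisor that is not a $2$-group, and contradict \cref{thm:fp-smm,thm:uint-ssm}. The only addition is that you verify $\synmon{\lang_k} \cong \cyclic{k}$ (a quotient of $\cyclic{k}$ by the universal property, faithful via the separating suffix $1^r$), which the paper simply asserts.
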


\begin{proof}[Proof sketch]
The syntactic monoid of the language is $\cyclic{k}$.
If $k$ is not a power of $2$, it has an odd prime divisor $q$, and $\cyclic{q} \le \cyclic{k}$; but by \cref{thm:fp-smm,thm:uint-ssm} every group divisor of the syntactic monoid of a recognized language is a finite $2$-group.
\end{proof}

The same impossibility for $\fp$ follows, by a different route, from the eigenvalue analysis of \citet[Theorem 2]{grazzi2025unlocking}.

\subsection{A lower bound}\label{sec:ssm-lower}
The upper bounds leave room for parity under unsigned integers, and an explicit construction realizes it with a nonnegative recurrence.

\begin{restatable}[Parity over unsigned integers]{example}{counterexample}\label{ex:parity-uint}
The parity language $\lang = \{ w \in \{0,1\}^* : w \text{ has an even number of } 1\text{s} \}$ is recognized by a nonnegative diagonal SSM over $\integ^p$.
\end{restatable}

\begin{proof}
The syntactic monoid of $\lang$ is $\cyclic2$, realizable by a single core by \cref{lem:wrap-cyclic}, so it remains to give an explicit SSM automaton.
Take one layer, $\nlayers = 1$, with $\hiddim = \seqdim = 1$ and $4$-bit unsigned integers, so $\states = \inset = \outset = \Z/16\Z$.
Encode $\enc(0) = 0$ and $\enc(1) = 1$, and set $A(0) = A(1) = 1$, $B(0) = 0$, $B(1) = 8$, so the recurrence is
$$
\state' = A(\insym)\,\state + B(\insym)\,\insym =
\begin{cases}
\state + 8 \pmod{16}, & \insym = 1, \\
\state, & \insym = 0.
\end{cases}
$$
The transition matrix is nonnegative.
Take initial state $\state_0 = 15$, so a prefix with an even number of $1$s yields state $15$, and one with an odd number yields state $7$.
Set $C(0) = C(1) = 1$, so the readout equals the state, and let the accepting core read the state alone, with accepting region $\{15\}$.
The empty word, having zero $1$s, leaves the state at $15$ and is accepted, in agreement with $\emptyword \in \lang$, and a word $w \in \alphabet^*$ is accepted exactly when its number of $1$s is even.
\end{proof}

Read against \citet[Theorem 2]{sarrof2024expressive}, which states that no nonnegative diagonal SSM in finite precision recognizes parity, \cref{ex:parity-uint} shows the claim to depend on the unstated assumption that the arithmetic is floating point: it holds over $\fp$ by \cref{thm:nonnegative-fp} and fails over $\integ^p$.

\paragraph{Finite against unbounded precision.}
An SSM, all of whose layers have a finite state space, recognizes only regular languages, since a finite wreath product of finite monoids is finite and every language recognized by a finite monoid is regular.
Unbounded precision changes the regime: an unbounded integer data type allows the core state to grow with the input, inducing an infinite, length-indexed monoid action and placing the model beyond the finite-automaton boundary, where non-regular behavior becomes available.

\end{document}

%% file: macros.tex
\renewcommand{\theequation}{\thesection.\arabic{equation}}

\crefname{section}{Section}{Sections}
\Crefname{section}{Section}{Sections} 

\crefname{appendix}{Appendix}{Appendices}
\Crefname{appendix}{Appendix}{Appendices}
\crefformat{appendix}{Appendix~#2#1#3}
\Crefformat{appendix}{Appendix~#2#1#3}
\crefrangeformat{appendix}{Appendices~#3#1#4\textendash#5#2#6}
\Crefrangeformat{appendix}{Appendices~#3#1#4\textendash#5#2#6}

\crefname{table}{Table}{Tables}
\Crefname{table}{Table}{Tables}
\crefname{figure}{Fig.}{Figs.}
\Crefname{figure}{Figure}{Figures}

\crefname{algorithm}{Algorithm}{Algorithms}
\Crefname{algorithm}{Algorithm}{Algorithms}

\crefname{equation}{Eq.}{Eqs.}
\Crefname{equation}{Equation}{Equations}
\creflabelformat{equation}{#2#1#3} 

\crefname{theorem}{Theorem}{Theorems}
\Crefname{theorem}{Theorem}{Theorems}
\crefname{lemma}{Lemma}{Lemmas}
\Crefname{lemma}{Lemma}{Lemmas}
\crefname{corollary}{Corollary}{Corollaries}
\Crefname{corollary}{Corollary}{Corollaries}
\crefname{proposition}{Proposition}{Propositions}
\Crefname{proposition}{Proposition}{Propositions}
\crefname{definition}{Definition}{Definitions}
\Crefname{definition}{Definition}{Definitions}
\crefname{claim}{Claim}{Claims}
\Crefname{claim}{Claim}{Claims}
\crefname{example}{Example}{Examples}
\Crefname{example}{Example}{Examples}
\crefname{fact}{Fact}{Facts}
\Crefname{fact}{Fact}{Facts}
\crefname{question}{Question}{Questions}
\Crefname{question}{Question}{Questions}
\crefname{conjecture}{Conjecture}{Conjectures}
\Crefname{conjecture}{Conjecture}{Conjectures}
\crefname{remark}{Remark}{Remarks}
\Crefname{remark}{Remark}{Remarks}

\newlist{requirements}{enumerate}{1}
\setlist[requirements]{
    label=\arabic*., 
    ref=\arabic*,
}

\theoremstyle{plain}
\newtheorem{theorem}{Theorem}[section]
\newtheorem{lemma}[theorem]{Lemma}
\newtheorem{proposition}[theorem]{Proposition}
\newtheorem{corollary}[theorem]{Corollary}
\newtheorem{claim}[theorem]{Claim}
\newtheorem{fact}[theorem]{Fact}
\newtheorem{definition}[theorem]{Definition}
\newtheorem{example}[theorem]{Example}
\newtheorem{question}[theorem]{Question}
\newtheorem{conjecture}[theorem]{Conjecture}
\newtheorem{condition}[theorem]{Condition}

\theoremstyle{remark}
\newtheorem{remark}[theorem]{Remark}

\newcommand{\R}{\mathbb{R}}
\newcommand{\C}{\mathbb{C}}
\newcommand{\Z}{\mathbb{Z}}
\newcommand{\F}{\mathbb{F}}
\newcommand{\Q}{\mathbb{Q}}
\newcommand{\N}{\mathbb{N}}

\makeatletter
\newcommand{\citeposs}[1]{%
  \citeauthor{#1}'s\ (\citeyear{#1})%
}
\makeatother

\newcommand{\defn}[1]{\textbf{#1}}
\newcommand{\defeq}{\mathrel{\stackrel{\textnormal{\tiny def}}{=}}}

\newcommand{\id}{\mathrm{Id}}
\newcommand{\sgn}{\mathrm{sign}}
\renewcommand{\ker}{\mathop{\mathrm{Ker}}}
\newcommand{\im}{\mathop{\mathrm{Im}}}
\newcommand{\ord}{\mathop{\text{ord}}}
\newcommand{\rank}{\mathop{\text{rank}}}
\newcommand{\Tr}{\mathop{\text{Tr}}}
\newcommand{\argmin}{\mathop{\text{argmin}}}
\newcommand{\argmax}{\mathop{\text{argmax}}}
\newcommand{\bigo}[1]{\mathcal{O}\left(#1\right)}

\newcommand{\norm}[1]{||#1||}
\newcommand{\innerprod}[1]{\langle{#1}\rangle}
\newcommand{\E}{\mathop{\mathbb{E}}}

\newcommand{\ug}{\leq}
\newcommand{\normal}{\trianglelefteq}
\newcommand{\teilt}{\big{|}}
\newcommand{\zykl}[1]{{<}{#1}{>}}
\newcommand{\ggt}{\mathop{\text{ggT}}}
\newcommand{\iso}{\simeq}
\newcommand{\isom}{\mathop{\text{Isom}}}
\newcommand{\aut}{\mathop{\text{Aut}}}
\newcommand{\stab}{\mathop{\text{Stab}}}
\newcommand{\operateson}{\curvearrowright}
\newcommand{\mat}{\mathrm{Mat}}
\newcommand{\trans}{\mathcal{T}}

\newcommand{\mA}{\mathbf{A}}
\newcommand{\mB}{\mathbf{B}}
\newcommand{\mC}{\mathbf{C}}
\newcommand{\mD}{\mathbf{D}}
\newcommand{\mlow}{\mathbf{M}^{\text{low}}}
\newcommand{\vu}{\boldsymbol{u}}
\newcommand{\vv}{\boldsymbol{v}}
\newcommand{\vx}{\boldsymbol{x}}

\newcommand{\rv}[1]{\underline{\underline{#1}}}
\newcommand{\noise}{\boldsymbol{w}}
\newcommand{\noiset}{\boldsymbol{\tilde{w}}}
\newcommand{\betahat}{\boldsymbol{\hat\beta}}
\newcommand{\xmax}{\tilde{x}}
\newcommand{\eps}{\varepsilon}
\newcommand{\Aff}{\mathop{\mathrm{Aff}}}
\newcommand{\I}{\text{I}}

\newcommand{\alphabet}{\Sigma}
\newcommand{\eosalphabet}{\overline{\alphabet}}
\newcommand{\kleene}[1]{{#1}^{\ast}}
\newcommand{\sym}{\sigma}
\newcommand{\str}{w}
\newcommand{\lang}{\mathcal{L}}
\newcommand{\emptyword}{\varepsilon}
\newcommand{\charfun}[1]{\chi_{#1}}

\newcommand{\monoid}{M}
\newcommand{\semigroup}{S}

\newcommand{\transmon}{T}

\newcommand{\wreathmon}{W}
\newcommand{\wreath}{\mathbb{W}}
\providecommand{\wrgen}[1]{\bigl\langle #1 \bigr\rangle_{\wr}}

\newcommand{\identity}[1]{\mathrm{id}_{#1}}

\newcommand{\submon}{\leq}

\newcommand{\divides}{\prec}

\newcommand{\synmon}[1]{M(#1)}
\newcommand{\syncong}[1]{\sim_{#1}}

\newcommand{\genmon}[1]{\bigl\langle #1 \bigr\rangle_{\mathrm{mon}}}
\newcommand{\gensemi}[1]{\bigl\langle #1 \bigr\rangle_{\mathrm{sg}}}

\newcommand{\wrclos}[1]{\langle #1 \rangle_{\wr}}

\newcommand{\pseudovariety}[1]{\mathbf{#1}}
\newcommand{\Aperiodic}{\pseudovariety{Ap}}
\newcommand{\Rtrivial}{\pseudovariety{R}}
\newcommand{\Definite}{\pseudovariety{Def}}
\newcommand{\LocallyR}{\pseudovariety{LR}}

\newcommand{\Rrel}{\mathcal{R}}
\newcommand{\Lrel}{\mathcal{L}}
\newcommand{\Hrel}{\mathcal{H}}
\newcommand{\Jrel}{\mathcal{J}}

\newcommand{\cyclic}[1]{\Z/{#1}\Z}

\newcommand{\transducer}{\mathcal{T}}

\newcommand{\extdelta}{\widehat{\delta}}
\newcommand{\extgamma}{\widehat{\gamma}}

\newcommand{\rnn}{\mathcal{R}}
\newcommand{\core}{\mathfrak{c}}
\newcommand{\acceptor}{\mathcal{A}}
\newcommand{\augrnn}{{\rnn^{+}}}
\newcommand{\acccore}{{\core_{\bullet}}}
\newcommand{\inicore}{{\core_{\circ}}}
\newcommand{\trivcore}{\core_{\mathrm{triv}}}

\newcommand{\cascadewith}[1]{\mathbin{\rhd_{#1}}}
\newcommand{\rnntype}[1]{\mathrm{type}(#1)}
\newcommand{\insettype}{\mathrm{In}}
\newcommand{\outsettype}{\mathrm{Out}}
\newcommand{\Algrnn}{\mathbf{AlgRNN}}

\newcommand{\inset}{X}
\newcommand{\insym}{x}
\newcommand{\inseq}{\mathbf{x}}
\newcommand{\outset}{Y}
\newcommand{\outsym}{y}
\newcommand{\outseq}{\mathbf{y}}
\newcommand{\states}{Z}
\newcommand{\state}{z}

\newcommand{\hidsym}{\state}
\newcommand{\hidset}{\states}

\newcommand{\recfn}{f}
\newcommand{\outfn}{g}

\newcommand{\nlayers}{N}
\newcommand{\layer}{n}

\newcommand{\coretuple}{(\states, \inset, \outset, \recfn, \outfn)}
\newcommand{\corelayertuple}{(\states_\layer, \inset_\layer, \outset_\layer, \recfn_\layer, \outfn_\layer)}

\newcommand{\wiremap}{\tau}

\newcommand{\enc}{e}
\newcommand{\dec}{d}
\newcommand{\din}{{\dseq_\textit{in}}}
\newcommand{\dout}{{\dseq_\textit{out}}}
\newcommand{\encletter}{e_0}
\newcommand{\encword}{\widetilde{e}}

\newcommand{\initstate}{\state^{\circ}}
\newcommand{\accreg}{F}

\newcommand{\extF}[1]{\widehat{F}_{#1}}
\newcommand{\extG}[1]{\widehat{G}_{#1}}

\newcommand{\projlayer}[1]{\pi_{#1}}

\newcommand{\Lang}[1]{\lang(#1)}

\newcommand{\Phimon}{\Phi}

\newcommand{\hiddim}{m}
\newcommand{\seqdim}{d}
\newcommand{\seqlen}{T}
\newcommand{\tstep}{t}
\newcommand{\seqfn}{N}

\newcommand{\arith}{\mathfrak{M}}
\newcommand{\domain}{\mathcal{D}}
\newcommand{\ops}{\mathcal{O}}
\newcommand{\round}{\square}
\newcommand{\schedule}{\sigma}
\newcommand{\wrap}{\tikz{\node[circle,inner sep=0,outer sep=0,draw]{$\phantom{o}$}}}

\newcommand{\fp}{\textsc{fp}}
\newcommand{\integ}{\textsc{int}}

\newcommand{\tree}{\mathcal{T}}
\newcommand{\Trees}{\mathbf{Tree}}

\newcommand{\aperiodic}{\Aperiodic}

\newcommand{\countl}{\textsc{ModCount}}

\newcommand{\rmsnorm}{\mathrm{norm}}
\newcommand{\rms}{\mathrm{RMS}}
\newcommand{\relu}{\mathrm{ReLU}}
\newcommand{\diag}{\mathrm{Diag}}

\newcommand{\realrnn}{\R\text{-}\mathbf{RNN}}
\newcommand{\finrnn}{\mathbf{Fin}\text{-}\mathbf{RNN}}
\newcommand{\functor}{\mathcal{F}}
\newcommand{\rnnsemigroup}{(\monoid_\rnn,\states_\rnn)}
\newcommand{\layerwreath}{(\wreathmon_\rnn,\states_\rnn)}
\newcommand{\FlipFlop}{\textsc{FF}}

%% file: figures/structural_perspective.tex
\begin{figure}[h!]
    \centering
    \resizebox{0.5\textwidth}{!}{
    \begin{tikzpicture}[
        font=\small,
        node distance=1.5cm and 3cm, 
        >=Stealth, 
        thick,
        idealbox/.style={
            rectangle, 
            draw=blue!30!white, 
            fill=blue!5!white, 
            rounded corners, 
            text width=5.2cm, 
            align=center, 
            inner sep=10pt, 
            drop shadow
        },
        realbox/.style={
            rectangle, 
            draw=red!30!white, 
            fill=red!5!white, 
            rounded corners, 
            text width=5.2cm, 
            align=center, 
            inner sep=10pt, 
            drop shadow
        },
        corebox/.style={
            rectangle, 
            draw=black!60, 
            fill=yellow!10!white, 
            rounded corners=2pt, 
            double, 
            text width=5.2cm, 
            align=center, 
            inner sep=10pt, 
            drop shadow
        },
        plaintext/.style={
            text width=5.2cm, 
            align=center, 
            font=\itshape, 
            color=black!70
        },
        flowbox/.style={
            rectangle,
            draw=black!30,
            fill=black!2,
            rounded corners,
            text width=5.2cm,
            align=center,
            inner sep=10pt,
            drop shadow
        },
        bluearrow/.style={->, double, draw=blue!70!black, double distance=1pt},
        redarrow/.style={->, ultra thick, draw=red!70!black}
    ]


    \node[idealbox] (IDEAL) {
        \textbf{\textcolor{blue!70!black}{The Platonic Ideal}}\\
        $\realrnn$\\
        \footnotesize (Continuous Arithmetic)
    };

    \node[flowbox, right=of IDEAL] (CONTINUOUS) {%
  Continuous Trajectories on $\R^\hiddim$\\\phantom{A}\\
  \tikz{
    \draw[black!50, line cap=round, smooth, samples=120]
      plot[domain=0:3.6] (\x,{1.2mm*sin(2.4*\x r)});
  }
};

    \node[realbox, below=of IDEAL] (REAL) {
        \textbf{\textcolor{red!70!black}{The Digital Reality}}\\
         $\finrnn$\\
        \footnotesize (Discrete States)
    };

    \node[corebox, right=of REAL] (CORE) {
        \textbf{The Algebraic Form}\\
        \footnotesize Behavior goverened by \\
        \large $\rnnsemigroup \leq \layerwreath$
    };


    \draw[->, blue!70!black] (IDEAL) -- node[above, font=\small, align=center] {Theoretical\\Dynamics} (CONTINUOUS);

    \draw[bluearrow] (IDEAL) -- node[right, text width=3cm, align=left, font=\small, color=blue!70!black] {$\functor$ (Discretization)} (REAL);

    \draw[->, dashed, color=black!50]
    (CONTINUOUS.south) -- node[right, text width=3.5cm, align=left, font=\footnotesize]
    {Structural Collapse\\(Calculus fails)}
    (CONTINUOUS.south |- CORE.north);

    \draw[redarrow] (REAL) -- node[above, font=\small, color=red!70!black, align=center] {Algebraic\\Realization} (CORE);
    \end{tikzpicture}
    }
    \caption{A structural perspective: from continuous idealizations (top) to algebraic structures (bottom).\looseness=-1}
    \label{fig:categorical}
\end{figure}